\author{Michael Lampis\thanks{The author was supported by ANR JCJC projects ``S-EX-AP-PE-AL'' (ANR-21-CE48-0022) and ``ASSK'' (ANR-18-CE40-0025-01).}}
\title{Minimum Stable Cut and Treewidth}
\affiliation{ Université Paris-Dauphine, PSL University, CNRS, LAMSADE, Paris, France }
\keywords{Treewidth, Local Max-Cut, Nash Stability}
\newcommand{\msc}{\textsc{Min Stable Cut}}
\newcommand{\umsc}{\textsc{Unweighted Min Stable Cut}}
\newcommand{\tw}{\mathrm{tw}}
\newcommand{\pw}{\mathrm{pw}}
\newtheorem{theorem}{Theorem}
\newtheorem{remark}[theorem]{Remark}
\newtheorem{claim}[theorem]{Claim}
\newtheorem{lemma}[theorem]{Lemma}
\newtheorem{corollary}[theorem]{Corollary}
\begin{document}
\maketitle
\begin{abstract}

A stable or locally-optimal cut of a graph is a cut whose weight cannot be
increased by changing the side of a single vertex.  Equivalently, a cut is
stable if all vertices have the (weighted) majority of their neighbors on the
other side.  Finding a stable cut is a prototypical PLS-complete problem that
has been studied in the context of local search and of algorithmic game theory.

In this paper we study \msc, the problem of finding a stable cut of minimum
weight, which is closely related to the Price of Anarchy of the \textsc{Max
Cut} game.  Since this problem is NP-hard, we study its complexity on graphs of
low treewidth, low degree, or both. We begin by showing that the problem
remains weakly NP-hard on severely restricted trees, so bounding treewidth
alone cannot make it tractable. We match this hardness with a pseudo-polynomial
DP algorithm solving the problem in time $(\Delta\cdot W)^{O(\tw)}n^{O(1)}$,
where $\tw$ is the treewidth, $\Delta$ the maximum degree, and $W$ the maximum
weight.  On the other hand, bounding $\Delta$ is also not enough, as the
problem is NP-hard for unweighted graphs of bounded degree.  We therefore
parameterize \msc\ by both $\tw$ and $\Delta$ and obtain an FPT algorithm
running in time $2^{O(\Delta\tw)}(n+\log W)^{O(1)}$.  Our main result for the
weighted problem is to provide a reduction showing that both aforementioned
algorithms are essentially optimal, even if we replace treewidth by pathwidth:
if there exists an algorithm running in $(nW)^{o(\pw)}$ or
$2^{o(\Delta\pw)}(n+\log W)^{O(1)}$, then the ETH is false. Complementing this,
we show that we can, however, obtain an FPT \emph{approximation scheme}
parameterized by treewidth, if we consider almost-stable solutions, that is,
solutions where no single vertex can unilaterally increase the weight of its
incident cut edges by more than a factor of $(1+\varepsilon)$.

Motivated by these mostly negative results, we consider \umsc. Here our results
already imply a much faster exact algorithm running in time
$\Delta^{O(\tw)}n^{O(1)}$.  We show that this is also probably essentially
optimal: an algorithm running in $n^{o(\pw)}$ would contradict the ETH. 
\end{abstract}

\section{Introduction}

In this paper we study problems related to \emph{stable cuts} in graphs. A
\emph{stable} cut of an edge-weighted graph $G=(V,E)$ is a partition of $V$
into two sets $V_0, V_1$ that satisfies the following property: for each
$i\in\{0,1\}$ and $v\in V_i$, the total weight of edges incident on $v$ whose
other endpoint is in $V_{1-i}$ is at least half the total weight of all edges
incident on $v$. In other words, a cut is stable if all vertices have the
(weighted) majority of their incident edges cut.

The notion of stable cuts has been very widely studied from two different
points of view.  First, in the context of local search, a stable cut is a
locally optimal cut: switching the side of any single vertex cannot increase
the total weight of the cut. Hence, stable cuts have been studied with the aim
to further our understanding of the basic local search heuristic for
\textsc{Max Cut}.  Second, in the context of algorithmic game theory a
\textsc{Max Cut} game has often been considered, where each vertex is an agent
whose utility is the total weight of edges connecting it to the other side. In
this game, a stable cut corresponds exactly to the notion of a Nash
equilibrium, that is, a state where no agent has an incentive to change her
choice. The complexity of producing a Nash stable or locally optimal cut of a
given edge-weighted graph has been heavily studied under the name \textsc{Local
Max Cut}. The problem is known to be PLS-complete, under various restrictions
(we give detailed references below). 

In this paper we focus on a different but closely related optimization problem:
given an edge-weighted graph we would like to produce a stable cut \emph{of
minimum total weight}. We call this problem \msc. In addition to being a fairly
natural problem on its own, we believe that \msc\ is interesting from the
perspective of both local search and algorithmic game theory.  In the context
of local search, \msc\ is the problem of bounding the
performance of the local search heuristic on a particular instance. It is
folklore (and easy to see) that in general there exist graphs where the
smallest stable cut has size half the maximum cut (e.g.  consider a $C_4$)
and this is tight since any stable cut must cut at least half the total edge
weight.  However, for most graphs this bound is far from tight. \msc\ therefore
essentially asks to estimate the ratio between the largest and
smallest stable cut for a given specific instance.  Similarly, in the context
of algorithmic game theory, solving \msc\ is essentially equivalent to
calculating the Price of Anarchy of the \textsc{Max Cut} game on the given
instance, that is, the ratio between the smallest stable cut and the maximum
cut. Since we will mostly focus on cases where \textsc{Max Cut}
is tractable, \msc\ can, therefore, be seen as the problem of computing either
the approximation ratio of local search or the Price of Anarchy of the
\textsc{Max Cut} game on a given graph.

\subparagraph*{Our results} It appears that little is currently known about the
complexity of \msc. However, since finding a (not necessarily minimum) stable
cut is PLS-complete, finding the minimum such cut would be expected to be hard.
Our focus is therefore to study the parameterized complexity of \msc\ using
structural parameters such as treewidth and the maximum degree of the input
graph\footnote{We assume familiarity with the basics of parameterized
complexity as given in standard textbooks \cite{CyganFKLMPPS15}.}. Our results
are the following.

\begin{itemize}

\item First, we show that bounding only one of the two mentioned parameters is
not sufficient to render the problem tractable. This is not surprising for the
maximum degree $\Delta$, where a reduction from \textsc{Max Cut} allows us to
show the problem is NP-hard for $\Delta\le 6$ even in the unweighted case
(Theorem \ref{thm:npharddegree}). It is, however, somewhat more disappointing
that bounded treewidth also does not help, as the problem remains weakly
NP-hard on trees of diameter $4$ (Theorem \ref{thm:nphtrees}) and bipartite
graphs of vertex cover $2$ (Theorem \ref{thm:nphardvc}).

\item These hardness results point to two directions for obtaining algorithms
for \msc: first, since the problem is ``only'' weakly NP-hard for bounded
treewidth one could hope to obtain a pseudo-polynomial time algorithm in this
case.  We show that this is indeed possible and the problem is solvable in time
$(\Delta\cdot W)^{O(\tw)}n^{O(1)}$, where $W$ is the maximum edge weight
(Theorem \ref{thm:algpseudo}). Second, one may hope to obtain an FPT algorithm
when both $\tw$ and $\Delta$ are parameters. We show that this is also possible
and obtain an algorithm with complexity $2^{O(\Delta\tw)}(n+\log W)^{O(1)}$
(Theorem \ref{thm:algdelta}).

\item These two algorithms lead to two further questions. First, can the
$(\Delta\cdot W)^{O(\tw)}n^{O(1)}$ algorithm be improved to an FPT dependence
on $\tw$, that is, to running time $f(\tw)(nW)^{O(1)}$? And second, can the
$2^{\Delta\tw}$ parameter dependence of the FPT algorithm be improved, for
example to $2^{O(\Delta+\tw)}$ or even $\Delta^{O(\tw)}$? We show that the
answer to both questions is negative, even if we replace treewidth with
pathwidth: under the ETH there is no algorithm running in $(nW)^{o(\pw)}$ or
$2^{o(\Delta\tw)}(n+\log W)^{O(1)}$ (Theorem \ref{thm:eth1}).

\item Complementing the above, we show that the problem does become FPT by
treewidth alone if we allow the notion of approximation to be used in the
concept of stability: there exists an algorithm which, for any $\varepsilon>0$,
runs in time $(\tw/\varepsilon)^{O(\tw)}(n+\log W)^{O(1)}$ and produces a cut
with the following properties: all vertices are $(1+\varepsilon)$-stable, that is,
no vertex can unilaterally increase its incident cut weight by more than a
factor of $(1+\varepsilon)$; the cut has weight at most equal to that of the
minimum stable cut.

\item Finally, motivated by the above mostly negative results, we also consider
\umsc, the restriction of the problem where all edge weights are uniform.  Our
previous results give a much faster algorithm with parameter dependence
$\Delta^{O(\tw)}$, rather than $2^{\Delta\tw}$ (Corollary \ref{cor:algdelta2}).
However, this poses the natural question if in this case the problem finally
becomes FPT by treewidth alone.  Our main result in this part is to answer this
question in the negative and show that, under the ETH, \umsc\ cannot be solved
in time $n^{o(\pw)}$ (Theorem \ref{thm:hard2}).

\end{itemize}

Taken together, our results paint a detailed picture of the complexity of \msc\
parameterized by $\tw$ and $\Delta$. All our exact algorithms (Theorems
\ref{thm:algpseudo}, \ref{thm:algdelta}) are obtained using standard dynamic
programming on tree decompositions, the only minor complication being that for
Theorem \ref{thm:algdelta} we edit the decomposition to make sure that for each
vertex some bag contains all of its neighborhood (this helps us verify that a
cut is stable). The main technical challenge is in proving our complexity lower
bounds. It is therefore perhaps somewhat surprising that the lower bounds turn
out to be essentially tight, as this indicates that for \msc\ and \umsc, the
straightforward DP algorithms are essentially optimal, if one wants to solve
the problem exactly. 

For the approximation algorithm, we rely on two rounding techniques: one is a
rounding step similar to the one that gives an FPTAS for \textsc{Knapsack} by
truncating weights so that the maximum weight is polynomially bounded.
However, \msc\ is more complicated than \textsc{Knapsack}, as an edge which is
light for one of its endpoints may be heavy for the other. We therefore define
a more general version of the problem, allowing us to decouple the contribution
each edge makes to the stability of each endpoint.  This helps us bound the
largest stability-weight by a polynomial, but is still not sufficient to obtain
an FPT algorithm, as the lower bound of Theorem \ref{thm:eth1} applies to
polynomially bounded weights.  We then go on to apply a technique introduced in
\cite{Lampis14} (see also
\cite{AngelBEL18,BelmonteLM20,KatsikarelisLP19,KatsikarelisLP22,KulikS26,LampisMNOV025}) which allows
us to obtain FPT approximation algorithms for problems which are W-hard by
treewidth by applying a different notion of rounding to the dynamic program.
This allows us to produce a solution that is simultaneously of optimal weight
(compared to the best stable solution) and almost-stable, using essentially the
same algorithm as in Theorem \ref{thm:algpseudo}. However, it is worth noting
that in general there is no obvious way to transform almost-stable solutions to
stable solutions \cite{BhalgatCK10,CaragiannisFGS15}, so our algorithm is not
immediately sufficient to obtain an FPT approximation for \msc\ if we insist on
obtaining a cut which is exactly stable.

\subparagraph*{Related work} From the point of view of local search algorithms,
there is an extensive literature on the \textsc{Local Max Cut} problem, which
asks us to find a stable cut (of any size). The problem has long been known to
be PLS-complete \cite{JohnsonPY88, SchafferY91}. It remains PLS-complete for
graphs of maximum degree $5$ \cite{ElsasserT11}, but becomes polynomial-time
solvable for graphs of maximum degree $3$ \cite{Loebl91,Poljak95}. The problem
remains PLS-complete if weights are assigned to vertices, instead of edges, and
the weight of an edge is defined simply as the product of the weights of its
endpoints \cite{FotakisKLMPS20}. Even though the problem is PLS-complete, it
has long been observed that local search quickly finds a stable solution in
most practical instances. One theoretical explanation for this phenomenon was
given in a recent line of work which showed that \textsc{Local Max Cut} has
quasi-polynomial time smoothed complexity
\cite{AngelBPW17,BibakCC21,ChenGVYZ20,EtscheidR17}. \textsc{Local Max Cut} is
of course polynomial time solvable if all weights are polynomially bounded in
$n$, as local improvements always increase the size of the cut.

In algorithmic game theory much work has been done on the complexity of
computing Nash equilibria for the cut game and the closely related \emph{party
affiliation game}, in which players, represented by vertices, have to pick one
of two parties and edge weights indicate how much two players gain if they are
in the same party
\cite{AwerbuchAEMS08,BalcanBM09,ChristodoulouMS12,FabrikantPT04,GourvesM09}.
Note that for general graphical games finding an equilibrium is PPAD-hard
on trees of constant pathwidth \cite{ElkindGG06}.  Because computing a
stable solution is generally intractable, approximate equilibria
have also been considered \cite{BhalgatCK10,CaragiannisFGS15}. Note that the
notion of approximate equilibrium corresponds exactly to the 
approximation guarantee given by Theorem \ref{thm:algapprox}, but unlike the
cited works, Theorem \ref{thm:algapprox} produces a solution that is both
approximately stable and as good as the optimal. Another problem motivated by
algorithmic game theory that is close to the problem of this paper is the study
of \emph{hedonic games}, where vertices are agents who form coalitions, and
edge weights express how much a player gains by being in the same coalition as
one of her neighbors (note that in such games, the number of coalitions is not
a priori restricted to two).  Finding a Nash stable configuration in such a
game is hard, and the complexity of this problem parameterized by treewidth and
$\Delta$ has recently been settled \cite{Peters16a,HanakaKL26,HanakaL22}

The problem we consider in this paper is more closely related to the problem of
computing the \emph{worst} (or best) Nash equilibrium, which in turn is closely
linked to the notion of Price of Anarchy. For most problems in
algorithmic game theory this type of question is usually NP-hard
\cite{BiloM21,ConitzerS08,ElkindGG07,FotakisKKMS09,gilboa1989nash,GrecoS09,SchoenebeckV12}
and hard to approximate
\cite{AustrinBC13,BravermanKW15,DeligkasFS18,HazanK11,MinderV09}. Even though
these results show that finding a Nash equilibrium that maximizes an objective
function is NP-hard under various restrictions (e.g. graphical games of bounded
degree), to the best of our knowledge the complexity of finding the worst
equilibrium of the \textsc{Max Cut} game (which corresponds to the \msc\
problem of this paper) has not been considered.

Finally, another topic that has recently attracted attention in the literature
is that of MinMax and MaxMin versions of standard optimization problems, where
we search the worst solution which cannot be improved using a simple local
search heuristic. The motivation behind this line of research is to provide
bounds and a refined analysis of such basic heuristics. Problems that have been
considered under this lens are \textsc{Max Min Dominating Set}
\cite{Bazgan2018,DubloisLP22}, \textsc{Max Min Vertex Cover}
\cite{Bonnet2018,Zehavi2017}, \textsc{Max Min Separator} \cite{Hanaka2019},
\textsc{Max Min Cut} \cite{EtoHKK2019}, \textsc{Min Max Knapsack}
\cite{ArkinBMS03,Furini2017,GourvesMP13}, \textsc{Max Min Edge Cover}
\cite{KhoshkhahGMS20}, \textsc{Max Min Feedback Vertex Set} \cite{DubloisHGLM22}. Some
problems in this area also arise naturally in other forms and have been
extensively studied, such as \textsc{Min Max Matching} (also known as
\textsc{Edge Dominating Set} \cite{IwaideN16}) and \textsc{Grundy Coloring},
which can be seen as a \textsc{Max Min} version of \textsc{Coloring}
\cite{AboulkerBKS23,BelmonteKLMO22}.

\section{Definitions -- Preliminaries}

We generally use standard graph-theoretic notation and consider edge-weighted
graphs, that is, graphs $G=(V,E)$ supplied with a weight function $w:E\to
\mathbb{N}$.  For a vertex $v\in V$, The weighted degree of a vertex $v\in V$
is $d_w(v)=\sum_{uv\in E} w(uv)$. A cut of a graph is a partition of $V$ into
$V_0, V_1$. A cut is \emph{stable} for vertex $v\in V_i$ if $\sum_{vu\in E\land
u\in V_{1-i}} w(vu) \ge \frac{d_w(v)}{2}$, that is, if the total weight of
edges incident on $v$ crossing the cut is at least half the weighted degree of
$v$. In the \msc\ problem we are given an edge-weighted graph and are looking
for a cut that is stable for all vertices that minimizes the sum of weights of
cut edges (that is, edges with endpoints on both sides of the cut). In \umsc\
we restrict the problem so that the $w$ function returns $1$ for all edges.
When describing stable cuts we will sometimes say that we ``assign'' value $0$
(or $1$) to a vertex; by this we mean that we place this vertex in $V_0$ (or
$V_1$ respectively).

For the definitions of treewidth, pathwidth, and the related (nice)
decompositions we refer to \cite{CyganFKLMPPS15}. We will use as a complexity
assumption the Exponential Time Hypothesis (ETH) \cite{ImpagliazzoPZ01} which
states that there exists a constant $c>1$ such that \textsc{3-SAT} with $n$
variables and $m$ clauses cannot be solved in time $c^{n+m}$. In fact, we will
use the slightly weaker and simpler form of the ETH which states that
\textsc{3-SAT} cannot be solved in time $2^{o(n+m)}$.

\section{Weighted Min Stable Cut}

In this section we present our results on exact algorithms for (weighted) \msc.
We begin with some basic NP-hardness reductions in Section \ref{sec:nphard},
which establish that the problem remains (weakly) NP-hard when either the
treewidth or the maximum degree  are bounded. These set the stage for two
algorithms, given in Section \ref{sec:algs}, solving the problem in
pseudo-polynomial time for constant treewidth; and in FPT time parameterized by
$\tw+\Delta$. In Section \ref{sec:hardeth} we present a more fine-grained
hardness argument, based on the ETH, which shows that the dependence on $\tw$
and $\Delta$ of our two algorithms is essentially optimal.

\subsection{Basic Hardness Proofs}\label{sec:nphard}

\begin{theorem}\label{thm:nphtrees} \msc\ is weakly NP-hard on trees of
diameter $4$.  \end{theorem}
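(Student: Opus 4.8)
The plan is to give a polynomial-time (indeed linear-time) reduction from \textsc{Partition}, which is weakly NP-hard. Given positive integers $a_1,\dots,a_n$ with $\sum_{i=1}^n a_i = 2S$, I build a tree $T$ as follows: take a root $r$; for each $i\in[n]$ add a \emph{branch vertex} $c_i$ joined to $r$ by an edge of weight $a_i$; and for each $i$ add a leaf $\ell_i$ joined to $c_i$ by an edge of weight $a_i$. Since every $c_i$ is at distance $2$ from every $\ell_j$ ($j\neq i$), the tree $T$ has diameter exactly $4$ as soon as $n\ge 2$ (and \textsc{Partition} is trivial for $n\le 1$, so we may assume $n\ge 2$).

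First I would record what stability forces on $T$. Each leaf $\ell_i$ has weighted degree $a_i$ and a single incident edge, so that edge must be cut in every stable cut; hence all $n$ leaf edges are cut, contributing exactly $2S$ to the weight of any stable cut, and $\ell_i$ always lies on the opposite side from $c_i$. Next consider a branch vertex $c_i$: its weighted degree is $2a_i$, and the edge $c_i\ell_i$ of weight $a_i$ is always cut, so $c_i$ is stable no matter which side $r$ is placed on. Consequently, up to the global symmetry of swapping sides, a stable cut of $T$ is completely described by the set $A=\{\,i : rc_i \text{ is cut}\,\}$ (with $c_i$ on the opposite side of $r$ for $i\in A$, and the same side for $i\notin A$, and $\ell_i$ always opposite $c_i$); such a configuration is stable if and only if $r$ is stable, i.e.\ $\sum_{i\in A} a_i \ge d_w(r)/2 = S$.

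The weight of the stable cut associated to a valid set $A$ is $2S + \sum_{i\in A} a_i$. Therefore the minimum weight of a stable cut of $T$ equals $2S + \min\{\, \sum_{i\in A} a_i : \sum_{i\in A} a_i \ge S \,\}$, which is exactly $3S$ when the \textsc{Partition} instance is a yes-instance and strictly larger than $3S$ otherwise. Hence $T$ admits a stable cut of weight at most $3S$ if and only if $a_1,\dots,a_n$ can be partitioned into two sets of equal sum, and since $T$ is a tree of diameter $4$ constructed in linear time, \msc\ is weakly NP-hard on such trees. The only points requiring care in the write-up are verifying that the branch vertices $c_i$ impose no constraint (so that the majority condition at $r$ is the unique obstruction) and that both directions of the equivalence hold with the $\ge$ (non-strict) form of stability; neither is a deep obstacle, but they are what make the bound $3S$ tight.
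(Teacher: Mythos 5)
Your proposal is correct and uses essentially the same construction and argument as the paper: a star subdivided once with both edges on the path to leaf $i$ weighted $x_i$, the observation that all leaf edges are forced to be cut (making the subdivision vertices automatically stable), and the conclusion that the only remaining constraint is balancing the center, so a stable cut of weight $3B$ exists iff the \textsc{Partition} instance is a yes-instance. Your explicit characterization of all stable cuts via the set $A$ is a slightly more systematic write-up of the paper's two-direction argument, but it is not a different proof.
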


\begin{proof} We describe a reduction from \textsc{Partition}. Recall that in
this problem we are given $n$ positive integers $x_1,\ldots, x_n$ such that
$\sum_{i=1}^n x_i = 2B$ and are asked if there exists $S\subseteq [n]$ such
that $\sum_{i\in S} x_i = B$.  We construct a star with $n$ leaves and
subdivide every edge once. For each $i\in [n]$ we select a distinct leaf of the
tree and set the weight of both edges in the path from the center to this leaf
to $x_i$. We claim that the graph has a stable cut of weight $3B$ if and only
if there is a partition of $x_1,\ldots, x_n$ into two sets with the same sum.

For the first direction, suppose $S\subseteq [n]$ is such that $\sum_{i\in S}
x_i = B$. For each $i\in S$ we select a degree two vertex of the tree whose
incident edges have weight $x_i$ and assign it value $1$. We assign all other
degree two vertices value $0$ and assign to all leaves the opposite of the
value of their neighbor. We give the center value $0$. This partition is stable
as the center has edge weight exactly $B$ towards each side, and all degree two
vertices have a leaf attached that is placed on the other side and contributes
half their total incident weight. The total weight cut is $2B$ from edges
incident on leaves, plus $B$ from half the weight incident on the center.

For the converse direction, observe that in any stable solution all edges
incident on leaves are cut, contributing a weight of $2B$. As a result, in a
stable cut of size $3B$, the weight of cut edges incident on the center is at
most $B$. However, this weight is also at least $B$, since the edge weight
incident on the center is $2B$. We conclude that the neighborhood of the center
must be perfectly balanced. From this we can infer a solution to the
\textsc{Partition} instance.  \end{proof}

\begin{remark}\label{rem:rem} Theorem \ref{thm:nphtrees} is tight, because \msc\ is trivial on
trees of diameter at most $3$. \end{remark}

\begin{proof}
A tree of diameter at most $3$ must be either a star, in which case there is
only one feasible solution (up to symmetry); or a double-star, that is a graph
produced by taking two stars and connecting their centers. In the latter case,
the optimal solution is always to place the two centers on the same side if
this is feasible (as otherwise all edges are cut).  \end{proof}

\begin{theorem}\label{thm:nphardvc} \msc\ is weakly NP-hard on bipartite graphs
with vertex cover $2$.  \end{theorem}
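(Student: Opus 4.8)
The plan is to reduce from \textsc{Partition}, in the spirit of the proof of Theorem~\ref{thm:nphtrees} but with the subdivided star replaced by a two‑hub ``generalized theta'' gadget. Given positive integers $x_1,\dots,x_n$ with $\sum_{i=1}^n x_i = 2B$, I would construct a graph $G$ with two special non‑adjacent vertices $a,b$ and, for every $i\in[n]$, a vertex $v_i$ adjacent to both $a$ and $b$ via edges of weight $w(av_i)=w(bv_i)=x_i$. Then $G$ is bipartite with sides $\{a,b\}$ and $\{v_1,\dots,v_n\}$, the set $\{a,b\}$ is a vertex cover of size $2$, and only the edge weights are large, so this is a genuinely ``weak'' hardness reduction. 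The goal is to show that $G$ has a stable cut of weight at most $2B$ if and only if the \textsc{Partition} instance is a yes‑instance.

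The analysis hinges on the fact that each $v_i$ has exactly two neighbours, joined to it by edges of weight $x_i$ each. Hence if a stable cut places $a$ and $b$ on the same side, every $v_i$ must be placed on the opposite side (placing it with $a,b$ would cut none of its incident weight, violating stability since $x_i>0$), so the cut has weight $\sum_i 2x_i=4B$; moreover this placement is automatically stable for $a$ and $b$, since each of them then has all of its incident weight cut. On the other hand, if $a$ and $b$ lie on opposite sides, say $a\in V_0$ and $b\in V_1$, then every $v_i$ is stable wherever it is placed and contributes exactly $x_i$ to the cut, for a total of $2B$. Setting $S=\{i : v_i\in V_1\}$, the cut weight at $a$ equals $\sum_{i\in S}x_i$ and the cut weight at $b$ equals $\sum_{i\notin S}x_i$; stability of the two hubs requires both sums to be at least $B$ (half of $d_w(a)=d_w(b)=2B$), and since they add up to $2B$ this forces $\sum_{i\in S}x_i=B$. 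Conversely, any $S\subseteq[n]$ with $\sum_{i\in S}x_i=B$ yields, via this placement, a stable cut of weight exactly $2B$.

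Combining the two cases, every stable cut of $G$ has weight exactly $2B$ or exactly $4B$, and one of weight $2B$ exists precisely when \textsc{Partition} is a yes‑instance; since $B>0$ this is equivalent to the minimum stable cut having weight at most $2B$. I do not expect a serious obstacle here: the severe restriction (vertex cover $2$) actually shortens the case analysis, since every non‑hub vertex has degree $2$ and its behaviour is forced by the relative positions of $a$ and $b$. The only point that needs care is to confirm that no stable cut of intermediate weight can occur, which is exactly what the ``$a,b$ on the same side vs.\ $a,b$ separated'' dichotomy rules out.
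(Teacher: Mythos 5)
Your construction is exactly the paper's: a weighted $K_{2,n}$ with both edges at the $i$-th degree-$2$ vertex weighted $x_i$, reduced from \textsc{Partition} with threshold $2B$, and the same ``hubs together vs.\ hubs apart'' dichotomy drives the correctness argument. The only (harmless) difference is that you spell out a couple of stability checks that the paper leaves implicit, so this is the same proof.
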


\begin{proof}
We present a reduction from \textsc{Partition} similar to that of
Theorem~\ref{thm:nphtrees}. Given an instance with values $x_1,\ldots, x_n$ we
construct a bipartite graph $K_{2,n}$. To ease presentation, we will call the
part of $K_{2,n}$ that contains two vertices the ``left'' part, and the part
that contains the remaining $n$ vertices the ``right'' part. For each $i\in
[n]$ we select a vertex of the right part and set the weight of both its
incident edges to $x_i$. We claim that this graph has a stable cut of weight
$2B$ if and only if the original instance is a Yes instance.

If there is a partition $S\subseteq [n]$ such that $\sum_{i\in S} x_i = B$, we
select the corresponding vertices of the right part and assign to them $0$; we
assign $1$ to the other vertices of the right part; we assign $0$ to one vertex
of the left part and $1$ to the other. This partition is stable, as all
vertices have completely balanced neighborhoods. Furthermore, the weight of the
cut is $2B$.

For the other direction, observe that if both vertices of the left part of
$K_{2,n}$ are on the same side of the partition, then all edges will be cut,
giving weight $4B$. So a stable partition of weight $2B$ must place these two
vertices on different sides. However, these vertices have the same neighbors
(with the same edge weights), so if both are stable, their neighborhood must be
properly balanced. From this we can infer a solution to the \textsc{Partition}
instance.  \end{proof}

\begin{theorem}\label{thm:npharddegree} \umsc\ is strongly NP-hard and APX-hard
on bipartite graphs of maximum degree $6$. \end{theorem}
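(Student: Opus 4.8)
The plan is to reduce from \textsc{Max Cut} (the decision version), which is NP-hard and APX-hard even on graphs of bounded degree, say maximum degree $3$. Given a graph $H$ of maximum degree $3$ with $m$ edges, the idea is to build an unweighted graph $G$ in which, roughly speaking, each original vertex of $H$ is forced by stability to behave like a ``choice'' vertex, and each stable cut of $G$ corresponds to a cut of $H$, with the weight (number of cut edges) of the stable cut of $G$ being an affine function of the number of edges of $H$ that are cut. Maximizing the cut of $H$ then corresponds to minimizing (or maximizing, after negation) the stable cut of $G$; since all reductions are linear in $m$, APX-hardness transfers.

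First I would handle the main obstacle, which is \emph{enforcing stability without weights}. In an unweighted graph a vertex $v$ of degree $d$ is stable iff at least $\lceil d/2 \rceil$ of its neighbours lie on the opposite side. The difficulty is that an edge $uv$ of $H$ that is \emph{not} cut still must not destabilize $u$ or $v$; the standard trick is to pad: attach to each original vertex $v$ enough pendant or gadget vertices so that, no matter how the edges of $H$ incident to $v$ are cut, $v$ can still be made stable, and symmetrically the padding vertices are always stable. Concretely, I would give each original vertex $v$ a small gadget (for instance, a copy of a bipartite ``balancer'' such as a $K_{2,k}$-type or a short cycle attached appropriately) whose vertices have constant degree (to keep $\Delta \le 6$) and which forces $v$'s gadget-neighbours to split evenly, so that the only free choices in a stable cut are the $\{0,1\}$ labels of the original vertices. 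One must check two things for each such gadget: (i) for every labeling of the original vertices there is an extension of the gadget that is stable, so feasibility is never an obstruction; and (ii) the number of cut edges \emph{inside} the gadgets is a fixed constant independent of the labeling, so that $w(\text{stable cut of }G) = (\text{const}) + |\text{cut of }H|$ or $(\text{const}) - |\text{cut of }H|$.

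The key steps, in order, would be: (1) fix the source problem as \textsc{Max Cut} on subcubic graphs, recalling its strong NP-hardness and APX-hardness; (2) describe the vertex gadget attached to each $v\in V(H)$, verifying the degree bound $\Delta(G)\le 6$ (each original vertex has up to $3$ original edges plus at most $3$ gadget edges) and bipartiteness of $G$ (choose gadgets that are bipartite and respect a global $2$-colouring, e.g.\ by subdividing original edges once if parity forces it, being careful that subdivision vertices are themselves always stable); (3) prove the forward direction: a cut $(A,B)$ of $H$ cutting $k$ edges yields a stable cut of $G$ of weight exactly $f(m) - k$ (or $f(m)+k$) by extending the labeling through every gadget in the prescribed stable way; (4) prove the converse: any stable cut of $G$, when restricted to $V(H)$, is a cut of $H$, and its weight is $f(m)$ minus the number of $H$-edges it cuts — here one argues that the gadgets are always ``locally balanced'' in any stable cut so contribute a fixed amount; (5) conclude NP-hardness, and then APX-hardness by noting the gap-preserving nature: a $(1+\varepsilon)$-approximation for \msc\ on these instances would give a $(1+O(\varepsilon))$-approximation for \textsc{Max Cut} on subcubic graphs, which is APX-hard.

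I expect step (2), the design and analysis of the vertex gadget, to be the crux: it must simultaneously (a) keep the maximum degree at $6$, (b) keep the graph bipartite, (c) be ``stability-transparent'' in the sense that it never obstructs a labeling of $V(H)$ from extending to a global stable cut, and (d) contribute a labeling-independent number of cut edges. The tension is between (a)+(b), which limit how much padding we can add, and (c)+(d), which want the gadget rich enough to absorb any parity of cut/uncut original edges at $v$. A natural resolution is to subdivide each original edge of $H$ once (so the subdivision vertex has degree $2$ and is trivially stable when its two sides differ, but may need a pendant if they agree — again a padding question) and then attach to each original vertex a fixed bipartite balancing gadget; getting all the parities and degree counts to line up is the routine-but-delicate part of the argument.
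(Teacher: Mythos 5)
Your high-level plan is exactly the paper's: reduce from \textsc{Max Cut} on subcubic graphs, subdivide each edge once, and pad each original vertex so its label can always be extended to a global stable cut. Where you stop short is precisely the step you flag as the crux, the design of the vertex gadget, and there the paper's answer is much simpler than the ``bipartite balancer'' or $K_{2,k}$-style constructions you contemplate: attach \emph{three pendant leaves} to each original vertex $v$. Since $v$ has degree at most $3$ in $H$, after subdivision and leaf-attachment $\deg_G(v)\le 6$; in any stable cut the three leaves sit on the opposite side of $v$, so $v$ automatically has at least half its neighbourhood cut, no matter where its subdivision neighbours are. That discharges your condition (c) with no case analysis, and keeps $\Delta\le 6$ and bipartiteness (original vertices on one side; subdivision vertices and leaves on the other).

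Two further points you left dangling. First, the subdivision vertices need no pendants: a degree-$2$ vertex is stable whenever at least one of its two neighbours is opposite, which one can always arrange when extending a labelling of $V(H)$ (and is forced in any stable cut, since a degree-$2$ vertex with both neighbours on its own side is unstable). Second, your condition (d) (gadget contribution independent of the labelling) is too strong and also not what the paper uses: the $3|V|$ leaf-edges are always cut, but a subdivided original edge contributes $1$ cut edge if it is cut in $H$ and $2$ if it is not, so the total is $3|V|+2|E|-k$ where $k$ is the size of the cut of $H$. This is still the affine relationship you posit in your introduction, so the accounting and the gap-preservation argument go through; you have simply imposed an unnecessary constraint on yourself at step (2) and then not solved the resulting over-constrained gadget-design problem.
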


\begin{proof} 
We give a reduction from \textsc{Max Cut} on graphs of maximum degree $3$,
which is known to be APX-hard \cite{BermanK99}. Given an instance $G=(V,E)$ of
\textsc{Max Cut} we sub-divide each edge of $E$ once, and we attach three
leaves to each vertex of $V$. The new graph therefore has $3|V|+2|E|$ edges. We
claim that if the original instance has a cut of size at least $k$ then the new
instance has a stable cut of size at most $3|V|+2|E|-k$.

For one direction, suppose we have a cut of $G$ of size $k$ which partitions
$V$ into $V_0, V_1$. We use the same partition of $V$ for the new instance. For
each leaf, we assign it a value opposite of that of its neighbor. For each
degree two vertex which was produced when sub-dividing an edge of $E$ we give
it a value that is opposite to that of at least one of its neighbors. Observe
that this cut is stable: all leaves are stable; all vertices produced in
sub-divisions have degree two and at least one neighbor on the other side; and
all vertices of $V$ are adjacent to three leaves on the other side and at most
three other vertices (since $G$ is subcubic). The edges cut are: $3|V|$ edges
incident on leaves; $2$ edges for each edge of $E$ whose endpoints are on the
same side; $1$ edge for each cut edge of $E$. This gives $3|V|+2|E|-k$ edges
cut overall.

Conversely, suppose we have a stable cut of the new graph. We use the same cut
in $G$ and claim that it must cut at least $k$ edges. Indeed, in the new graph
any stable cut must cut all $3|V|$ edges incident on leaves, and at least one
of the two edges incident on each degree two vertex. Furthermore, if
$e=(u,v)\in E$ and $u,v$ are on the same side of the cut, then both edges in
the sub-divided edge $e$ must be cut. We conclude that there must be at least
$k$ edges of $G$ with endpoints on different sides of the cut.  \end{proof}

\subsection{Algorithms}\label{sec:algs}

\begin{theorem}\label{thm:algpseudo} There is an algorithm which, given an
instance of \msc\ with $n$ vertices, maximum weight $W$, maximum degree $\Delta$, and a tree
decomposition of width $\tw$, finds an optimal solution in time $(\Delta\cdot
W)^{O(\tw)}n^{O(1)}$. \end{theorem}

\begin{proof}
Our algorithm follows the standard dynamic programming method for treewidth. We
assume that we are given a nice tree decomposition of width $\tw$ for the input
graph $G=(V,E)$; this can be done without loss of generality, because it is
known than an arbitrary tree decomposition can be transformed into a nice tree
decomposition in polynomial time, without increasing its width
\cite{CyganFKLMPPS15}.  For each node $t$ of the decomposition, let
$B_t\subseteq V$ be the bag associated with $t$ and $B_t^{\downarrow}\subseteq
V$ the set of all vertices of $G$ which appear in bags in the sub-tree rooted
in $t$ (that is, the vertices which appear below $t$ in the decomposition). 

A partial solution for node $t$ is a partition of $B_t^\downarrow$ into
$V_0,V_1$. We define the signature of a partial solution in node $t$ as a tuple
of the following information: (i) a partition of $B_t$ into two sets, which
encodes the intersections of $B_t$ with $V_0, V_1$ (ii) for each $v\in B_t$ an
integer value in $\{0,\ldots, d_w(v) \}$, which encodes the total weight of its
incident edges whose other endpoint is in $B_t^{\downarrow}\setminus B_t$ and
on the same side of the cut as $v$.  In other words, a signature $\sigma$ of a
partial solution (that is, a partition of $B_t^\downarrow$ into $V_0, V_1$) in
node $t$ is a pair $\sigma=(S,f)$, where $S\subseteq B_t$ and $f:B_t\to
\mathbb{N}$. The intended meaning is that the partition $V_0,V_1$ has signature
$\sigma=(S,f)$ if and only if (i) $S=B_t\cap V_0$ and (ii) for each $v\in B_t$,
if $v\in V_i$, for $i\in\{0,1\}$, then the total weight of edges incident on
$v$ and on vertices of $V_i\cap (B_t^{\downarrow}\setminus B_t)$ is $f(v)$.

Our dynamic programming (DP) algorithm constructs for each node $B_t$, a table
$T_t$ indexed by the possible signatures $\sigma$ for node $t$. The table
associates with each signature $\sigma$  a value $T_t(\sigma)$, which is the
weight of the best (minimum) partial solution of $B_t^{\downarrow}$ that is (i)
consistent with the signature and (ii) stable for all vertices of
$B_t^{\downarrow}\setminus B_t$. If no such solution exists, the table sets
$T_t(\sigma)=\infty$. More formally, our algorithm will satisfy the following
two invariants for all nodes $t$ of the tree decomposition:

\begin{enumerate}

\item For all partitions $V_0,V_1$ of $B_t^\downarrow$ such that all vertices
$v\in B_t^\downarrow\setminus B_t$ are stable, we have the following: if
$\sigma$ is the signature corresponding to the partition $V_0,V_1$ and $c$ is
the total weight of edges cut in $V_0,V_1$ with at least one endpoint outside
of $B_t$, then $T_t(\sigma)\le c$.

\item For all signatures $\sigma$ such that $T_t(\sigma)\neq\infty$, there
exists a partition $V_0,V_1$ of $B_t^\downarrow$ satisfying the following: (i)
all vertices $v\in B_t^\downarrow\setminus B_t$ are stable (ii) the total
weight of edges cut in $V_0,V_1$ with at least one endpoint outside of $B_t$ is
$T_t(\sigma)$ (iii) the partition $V_0,V_1$ has signature $\sigma$.

\end{enumerate}

Informally, the two invariants above state that (i) for all partial solutions,
our DP tables will contain an entry that is at least as good as the partial
solution (ii) every entry of the DP tables will in fact correspond to a valid
solution (where validity is checked below the bag). We initialize all tables to
contain the value $\infty$ for all signatures and then process the
decomposition bottom-up starting from the leaves. When it is time to process a
node $t$ we will assume that the invariants hold for its children. Our
algorithm proceeds according to the following cases:

\begin{itemize}

\item $t$ is a Leaf node. Here there is only one possible signature
($\sigma=(\emptyset,\emptyset)$), so we set $T_t(\sigma)=0$. 

\item $t$ is an Introduce node, with child $t'$ and $B_t=B_{t'}\cup\{v\}$. For
every signature $\sigma=(S,f)$ such that $T_{t'}(\sigma)\neq\infty$, we
construct two signatures $\sigma_1=(S,f')$ and $\sigma_2=(S\cup\{v\},f')$,
where $f'$ is the function that agrees with $f$ on $B_{t'}$ and has $f'(v)=0$.
We set $T_t(\sigma_1)=T_t(\sigma_2)=T_{t'}(\sigma)$.

\item $t$ is a Forget node, with child $t'$ and $B_t=B_{t'}\setminus\{v\}$.  We
consider each signature $\sigma=(S,f)$ such that $T_{t'}(\sigma)\neq\infty$.
For each such signature, we let $x$ be the total weight of edges $uv$ with
$u\in B_t$ and $|\{u,v\}\cap S|\in\{0,2\}$. If $x+f(v)>\frac{d_w(v)}{2}$ we
skip this signature.  Otherwise, let $\sigma'=(S\setminus\{v\},f')$, where $f'$
is defined as follows: if $|\{u,v\}\cap S|=1$ then $f'(u)=f(u)$; otherwise
$f'(u)=f(u)+w(uv)$.  We set $T_t(\sigma'):=\min\{T_t(\sigma'),
T_{t'}(\sigma)+y\}$, where $y$ is the total weight of edges $uv$ with $u\in
B_t$ and $|\{u,v\}\cap S|=1$.

\item $t$ is a Join node, with children $t_1,t_2$ and $B_t=B_{t_1}=B_{t_2}$.
For any two signatures $\sigma_1=(S,f_1)$ and $\sigma_2=(S,f_2)$ such that
$T_{t_1}(\sigma_1)\neq\infty$ and $T_{t_2}(\sigma_2)\neq\infty$ we construct a
signature $\sigma=(S,f_1+f_2)$ and set $T_t(\sigma):=\min\{T_t(\sigma),
T_{t_1}(\sigma_1)+T_{t_2}(\sigma_2)\}$.

\end{itemize}

\begin{claim} After executing the algorithm above, the invariants are satisfied
for all nodes $t$ of the given nice tree decomposition. \end{claim}

\begin{proof} We prove the claim by induction. For the base case we
consider a Leaf node and it is clear that the invariants hold. 

For an Introduce node $t$, any partition $V_0,V_1$ of $B_t^\downarrow$ gives a
partition of $B_{t'}^\downarrow$, which corresponds to a signature $\sigma'$,
such that $T_{t'}(\sigma')\neq\infty$ (by inductive hypothesis and the first
invariant).  Our algorithm will therefore produce two signatures
$\sigma_1,\sigma_2$ represented in $T_t$, one of which agrees with the
partition, because $v$ has no neighbors in $B_t^\downarrow\setminus B_t$.
Similarly, each such produced signature $\sigma_1,\sigma_2$ is constructed
based on a signature from $T_{t'}$, so the second invariant is also satisfied.

For a Forget node $t$, consider a partition $V_0,V_1$ of $B_t^\downarrow$. If
the partition $V_0,V_1$ has the property that all $v\in B_t^\downarrow\setminus
B_t$ are stable, then the signature $\sigma=(S,f)$ of $V_0,V_1$ in $t'$ was not
skipped by our algorithm. Our algorithm then constructed a signature $\sigma'$
by updating for each $u\in B_t$ its total weight to its side of the partition
towards vertices of $B_t^\downarrow\setminus B_t$. For the second invariant, it
is not hard to see that if our algorithm gives non-infinite value to a
signature $\sigma$, this is because a corresponding signature (and by induction
a partial solution) exists for $B_{t'}$.

For a Join node $t$, any partition $V_0,V_1$ of $B_t^\downarrow$ with signature
$\sigma=(S,f)$ corresponds to signatures $\sigma_1=(S,f_1), \sigma_2=(S,f_2)$
in $t_1,t_2$ respectively and by definition it must be the case that
$f=f_1+f_2$.  \end{proof}

What remains is to bound the running time of our algorithm, which is polynomial
in the size of the decomposition and the sizes of the DP tables.  The total
number of possible signatures is at most $2^{|B_t|} (\max d_w(v))^{|B_t|} \le
O(2^{\tw} (\Delta\cdot W)^{\tw+1})$, because $d_w(v)$ is always upper-bounded
by $\Delta\cdot W$.  \end{proof}

\begin{theorem}\label{thm:algdelta} There is an algorithm which, given an
instance of \msc\ with $n$ vertices, maximum weight $W$, maximum degree
$\Delta$  and a tree decomposition of width $\tw$, finds an optimal solution in
time $2^{O(\Delta\tw)}(n+\log W)^{O(1)}$. \end{theorem}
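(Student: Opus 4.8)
The plan is to follow the same dynamic-programming-over-a-tree-decomposition template as in Theorem~\ref{thm:algpseudo}, but to change the information stored in a signature so that the parameter $W$ disappears and is replaced by a second factor of $\Delta$ in the exponent of $\tw$. The key idea is that in the algorithm of Theorem~\ref{thm:algpseudo} the only reason $W$ appeared was that, for each vertex $v$ in a bag, we recorded the exact weighted amount of its incident edges leading to the same side inside $B_t^{\downarrow}\setminus B_t$, an integer in $\{0,\dots,d_w(v)\}$. Instead, I would record for each $v\in B_t$ the \emph{set} of neighbors of $v$ lying below (i.e.\ in $B_t^{\downarrow}\setminus B_t$) that are on the same side as $v$ — or, more precisely, since the actual identities of forgotten neighbors are not available once they leave the bag, I would record the \emph{subset of edges incident to $v$} (there are at most $\Delta$ of them) that so far have both endpoints decided and are uncut. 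A vertex has at most $\Delta$ incident edges, so there are at most $2^\Delta$ such subsets per bag vertex, giving at most $2^{|B_t|}\cdot (2^\Delta)^{|B_t|} = 2^{O(\Delta\tw)}$ signatures per node, as desired. From such a subset one can of course compute the corresponding weight, so the stability check ``at most half of $d_w(v)$ stays on $v$'s side'' is still doable at the Forget node.

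The one genuine complication, already flagged by the authors in the introduction, is bookkeeping of which incident edges of a bag vertex $v$ have ``already been accounted for.'' In a standard nice tree decomposition an edge $uv$ is processed exactly once — conventionally at the Forget node of whichever of $u,v$ leaves last — but to keep track of \emph{edge}-subsets rather than weight-sums we need each vertex to see all of its incident edges while it is still in some bag. The cleanest way to guarantee this is to preprocess the decomposition: for each vertex $v$, pick the highest node $t_v$ whose bag contains $v$, and insert $N(v)$ into every bag on the path between $t_v$ and all the nodes where edges $uv$ would otherwise be introduced; equivalently, ensure that some bag contains $N[v]$ entirely. This increases the width by at most a factor of $\Delta+1$ (each bag grows by at most $\Delta$ per vertex already present, but a more careful argument keeps the blow-up to $O(\Delta\tw)$), which is harmless since $\Delta$ is already a parameter and only appears multiplied with $\tw$ in the target running time. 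After this edit, at the Forget node for $v$ we know the side of every neighbor of $v$, so we can finalize the uncut-edge subset of $v$, verify stability of $v$, and propagate the updated uncut-edge subsets to those neighbors of $v$ that remain in the bag.

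With the decomposition so edited, the DP transitions are routine and mirror those in Theorem~\ref{thm:algpseudo}: Leaf nodes carry the empty signature with value $0$; Introduce nodes for $v$ branch on placing $v$ in $V_0$ or $V_1$ and initialize $v$'s uncut-edge subset to those incident edges already having both endpoints in the bag and crossing the cut; Forget nodes for $v$ discard signatures in which more than $d_w(v)/2$ of $v$'s weight is uncut (i.e.\ stays on $v$'s side), and update the recorded subsets of $v$'s surviving neighbors; Join nodes combine two child signatures agreeing on the partition of $B_t$ by taking, for each $v\in B_t$, the union of the two recorded uncut-edge subsets (this does not double-count, as edges induced by $B_t$ are handled at Forget nodes), and summing the cut weights after subtracting the doubly-counted contribution of edges inside $G[B_t]$. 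The number of signatures per node is $2^{O(\Delta\tw)}$ and the number of nodes and the cost per transition are polynomial in $n$ (weights appear only additively, contributing the $\log W$ term), so the total running time is $2^{O(\Delta\tw)}(n+\log W)^{O(1)}$. I expect the main obstacle to be the decomposition-editing step and the accompanying verification that it only inflates the width by an $O(\Delta)$ factor and does not break niceness; the rest is a direct adaptation of the previous algorithm.
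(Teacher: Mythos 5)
Your proposal stacks two mechanisms on top of each other, each of which is individually sufficient, and this is where it goes wrong quantitatively. The paper's proof uses only your second ingredient: it adds $N(v)$ to every bag containing $v$, obtaining a valid decomposition of width $O(\Delta\tw)$ in which some bag contains all of $N[v]$ for each $v$, and then runs the plain \textsc{Max Cut}-style DP that records nothing per bag vertex beyond which side of the cut it is on; stability of $v$ is checked wholesale at a bag containing $N[v]$. No uncut-edge subsets are needed at all. Conversely, your first ingredient alone --- recording for each $v\in B_t$ the subset of its at most $\Delta$ incident edges whose other endpoint lies in $B_t^{\downarrow}\setminus B_t$ and is on $v$'s side --- already works on the \emph{unedited} decomposition: every edge $uv$ appears in some bag containing both endpoints, so it is witnessed at the Forget node of whichever endpoint leaves first, while the other endpoint is still present; hence when $v$ itself is forgotten, every incident edge either is recorded in its subset or goes to a vertex still in the bag, and stability of $v$ can be verified there. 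The ``genuine complication'' you flag in your second paragraph is therefore not actually an obstacle to the subset-based DP, and the decomposition editing is only needed if you refuse to store per-vertex information.

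The concrete problem with the combination as written is the state count in your third paragraph. After the editing the bags have size $\Theta(\Delta\tw)$, and if each bag vertex additionally carries one of $2^{O(\Delta)}$ edge subsets, the number of signatures per node is $2^{O(\Delta)\cdot O(\Delta\tw)}=2^{O(\Delta^2\tw)}$, not the claimed $2^{O(\Delta\tw)}$. To obtain the stated bound you must commit to one mechanism: either keep the original width-$\tw$ decomposition and pay $2^{O(\Delta)}$ per bag vertex (your first paragraph), or enlarge the bags to width $O(\Delta\tw)$ and pay only a factor $2$ per bag vertex (the paper's route, \cref{thm:algdelta}). Either choice yields a correct $2^{O(\Delta\tw)}(n+\log W)^{O(1)}$ algorithm; the hybrid as described does not.
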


\begin{proof}
We describe an algorithm which works in a way similar to the standard
algorithm for \textsc{Max Cut} parameterized by
treewidth \cite{BodlaenderJ00}, except that we work in a tree decomposition
that is essentially a decomposition of the square of $G$\footnote{The square of
$G$ is the graph with the same vertices as $G$ where $u,v$ are adjacent if they
are at distance at most $2$ in $G$.}.  More precisely, before we begin, we do
the following: for each $v\in V$ we add to every bag of the decomposition that
contains $v$ all the vertices of $N(v)$.  It is not hard to see that we now
have a decomposition of width at most $(\Delta+1)(\tw+1)$ and also that the new
decomposition is still a valid tree decomposition.  Crucially, we now also have
the following property: for each $v\in V$ there exists at least one bag of the
decomposition that contains all of $N[v]$.

The algorithm now performs dynamic programming by storing for each bag the
value of the best solution for each partition of $B_t$. As a result, the size
of the DP table is $2^{O(\Delta\tw)}$. The only difference with the standard
\textsc{Max Cut} algorithm (beyond the fact that we are looking for a cut of
minimum weight) is that when we consider a bag that contains all of $N[v]$, for
some $v\in V$, we discard all partitions which are unstable for $v$. Since the
bag contains all of $N[v]$, this can be checked in time polynomial in $n$ and
$\log W$ (assuming weights are given in binary).  \end{proof}

\subsection{Tight ETH-based Hardness}\label{sec:hardeth}

Our goal in this section is to show that the parameter dependence
$2^{O(\Delta\tw)}$ given by the algorithm of Theorem~\ref{thm:algdelta} is essentially
optimal. Before we proceed to the reduction that will establish this (under the
ETH), it is worth thinking a bit more clearly about the challenge of proving
such a lower bound involving two distinct parameters. Recall that, by now, the
techniques for proving that a problem does not admit, say, an FPT algorithm
with parameter dependence $\tw^{o(\tw)}$, or $2^{o(\tw^2)}$ are
well-understood: we need to start a reduction from a \textsc{3-SAT} instance
with $n$ variables and produce an equivalent instance of our problem where the
treewidth is bounded by $O(n/\log n)$ or $O(\sqrt{n})$ respectively. Then, an
algorithm with such a ``too fast'' running time would violate the ETH.

Given the above, one may be tempted to prove the optimality of the algorithm of
Theorem~\ref{thm:algdelta} by showing that no algorithm can solve the problem in time
$2^{o(\Delta\tw)}n^{O(1)}$. For example, we could come up with a reduction from
$n$-variable \textsc{3-SAT} producing instances with $\tw=O(\sqrt{n})$ and
$\Delta=O(\sqrt{n})$, in which case $2^{o(\Delta\tw)}=2^{o(n)}$, therefore such
an algorithm would violate the ETH. Unfortunately, though such a statement
would be technically correct, it would not satisfactorily establish that the
algorithm of Theorem~\ref{thm:algdelta} is ``best possible''. The reason is that,
unlike functions of a single variable, increasing functions of two variables
are not totally ordered. Concretely, consider the possibility of an algorithm
with running time $2^{\Delta^2+\tw}n^{O(1)}$. Such an algorithm would not be
ruled out by the reduction we hypothesized above (which sets
$\Delta=O(\sqrt{n})$), but it would be incomparable to the algorithm of
Theorem~\ref{thm:algdelta}, in the sense that one could be much faster than the other,
depending on whether $\Delta$ is larger than $\tw$. In other words, if our goal
is to convincingly argue that the ``correct'' complexity is exponential in
\emph{the product} $\Delta\cdot\tw$, we also need to be able to rule out
algorithms which sacrifice a bit the dependence on one parameter to obtain a
better dependence on the other.

In order to achieve this more comprehensive kind of hardness result, we will
therefore present a parametric kind of reduction, which starts from an instance
of $n$-variable \textsc{3-SAT} and, for \emph{any} desired (reasonable) function
$\Delta(n)$ produces an equivalent instance of maximum degree $\Delta(n)$ and
treewidth $O(n/\Delta(n))$. Because our reduction will work for any
$\Delta(n)$, it will allow us to establish that the correct exponent is at
least $\Delta\cdot \tw$, no matter the relation between $\Delta$ and $\tw$ in
the input instance.

We now give such a parametric reduction from \textsc{3-Set Splitting} to \msc\
whose main properties are laid out in Lemma \ref{lem:hard}. This reduction
gives the lower bound of Theorem \ref{thm:eth1}. Recall that the \textsc{3-Set
Splitting} problem is the following: we are given a hypergraph where hyperedges
have arity at most $3$ and the question is to decide whether there is a
partition of the vertex set such that all hyperedges intersect both sides of
the partition.

\begin{lemma}\label{lem:hard} There is a polynomial-time algorithm which, given
a \textsc{3-Set Splitting} instance $H=(V,E)$ with $n$ elements, and a positive
integer $\delta$, produces a \msc\ instance $G$ with the following properties:
(i) $G$ is a Yes instance if and only if $H$ is a Yes instance; (ii) if
$\Delta$ is the maximum degree of $G$ and $\pw$ its pathwidth, then $\Delta =
O(\delta)$ and $\pw=O(n/\delta)$; (iii) the maximum weight of $G$ is
$W=O(2^{\Delta})$.  \end{lemma}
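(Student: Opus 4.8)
The goal is a parametric reduction from \textsc{3-Set Splitting} that, given a target degree bound $\delta$, produces a \msc\ instance whose degree is $O(\delta)$, whose pathwidth is $O(n/\delta)$, and whose weights are $O(2^\delta)$. The natural strategy is to build a gadget that encodes, via the stable-cut condition, the assignment of a Boolean value to each of the $n$ elements of $H$, and to ``compress'' this information so that only $O(n/\delta)$ elements are ``active'' in the path decomposition at any time — each carrying $\Theta(\delta)$ bits of choice — so that the product $\Delta\cdot\pw$ stays $\Theta(n)$. The weights being exponential in $\Delta$ is exactly what lets a single vertex of degree $\Delta$ encode $\Theta(\Delta)$ independent bits: by giving its incident edges weights that are distinct powers of two (i.e.\ $1,2,4,\ldots,2^{\Delta-1}$), the ``balance'' requirement $\sum_{\text{cut}} w \ge d_w(v)/2$ forces the set of cut edges to have weight-sum in a narrow window, and because powers of two have unique subset sums, this pins down which subset of neighbors lies on the opposite side — i.e.\ it reads off $\Theta(\Delta)$ bits.

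**Key steps in order.** (1) Recall/set up \textsc{3-Set Splitting}: given a ground set of $n$ elements and a family of $3$-element sets, decide whether there is a $2$-coloring of the elements so that no set is monochromatic; this is ETH-hard (no $2^{o(n)}$ algorithm), and it is convenient because ``not monochromatic'' is exactly the kind of local constraint a stable-cut vertex with three heavy neighbors can enforce. (2) Group the $n$ elements into $\Theta(n/\delta)$ blocks of $\Theta(\delta)$ elements each; for each block introduce a ``choice gadget'' — essentially a vertex (or small constant-size cluster of vertices) of degree $O(\delta)$ whose incident edge weights are chosen as powers of two so that, in any stable cut, the induced partition of the gadget's ``port'' vertices encodes one of the $2^{\Theta(\delta)}$ possible assignments to that block, and so that the contribution of the gadget to the total cut weight is the same regardless of which assignment is chosen (so the gadget is ``free'' and does not interfere with the global weight target). (3) For each $3$-set of $H$, attach a ``clause-checking'' vertex of constant degree whose two/three incident edges have equal weight, so that the stability condition at that vertex (majority of neighbors on the other side) forces its three ports — wired to the appropriate block-gadget ports representing the three elements — not to be all equal; arrange the wiring so that consistency between the many occurrences of the same element is itself enforced by heavy power-of-two edges, again at $O(n/\delta)$ cost in pathwidth. (4) Compute the target weight $K$ for the produced \msc\ instance by summing the fixed contributions of all gadgets, so that a cut of weight $\le K$ exists iff every clause-vertex can be made stable, i.e.\ iff $H$ is a Yes instance. (5) Exhibit a path decomposition: lay out the block-gadgets left to right; a bag need only contain the current block-gadget (size $O(\delta)$), plus a bounded ``interface'' carrying the identity of elements whose clause-vertices straddle blocks — careful routing (e.g.\ a caterpillar-like arrangement of clause vertices, or re-proving \textsc{3-Set Splitting} hardness already on instances of bounded ``cutwidth''/linear structure) keeps this interface $O(\delta)$, for total width $O(\delta)\cdot O(n/\delta)\cdot$... no: total width $O(\delta)$ per bag is wrong — we want $O(n/\delta)$, so the interface, not the per-block size, dominates, and it must be shown to be $O(n/\delta)$ while each bag additionally holds one $O(\delta)$-size block. (6) Read off $\Delta=O(\delta)$, $\pw=O(n/\delta)$, $W=O(2^\Delta)$, and observe that a $2^{o(\Delta\pw)}(n+\log W)^{O(1)}$ or $(nW)^{o(\pw)}$ algorithm would give $2^{o(n)}$ for \textsc{3-Set Splitting}.

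**The main obstacle.** The genuinely delicate part is (2)+(5) together: making the power-of-two ``bit-packing'' gadget behave correctly — so that \emph{every} subset of its ports is realizable as ``the other side'' in some stable cut (completeness), that no spurious near-balanced subsets sneak in (soundness, which is where unique subset sums of powers of two is essential, but one must check the $\ge d_w(v)/2$ inequality leaves no slack for a second feasible subset), and that its cut-weight contribution is assignment-independent — \emph{while simultaneously} keeping the pathwidth at $O(n/\delta)$. The tension is that cramming $\Theta(\delta)$ bits into each gadget reduces the number of gadgets to $O(n/\delta)$, but the wiring from clause-vertices to gadget ports threatens to blow up the interface of the path decomposition back to $\Theta(n)$; resolving this needs either a careful ordering of blocks and clauses, or starting the reduction from a structured (bounded-bandwidth/linear) version of \textsc{3-Set Splitting} whose ETH-hardness must itself be argued. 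Verifying the exact arithmetic of the weight target $K$ and that unstable clause-vertices are the \emph{only} way to exceed it is routine but must be done with care because, as the introduction notes, a light edge for one endpoint may be heavy for the other, so the per-vertex balance conditions interact.
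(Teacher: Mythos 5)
Your high-level instinct — pack $\Theta(\delta)$ bits of assignment into each gadget so that there are only $O(n/\delta)$ gadgets, while bounding the per-vertex degree by $O(\delta)$ via power-of-two weights — matches the paper's intuition, and you correctly identify the crux in your ``main obstacle'' paragraph. But you leave that obstacle unresolved, and it is precisely the part that requires a new idea rather than routine care. With a single copy of each element grouped into blocks, a $3$-set touches three arbitrary blocks, so a path decomposition must simultaneously hold ports from three potentially far-apart blocks together with all the ``live'' clause vertices; in general this pushes the interface back up to $\Theta(n)$ (or $\Theta(m)$) and your appeal to ``careful ordering'' or to a linearly-structured variant of \textsc{3-Set Splitting} is a genuine missing lemma, not a routine step.

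The paper sidesteps this entirely by a different architecture: it builds $m$ \emph{columns}, one per set of $H$, each column being a fresh copy $V_j$ of the $n$ elements; the $j$-th checker is wired only to the three element-copies \emph{inside column $j$}, so checker wiring is local and contributes nothing to the interface. Consistency of the assignment across columns is enforced by $O(n/\delta)$ \emph{propagator} vertices between consecutive columns, each of degree $2\delta$ with edge weights $1,2,\ldots,2^{\delta-1}$ to each side; the propagator's balance condition, combined with a tight global weight budget $B$, forces every matched pair $v_{(i,j)},v_{(i,j+1)}$ onto opposite sides (the argument is not unique-subset-sums but a ``top mismatched level dominates all lower levels'' argument, which only works because the budget forces \emph{exact} balance). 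Stabilizer leaves of weight $3\cdot 2^{(i\bmod\delta)}$ make every element-copy unconditionally stable, so the assignment is a free choice. Pathwidth $O(n/\delta)$ then follows by sliding a window over the columns, keeping the $O(n/\delta)$ propagators adjacent to the current column in all bags. Without this column/duplication trick (or an explicit proof that \textsc{3-Set Splitting} remains ETH-hard with bounded bandwidth), your construction does not establish the claimed pathwidth bound.
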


\begin{figure}[h]
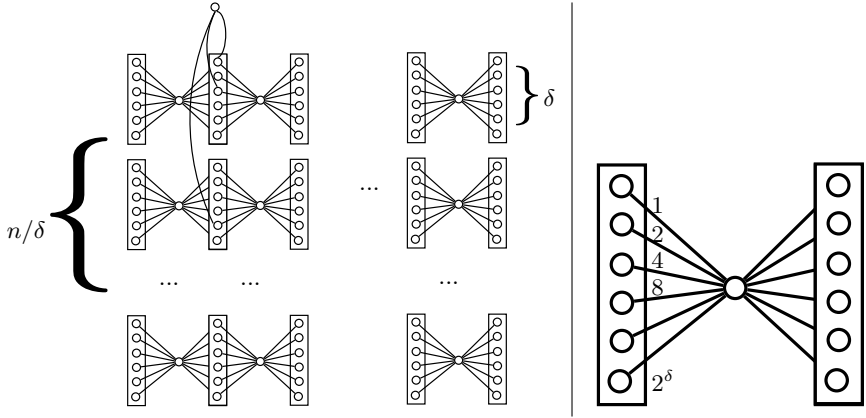


\begin{tabular}{l|r}
\input{red1}&

\input{red2}
\end{tabular}
\caption{Sketch of the construction of Lemma \ref{lem:hard}. On the left, the general architecture: $m$ columns, each with $n$ vertices, partitioned into groups of size $\log n$. On each column we add a checker vertex (on top). Between the same groups of consecutive columns we add propagator vertices. On the right, more details about the exponentially increasing weights of edges incident on propagators.}\label{fig:red1}
\end{figure}

\begin{proof}  
Let $H=(V,E)$ be the given \textsc{3-Set Splitting} instance, $V=\{v_0,\ldots,
v_{n-1}\}$ and suppose that $E$ contains $e_2$ sets of size $2$ and $e_3$ sets
of size $3$, where $|E|=e_2+e_3$ will be denoted by $m$.  Assume without loss
of generality that $n$ is a power of $2$ (otherwise add some dummy elements to
$V$). We construct a graph by first making $m$ copies of $V$, call them $V_j,
j\in [m]$ and label their vertices as $V_j = \{ v_{(i,j)}\ |\ i\in \{0,\ldots,
n-1\}\}$. Intuitively, the vertices $\{ v_{(i,j)}\ |\ j\in [m]\}$ are all meant
to represent the element $v_i$ of $H$.  We now add to the graph the following:

\begin{enumerate}

\item Checkers: Suppose that the $j$-th set of $E$ contains elements $v_{i_1},
v_{i_2}, v_{i_3}$. Then we construct a vertex $c_j$ and connect it to
$v_{(i_1,j)}, v_{(i_2,j)}, v_{(i_3,j)}$ with edges of weight $1$. If the $j$-th
set has size two, we do the same (ignoring $v_{i_3}$). 

\item Propagators: For each $j\in [m-1]$ we construct $\rho = \lceil n/\delta
\rceil$ vertices labeled $p_{(i',j)}, i'\in \{0,\ldots,\rho-1\}$. Each
$p_{(i',j)}$ is connected to (at most) $\delta$ vertices of $V_j$ and $\delta$
vertices of $V_{j+1}$ with edges of exponentially increasing weight.
Specifically, for $i'\in \{ 0,\ldots, \rho-1\}, \ell\in \{0,\ldots,\delta-1\}$,
we connect $p_{(i',j)}$ to $v_{(i'\delta+\ell,j)}$ and to $v_{(i'\delta +
\ell,j+1)}$ (if they exist) with an edge of weight $2^{\ell}$.

\item Stabilizers: For each $j\in [m], i\in \{0,\ldots,n-1\}$ we attach to
$v_{(i,j)}$ a leaf. The edge connecting this leaf to $v_{(i,j)}$ has weight
$3\cdot 2^{(i\bmod \delta)}$.

\end{enumerate}

This completes the construction of the graph. Let $L_w$ be the total weight of
edges incident on leaves and $P$ be the total weight of edges incident on
Propagator vertices $p_{(i,j)}$. We set $B=L_w+\frac{P}{2}+e_2+2e_3$ and claim
that the new instance has a stable cut of weight $B$ if and only if $H$ can be
split.

For the forward direction, suppose that $H$ can be split by the partition of
$V$ into $L, R=V\setminus L$. We assign the following values for our new
instance: for each $j\in [m]$ odd, we set $v_{(i,j)}$ to value $0$ if and only
if $v_i\in L$; for each $j\in [m]$ even, we set $v_{(i,j)}$ to value $0$ if and
only if $v_i\in R$. In other words, we use the same partition for all copies of
$V$, but flip the roles of $0,1$ between consecutive copies. We place leaves on
the opposite side from their neighbors and greedily assign values to all other
vertices of the graph to obtain a stable partition.  Observe that all vertices
$v_{(i,j)}$ are stable with the values we assigned, since the edge connecting
each such vertex to a leaf has weight at least half its total incident weight.
More specifically, a vertex $v_{(i,j)}$ has an edge of weight $3\cdot
2^{(i\bmod \delta)}$ connecting it to a leaf (and this edge is cut); at most
two edges of weight $2^{(i\bmod \delta)}$ each connecting it to Propagators;
and at most an edge of weight $1$ connecting it to a Checker.

In the partition we have, we observe that (i) all edges incident on leaves are
cut (total weight $L_w$) (ii) all Propagator vertices have balanced
neighborhoods, so exactly half of their incident weight is cut (total weight
$P/2$) (iii) since $L,R$ splits all sets of $E$, each checker vertex will have
exactly one neighbor on the same side (total weight $e_2+2e_3$). So, the total
weight of the cut is $B$.

For the converse direction, suppose we have a stable cut of size $B$ in the
constructed instance. Because of the stability condition, this solution must
cut all edges incident on leaves (total weight $L_w$); at least half of the
total weight of edges incident on Propagators (total weight $P/2$); and for
each checker vertex all its incident edges except at most one (total weight at
least $e_2+2e_3$).  We conclude that, in order to achieve weight $B$ all
previous bounds must be tight, that is, the cut must properly balance the
neighborhood of all Propagators and make sure that each Checker vertex has one
neighbor on its own side.

We now argue that because the neighborhood of each Propagator is balanced we
have for all $i\in\{0,\ldots, n-1\}, j\in [m-1]$ that $v_{(i,j)}, v_{(i,j+1)}$
are on different sides of the partition. To see this, suppose for contradiction
that for two such vertices this is not the case and to ease notation consider
the vertices $v_{(i\delta+\ell,j)}, v_{(i\delta+\ell,j+1)}$, where $0\le
\ell\le \delta-1$.  Among all such pairs select one that maximizes $\ell$.
Both vertices are connected to the Propagator $p_{(i , j)}$ with edges of
weight $2^{\ell}$. But now $p_{(i,j)}$ has strictly larger edge weight
connecting it to the side of the partition that contains $v_{(i\delta+\ell,j)}$
and $v_{(i\delta+\ell,j+1)}$ than to the other side because (i) for neighbors
of $p_{(i,j)}$ connected to it with edges of higher weight, the neighborhood of
$p_{(i,j)}$ is balanced by the maximality of $\ell$ (ii) the total weight of
all other edges is $2\cdot (2^{\ell-1}+2^{\ell-2}+\ldots+1) < 2\cdot 2^{\ell}$. 

We thus have that for all $i,j$, $v_{(i,j)}, v_{(i,j+1)}$ must be on different
sides, and therefore all $V_j$ are partitioned in the same way (except some
have the roles of $0$ and $1$ reversed). From this, we obtain a partition of
$V$.  To conclude this direction, we argue that this partition of $V$ must
split all sets. Indeed, if not, there will be a checker vertex such that all
its neighbors are on the same side, which, as we argued, means that the cut
must have weight strictly more than $B$.

Finally, let us show that the constructed instance has the claimed properties.
The maximum degree is $\Delta = 2\delta$ in the Propagators vertices (all other
vertices have degree at most $4$); the maximum weight is $O(2^{\delta}) =
O(2^{\Delta})$. Let us also consider the pathwidth of the constructed graph.
Let $G_j$ be the subgraph induced by $V_j$ and its attached leaves, the Checker
$c_j$, and all Propagators adjacent to $V_j$. We claim that we can build a path
decomposition of $G_j$ that contains all Propagators adjacent to $V_j$ in all
bags and has width $O(n/\delta)$. Indeed, if we place all the (at most $\lceil
2n/\delta\rceil$) Propagators and $c_j$ in all bags, we can delete them from
$G_j$, and all that is left is a union of isolated edges, which has pathwidth
$1$. Now, since the union of all $G_j$ covers all vertices and edges, we can
construct a path decomposition of the whole graph of width $O(n/\delta)$ by
gluing together the decompositions of each $G_j$, that is, by connecting the
last bag of the decomposition of $G_j$ to the first bag of the decomposition of
$G_{j+1}$. As claimed, the whole construction can be performed in polynomial
time, assuming that weights are written in binary.  \end{proof}

We can first obtain a ``standard'' lower bound, by invoking Lemma~\ref{lem:hard}
with $\delta=\log n$. This is sufficient to prove that the pseudopolynomial
algorithm of Theorem~\ref{thm:algpseudo} for constant treewidth is optimal under the
ETH.

\begin{theorem}\label{thm:eth1}If the ETH is true then (i) there is no
algorithm solving \msc\ in time $(nW)^{o(\pw)}$ (ii) there is no algorithm
solving \msc\ in time $2^{o(\Delta \pw)}(n+\log W)^{O(1)}$. These statements
apply even if we restrict the input to instances where weights are written in
unary and the maximum degree is $O(\log n)$. \end{theorem}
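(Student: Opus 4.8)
The plan is to derive both parts of \cref{thm:eth1} by a single application of \cref{lem:hard}, instantiated with the parameter set to $\delta=\log n$, where $n$ is the number of elements of the input \textsc{3-Set Splitting} instance $H$. As a preliminary, I would recall that \textsc{3-Set Splitting} is NP-hard and that, under the ETH, it admits no algorithm running in time $2^{o(n+m)}$, where $m$ is the number of sets; this follows from the standard linear-size reduction from \textsc{3-SAT}, which moreover lets us assume $m=\Theta(n)$, so that the lower bound reads simply $2^{o(n)}$. It is this $2^{o(n)}$ bound that we will contradict.

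Given $H$ with $n$ elements, I would run the polynomial-time reduction of \cref{lem:hard} with $\delta:=\log n$. This produces an equivalent \msc\ instance $G$ whose maximum degree is $\Delta=O(\delta)=O(\log n)$, whose pathwidth is $\pw=O(n/\delta)=O(n/\log n)$, whose maximum edge weight is $W=O(2^{\Delta})=n^{O(1)}$, and which has $N:=|V(G)|=n^{O(1)}$ vertices (using $m=\Theta(n)$). Since $W=n^{O(1)}$, writing all weights of $G$ in unary blows the instance up only polynomially, and the maximum degree is $O(\log N)$; hence $G$ already lies in the restricted class of instances described in the theorem, so it is enough to rule out fast algorithms on instances of this form.

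For part (i): if \msc\ admitted an algorithm with running time $(NW)^{o(\pw)}$, then on $G$ this would be $2^{o(n)}$, since $\pw=O(n/\log n)$ and $\log(NW)=O(\log n)$. Combined with \cref{lem:hard}(i), this decides \textsc{3-Set Splitting} in time $2^{o(n)}$, contradicting the ETH. For part (ii), the decisive point is that our construction satisfies $\Delta\cdot\pw=O(\delta)\cdot O(n/\delta)=O(n)$, \emph{independently of the choice of $\delta$}. Thus an algorithm with running time $2^{o(\Delta\pw)}(N+\log W)^{O(1)}$ would, on $G$, run in time $2^{o(n)}\cdot n^{O(1)}=2^{o(n)}$ (here $\log W=O(\log n)$ and $N=n^{O(1)}$), again contradicting the ETH through \cref{lem:hard}(i). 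Because $\Delta\pw=O(n)$ regardless of $\delta$, this single family in fact rules out every running time of the form $2^{o(\Delta\pw)}(n+\log W)^{O(1)}$, and in particular $2^{O(\Delta+\pw)}(n+\log W)^{O(1)}$ and $\Delta^{O(\pw)}(n+\log W)^{O(1)}$.

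The only thing that needs genuine care is the parameter bookkeeping: one must check that, for the specific instantiation $\delta=\log n$, every $o(\cdot)$ occurring in the hypothetical running time collapses to a $2^{o(n)}$ factor — in particular, that the contribution of $W$ (respectively $\log W$) is only a multiplicative $n^{o(1)}$ (respectively $n^{O(1)}$), which is exactly why $\delta=\log n$ is the right choice. No fresh combinatorial difficulty appears at this stage: the weight-doubling propagator gadget of \cref{lem:hard}, which forces consecutive copies $V_j,V_{j+1}$ of the ground set to be partitioned in the same way while keeping the pathwidth at $O(n/\delta)$, is what does the real work; the present theorem is just the step of substituting the right value of $\delta$ and reading off the parameters.
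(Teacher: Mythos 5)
Your proposal is correct and follows essentially the same route as the paper: compose the linear-size reduction from \textsc{3-SAT} to \textsc{3-Set Splitting} with Lemma~\ref{lem:hard} instantiated at $\delta=\lceil\log n\rceil$, observe that this yields $\Delta=O(\log n)$, $\pw=O(n/\log n)$, $W=n^{O(1)}$, and check that both hypothetical running times collapse to $2^{o(n)}$, contradicting the ETH. The parameter bookkeeping you flag is exactly the content of the paper's argument, and your closing remark that $\Delta\cdot\pw=O(n)$ independently of $\delta$ is precisely what the paper later exploits in Theorem~\ref{thm:cool}.
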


\begin{proof} 
We recall that the standard chain of reductions from \textsc{3-SAT} to
\textsc{3-Set Splitting} which establishes that the latter problem is NP-hard
produces an instance with size linear in the original formula
\cite{GareyJ79,ImpagliazzoPZ01}. We compose these reductions with the reduction
of Lemma \ref{lem:hard}, setting $\delta=\lceil\log n\rceil$. Suppose we
started with a formula with $n$ variables and $m$ clauses (so as an
intermediate step we constructed a \textsc{3-Set Splitting} instance with
$O(n+m)$ elements and sets). We therefore now have an instance with
$N=poly(n+m)$ vertices (since the reduction runs in polynomial time), maximum
degree $\Delta=O(\log(n+m))$ and pathwidth $\pw=O((n+m)/\log(n+m))$, and
maximum weight $W=poly(n+m)$. Plugging these relations into the running times
of hypothetical algorithms for \msc\ we obtain algorithms for \textsc{3-SAT}
running in time $2^{o(n+m)}$ and contradicting the ETH.  \end{proof}

More interestingly,  because Lemma~\ref{lem:hard} allows us to select an arbitrary
value of $\delta$, we can prove that any FPT algorithm parameterized by
$\Delta+\pw$ must have complexity exponential in the product of the two
parameters, independent of which is smaller than the other.

\begin{theorem}\label{thm:cool} Suppose that there exists an algorithm solving
\msc\ on instances with $n$ vertices and maximum weight $W$ in time
$2^{f(\Delta,\pw)}(n+\log W)^{O(1)}$, for some increasing function $f$.  Let
$\Delta(n)$ be a constructible increasing function on the positive integers
with $\Delta(n)=o(n)$ and $\Delta(n)=\omega(1)$. If for any such function
$\Delta(n)$ we have that $f(\Delta(n),\lceil n/\Delta(n)\rceil)=o(n)$, then the
ETH is false.  \end{theorem}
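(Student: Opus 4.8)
The plan is to compose the standard chain of reductions from \textsc{3-SAT} with \cref{lem:hard}, exploiting the freedom in the choice of the parameter $\delta$, and then feed the resulting \msc\ instance to the hypothetical algorithm. First I would fix a concrete constructible increasing function $\delta(\cdot)$ with $\delta(k)=o(k)$ and $\delta(k)=\omega(1)$ (say $\delta(k)=\lceil\log k\rceil$); this is one of the admissible functions ``$\Delta(n)$'' referred to in the statement. Given a \textsc{3-SAT} instance $\varphi$ with $n$ variables and $m$ clauses, set $s=n+m$, apply the linear-size reduction to \textsc{3-Set Splitting} used in the proof of \cref{thm:eth1} to obtain an instance $H$ with $N=\Theta(s)$ elements and $\Theta(s)$ sets, and then invoke \cref{lem:hard} on $H$ with parameter $\delta(N)$. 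This produces in polynomial time a \msc\ instance $G$ that is a Yes instance iff $\varphi$ is satisfiable, and which by \cref{lem:hard}(ii)--(iii) satisfies $\Delta_G\le c_1\delta(N)$, $\pw_G\le c_2 N/\delta(N)$, $\log W_G=O(\delta(N))$ and $|V(G)|=N^{O(1)}=s^{O(1)}$, for absolute constants $c_1,c_2$.

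Running the assumed algorithm on $G$ takes time $2^{f(\Delta_G,\pw_G)}\,(|V(G)|+\log W_G)^{O(1)}=2^{f(\Delta_G,\pw_G)}\,s^{O(1)}$, so the whole argument comes down to showing $f(\Delta_G,\pw_G)=o(s)$: this would yield a $2^{o(s)}$ algorithm for \textsc{3-SAT} and contradict the ETH. Using that $f$ is non-decreasing in each coordinate I would first pass to $f(\Delta_G,\pw_G)\le f\bigl(c_1\delta(N),\,c_2N/\delta(N)\bigr)$, reducing the task to controlling this last value.

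The hard part will be that $\bigl(c_1\delta(N),\,c_2N/\delta(N)\bigr)$ is not literally of the form $\bigl(\Delta'(k),\lceil k/\Delta'(k)\rceil\bigr)$, because of the loose constants $c_1,c_2$ (whose product may exceed $1$), so the hypothesis cannot be invoked directly. I plan to handle this by rescaling the ``length'' variable: set $C=\lceil 2c_1c_2\rceil$, $k=C\cdot N$, and $\hat\Delta(j):=\bigl\lceil c_1\,\delta(\lceil j/C\rceil)\bigr\rceil$, which is again constructible, increasing, $o(j)$ and $\omega(1)$, hence another admissible choice for ``$\Delta(n)$''. The point is then the elementary verification that $\hat\Delta(k)\ge c_1\delta(N)$ and $\lceil k/\hat\Delta(k)\rceil\ge c_2N/\delta(N)$ for all large $N$ (here $\delta(N)=\omega(1)$ is what lets me absorb the additive contributions of the ceilings), so that coordinatewise monotonicity of $f$ gives
$$f\bigl(c_1\delta(N),\,c_2N/\delta(N)\bigr)\le f\bigl(\hat\Delta(k),\,\lceil k/\hat\Delta(k)\rceil\bigr),$$
and applying the hypothesis to the function $\hat\Delta$ makes the right-hand side $o(k)=o(N)=o(s)$. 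Chaining the inequalities yields $f(\Delta_G,\pw_G)=o(s)$, as required.

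Apart from the rescaling step, everything is routine: the correctness of $G$ is \cref{lem:hard}(i) composed with the known correctness of the \textsc{3-SAT}-to-\textsc{3-Set Splitting} reduction, the polynomial running time and the size/degree/pathwidth/weight bounds are exactly \cref{lem:hard}(ii)--(iii) instantiated with $\delta=\delta(N)$, and the only structural input needed about $f$ is its monotonicity in each coordinate.
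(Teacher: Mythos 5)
Your diagnosis of the ``loose constants'' issue is a genuine catch — the paper's one-line proof glosses over exactly this — but the way you patch it opens a gap. To support the corollary that follows, Theorem~\ref{thm:cool} must be read as: \emph{for each} admissible $\Delta(n)$, if $f(\Delta(n),\lceil n/\Delta(n)\rceil)=o(n)$ then the ETH fails. Indeed the corollary verifies the $o(n)$ condition only for the single chosen function $\Delta(n)=\sqrt n$; it is false for, say, $\Delta(n)=n^{0.9}$ under the same $f$, so the ``for all admissible $\Delta$'' reading of the hypothesis would make the corollary vacuous. Under the correct reading, the condition $f(\Delta(n),\lceil n/\Delta(n)\rceil)=o(n)$ is available \emph{only} for the particular $\Delta$ you are handed. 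Your proof conflicts with this in two places: you discard the given $\Delta$ in favour of an unrelated $\delta(k)=\lceil\log k\rceil$ (the paper explicitly instructs to set $\delta=\Delta(n)$), and, crucially, your last step is ``applying the hypothesis to the function $\hat\Delta$''. That is not licensed: nothing in the theorem gives you $f(\hat\Delta(k),\lceil k/\hat\Delta(k)\rceil)=o(k)$, and for an arbitrary increasing $f$ there is no reason the $o(n)$ bound for $\Delta$ should transfer to a rescaled $\hat\Delta$ — precisely because of the same constant-absorption difficulty you are trying to solve.

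The fix should rescale the \emph{sample point} rather than the function. Given the $3$-Set Splitting instance on $N=\Theta(s)$ elements, set $n:=\lceil 2c_1c_2 N\rceil$ and $\delta:=\lfloor\Delta(n)/c_1\rfloor$ (positive for large $N$ since $\Delta(n)\to\infty$). Lemma~\ref{lem:hard} then gives $\Delta_G\le c_1\delta\le\Delta(n)$; and since $\delta\ge\Delta(n)/(2c_1)$ for large $N$, also $\pw_G\le c_2N/\delta\le 2c_1c_2N/\Delta(n)\le n/\Delta(n)\le\lceil n/\Delta(n)\rceil$. Monotonicity of $f$ now yields $f(\Delta_G,\pw_G)\le f(\Delta(n),\lceil n/\Delta(n)\rceil)=o(n)=o(s)$ using the hypothesis for the \emph{original} $\Delta$ and nothing else, giving the $2^{o(s)}s^{O(1)}$ algorithm for $3$-SAT. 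The rest of your setup — the composition with the linear $3$-SAT$\to$$3$-Set Splitting reduction, the use of Lemma~\ref{lem:hard}(i)--(iii), the bounds $|V(G)|=s^{O(1)}$ and $\log W_G=O(\delta)$, and the reduction to showing $f(\Delta_G,\pw_G)=o(s)$ — is correct as written.
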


\begin{proof} We essentially repeat the proof of Theorem~\ref{thm:eth1}, but set
$\delta=\Delta(n)$ rather than $\delta=\log n$. \end{proof}

To demonstrate why Theorem~\ref{thm:cool} provides a stronger hardness result,
consider the examples of the following corollary, which are not ruled out by
Theorem~\ref{thm:eth1} but can be ruled out thanks to the flexibility of
Theorem~\ref{thm:cool}.

\begin{corollary}\label{cor:cool} Let $c\ge 2,\varepsilon>0$ be real constants.
Then, assuming the ETH, \msc\ cannot be solved in time
$2^{O(\Delta^{2-\varepsilon}+\pw^{2-\varepsilon})}(n+\log W)^{O(1)}$, nor in
$2^{O(\Delta^{c}+\pw^{\frac{c}{c-1}-\varepsilon})}(n+\log W)^{O(1)}$, nor in
$2^{O(\Delta^{\frac{c}{c-1}-\varepsilon}+\pw^{c})}(n+\log W)^{O(1)}$.
\end{corollary}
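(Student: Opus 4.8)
The plan is to obtain the corollary as a direct application of \cref{thm:cool}: in each of the three cases I would exhibit a suitable rate function $\Delta(n)$ to which that theorem applies. Throughout I take $\Delta(n)=\lceil n^{\alpha}\rceil$ for an appropriate rational constant $\alpha\in(0,1)$; such a function is constructible and increasing, satisfies $\Delta(n)=\omega(1)$ and $\Delta(n)=o(n)$, and gives $\lceil n/\Delta(n)\rceil=\Theta(n^{1-\alpha})$. It therefore suffices to verify, in each case, that the exponent occurring in the hypothesised running time, when evaluated at $(\Delta,\pw)=(n^{\alpha},n^{1-\alpha})$, is $o(n)$; \cref{thm:cool} then forces the ETH to be false.

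For the symmetric bound $2^{O(\Delta^{2-\epsilon}+\pw^{2-\epsilon})}(n+\log W)^{O(1)}$ I would take $\alpha=\frac12$, so that $\Delta=\pw=\Theta(\sqrt n)$ and the exponent is $O(n^{(2-\epsilon)/2})=O(n^{1-\epsilon/2})=o(n)$, which already gives the contradiction.

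For $2^{O(\Delta^{c}+\pw^{\frac{c}{c-1}-\epsilon})}(n+\log W)^{O(1)}$ the idea is to pick $\alpha$ slightly below the balanced value $1/c$, say $\alpha=\frac1c-\eta$ with $\eta>0$ a small constant to be fixed at the end. Then the first term contributes exponent $\alpha c=1-\eta c<1$, while the second contributes $(1-\alpha)(\frac{c}{c-1}-\epsilon)$; writing $1-\alpha=\frac{c-1}{c}+\eta$, this equals $1-\frac{(c-1)\epsilon}{c}+\eta(\frac{c}{c-1}-\epsilon)$, which is strictly less than $1$ once $\eta$ is chosen small enough (and if $\frac{c}{c-1}-\epsilon\le 0$ the $\pw$-term is already $O(1)$, so any $\alpha<1/c$ works). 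Hence both exponents lie below $1$, the overall exponent is $o(n)$, and the ETH fails. The third bound $2^{O(\Delta^{\frac{c}{c-1}-\epsilon}+\pw^{c})}(n+\log W)^{O(1)}$ is symmetric: I would take $\alpha=1-\frac1c+\eta$, so that $\pw=\Theta(n^{1/c-\eta})$ plays the role of the small parameter, and repeat the same estimate with $\Delta$ and $\pw$ exchanged.

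The only step that is not purely mechanical is checking, in the two asymmetric cases, that a single exponent $\alpha$ simultaneously keeps both additive terms subpolynomial in $n$ --- the short constant-chasing above. There is no genuine obstacle here: the balanced choice $\alpha=1/c$ makes both terms exactly linear in $n$, and the $-\epsilon$ slack in the sub-dominant term is precisely what allows $\alpha$ to be perturbed away from $1/c$ while keeping both terms below $n$.
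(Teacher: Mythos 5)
Your proposal is correct and follows essentially the same route as the paper: invoke Theorem~\ref{thm:cool} with $\Delta(n)=n^{\alpha}$, take $\alpha=\tfrac12$ for the first bound and perturb $\alpha$ away from the balanced value $1/c$ for the other two. The only difference is stylistic: where you leave the perturbation $\eta$ as ``some small enough constant'' and argue the inequality $1-\tfrac{(c-1)\epsilon}{c}+\eta(\tfrac{c}{c-1}-\epsilon)<1$ holds for $\eta$ small, the paper commits to the concrete choice $\eta=\epsilon/4$ and then verifies the resulting exponent is below $1$ by a direct sign check on $-3c^2+8c-4$ (which is where the hypothesis $c\ge 2$ enters their bookkeeping). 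Your version is marginally cleaner because the deficit $\tfrac{(c-1)\epsilon}{c}>0$ in the leading term already makes the conclusion evident for sufficiently small $\eta$ without any further algebra; both are valid instantiations of the same argument.
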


\begin{proof} For the first claim, we invoke Theorem~\ref{thm:cool} with
$\Delta(n)=\sqrt{n}$. If we assume that there is an algorithm for \msc\ with
running time $2^{O(\Delta^{2-\varepsilon}+\pw^{2-\varepsilon})}(n+\log
W)^{O(1)}$, then we have that the function $f$ in Theorem~\ref{thm:cool} is
$f(\Delta,\pw)=\Delta^{2-\varepsilon}+\pw^{2-\varepsilon}$. Then,
$f(\sqrt{n},\sqrt{n}) = O(n^{1-\varepsilon/2})=o(n)$, so this running time is
excluded by Theorem~\ref{thm:cool}.  For the second claim we set
$\Delta(n)=n^{1/c-\varepsilon/4}$ and we have
$f(\Delta(n),n/\Delta(n))=O(n^{1-c\varepsilon/4}+n^{(\frac{c}{c-1}-\varepsilon)(\frac{c-1}{c}+\frac{\varepsilon}{4})})=o(n)$.
To see this, note that we have $\frac{c}{4(c-1)}-\frac{c-1}{c} =
\frac{c^2-4(c-1)^2}{4c(c-1)}$ and $c^2-4(c-1)^2 = -3c^2+8c-4 \le 0$ when $c\ge
2$. The last claim is symmetrically obtained by exchanging the roles of
$\Delta$ and $\pw$. \end{proof}

Thanks to the above corollary we can now much more convincingly argue that the
algorithm of Theorem~\ref{thm:algdelta} is optimal. Take any algorithm where the
exponent of the parameter can additively separate the contribution of $\pw$ and
$\Delta$. If the contribution is at least quadratic in both parameters, then
the running time is higher than $\pw\Delta$, since $\pw^2+\Delta^2>\pw\Delta$.
However, the contribution cannot be sub-quadratic in both parameters, by the
first statement of Theorem~\ref{cor:cool}. Suppose then that the additive contribution
of $\pw$ to the exponent is $\pw^c$, for some $c\ge 2$ (the case where the
contribution of $\Delta$ is at least quadratic is symmetric). Then, by
Corollary~\ref{cor:cool}, the exponent must be at least $\Delta^{\frac{c}{c-1}}+\pw^c$.
We claim that $\Delta^{\frac{c}{c-1}}+\pw^c>\Delta\pw$. To see this, define a
new variable $\alpha$ and set $\pw=\Delta^{\frac{1}{c-1}}\cdot\alpha$. We have
$\Delta^{\frac{c}{c-1}}+\pw^c = \Delta^{\frac{c}{c-1}}(1+\alpha^c)$ while
$\Delta\pw = \Delta^{\frac{c}{c-1}}\alpha$. We claim that $1+\alpha^c>\alpha$
for any $\alpha>0$ and $c\ge 2$, since if $\alpha<1$ the inequality is trivial,
while if $\alpha\ge 1$ we have $\alpha^c\ge\alpha$. We conclude that in all
cases the exponent has to be at least as high as $\Delta\pw$, so the algorithm
of Theorem~\ref{thm:algdelta} is essentially optimal.

\section{Approximately Stable Cuts}

In this section we present an algorithm which runs in FPT time parameterized by
treewidth and produces a solution that is almost stable
($(1+\varepsilon)$-stable, as defined below) and has weight upper bounded by
the weight of the optimal stable cut.  Before we proceed, we will need to
define a more general version of our problem. In \textsc{Extended} \msc\ we are
given as input: a graph $G=(V,E)$; a cut-weight function $w:E\to\mathbb{N}$;
and a stability-weight function $s:E\times V \to \mathbb{N}$. For $v\in V$ we
denote $d_s(v)=\sum_{vu\in E} s(vu,v)$, which we call the stability degree of
$v$. If we are also given an error parameter $\rho>1$, we will then be looking
for a partition of $V$ into $V_0,V_1$ which satisfies the following: (i) each
vertex is $\rho$-stable, that is, for each $i\in\{0,1\}$ and $v\in V_i$ we have
$\sum_{vu\in E\land u\in V_{1-i}} s(vu,v) \ge \frac{d_s(v)}{2\rho}$ (ii) the
total cut weight $\sum_{u\in V_0, v\in V_1, uv\in E} w(uv)$ is minimum. Observe
that this extended version of the problem contains \msc\ as a special case if
$\rho=1$ and for all $uv\in E$ we have $s(uv,v)=s(uv,u)=w(uv)$. 

The generalization of \msc\ is motivated by three considerations.  First, the
algorithm of Theorem \ref{thm:algpseudo} is inefficient because it has to store
exact weight values to satisfy the stability constraints; however, it can
efficiently store the total weight of the cut. We therefore decouple the
contribution of an edge to the size of the cut (given by $w$) from a
contribution of an edge to the stability of its endpoints (given by $s$).
Second, our strategy will be to truncate the values of $s$ so that the DP of
the algorithm of Theorem \ref{thm:algpseudo} can be run more efficiently. To do
this we will first simply divide all stability-weights by an appropriate value.
However, a problem we run into if we do this is that the edge $uv$ could
simultaneously be one of the heavier edges incident on $u$ and one of the
lighter edges incident on $v$, so it is not clear how we can adjust its weight
in a way that minimizes the distortion for both endpoints. As a result it is
simpler if we allow edges to contribute different amounts to the stability of
their endpoints. In this sense, $s(uv,u)$ is the amount that the edge $uv$
contributes to the stability of $u$ if the edge is cut. Observe that with the
new definition, if we set a new stability-weight function for a specific vertex
$u$ as $s'(uv,v) = c\cdot s(uv,v)$ for all $v\in N(u)$, that is, if we multiply
the stability-weight of all edges incident on $u$ by a constant $c$ and leave
all other values unchanged, we obtain an equivalent instance, and this does not
affect the stability of other vertices.  Finally, the parameter $\rho$
allows us to consider solutions where a vertex is stable if its cut incident
edges are at least a $(\frac{1}{2\rho})$-fraction of its stability
degree.

Armed with this intuition we can now explain our approach to obtaining our FPT
approximation algorithm. Given an instance of the extended problem, we first
adjust the $s$ function so that its maximum value is bounded by a polynomial in
$n$. We achieve this by dividing $s(uv,u)$ by a value that depends only on
$d_s(u)$ and $n$. This allows us to guarantee that near-stable solutions are
preserved. Then, given an instance where the maximum value of $s$ is
polynomially bounded, we apply the technique of \cite{Lampis14}, using the
algorithm of Theorem \ref{thm:algpseudo} as a base, to obtain our
approximation. We give these separate steps in the Lemmas below.

\begin{lemma}\label{lem:algapprox1} There is an algorithm which, given a graph
$G=(V,E)$ on $n$ vertices and a stability-weight function $s:E\times
V\to\mathbb{N}$ with maximum value $S$, runs in time polynomial in $n+\log S$
and produces a stability-weight function $s':E\times V\to\mathbb{N}$ with the
following properties: (i) the maximum value of $s'$ is $O(n^2)$ (ii) for all
partitions $V$ into $V_0,V_1$, $i\in\{0,1\}$, $v\in V_i$ we have 

$$ \left| \frac{\sum_{vu\in E, u\in V_{1-i}} s(vu,v)}{d_s(v)} -  \frac{\sum_{vu\in E,
u\in V_{1-i}} s'(vu,v)}{d_{s'}(v)} \right| \le \frac{1}{n} $$  

\end{lemma}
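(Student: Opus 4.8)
The plan is to obtain $s'$ from $s$ by a purely vertex-local rescaling. For each vertex $u\in V$ let $d_s(u)$ be its stability degree, set the scaling factor $\lambda_u=\max\{1,\,d_s(u)/n^2\}$, and define $s'(uv,u)=\lceil s(uv,u)/\lambda_u\rceil$ for every $v\in N(u)$. Since $s$ carries a separate value for each (edge, endpoint) pair, this can be performed independently at every vertex; and, as observed in the discussion preceding the lemma, multiplying all stability-weights directed into a single vertex $u$ by one common factor yields an equivalent instance as far as $u$ is concerned and leaves every other vertex untouched, so $s'$ is a legitimate output. The factor $\lambda_u$ depends only on $d_s(u)$ and $n$, and because $d_s(u)\le nS$ each value $s'(uv,u)=\lceil s(uv,u)\,n^2/d_s(u)\rceil$ can be computed with integer arithmetic on $O(\log n+\log S)$-bit numbers, so the construction runs in time polynomial in $n+\log S$ and $s'$ is integral.

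Property~(i) is immediate. If $d_s(u)\le n^2$ then $\lambda_u=1$, so $s'(uv,u)=s(uv,u)\le d_s(u)\le n^2$; otherwise $\lambda_u=d_s(u)/n^2$ and $s'(uv,u)=\lceil s(uv,u)\,n^2/d_s(u)\rceil\le n^2$, because $s(uv,u)\le d_s(u)$. Hence the maximum value of $s'$ is at most $n^2=O(n^2)$.

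For property~(ii), fix a partition $V_0,V_1$, an index $i$, and a vertex $v\in V_i$, and write $X=\{u:vu\in E,\ u\in V_{1-i}\}$, $A=\sum_{u\in X}s(vu,v)$, $D=d_s(v)$, and let $A',D'$ denote the corresponding sums taken with $s'$. If $\lambda_v=1$ then $A'=A$ and $D'=D$ and the ratio is exactly $1$, so assume $\lambda_v=D/n^2>1$ and put $t=n^2/D\le 1$. For each edge at $v$ we have $s(vu,v)\,t\le s'(vu,v)<s(vu,v)\,t+1$, so summing over all (at most $n-1$) edges at $v$ and using $Dt=n^2$ gives $n^2\le D'<n^2+n$, while summing over $X$ gives $At\le A'<At+|X|$ with $|X|<n$; moreover $At\le n^2$ since $A\le D$. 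It remains to compare the cut-fractions $A/D=At/n^2$ and $A'/D'$: substituting $A'=At+e$ and $D'=n^2+d$ with $0\le e<n$ and $0\le d<n$, the difference of the two fractions splits into one term governed by the numerator slack $e$ and one governed by the denominator slack $d$, and the point is that both can be made to fit inside the desired $(1\pm 1/n)$ window — crucially using that $e$ and $d$ are produced by rounding the \emph{same} family of values $\{s(vu,v)\,t\}_{u\in N(v)}$, so that they move together in the quotient. Degenerate cases, such as $A=0$ (which forces $A'=0$) or $d_s(v)=0$, are trivially consistent.

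The main obstacle is exactly this final comparison. Taken on its own, the additive rounding error of up to $n-1$ that appears in $A'$ is far larger than $A'$ itself can be when $v$ has only a few light cut-edges, so no estimate that bounds the numerator and denominator errors in isolation can possibly succeed; what rescues the argument is the correlation between the two errors, which forces substantial cancellation once one forms $A'/D'$. Quantifying this cancellation carefully enough to land inside $[1-1/n,\,1+1/n]$ rather than a cruder interval is the delicate step; everything else — the per-vertex reduction, the bound on $D'$, the time analysis, and gluing the vertex-local rescalings together over all of $V$ — is routine bookkeeping.
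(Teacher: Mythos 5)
Your proposal rescales each $s(vu,v)$ by $\lambda_v = \max\{1, d_s(v)/n^2\}$, i.e.\ by the \emph{stability degree} $d_s(v)$, whereas the paper rescales by $S(v)=\max_{u\in N(v)} s(vu,v)$, the \emph{maximum single stability-weight} at $v$. This is a genuinely different normalizer, and the difference is fatal. You also explicitly leave the last step --- actually landing inside $[1-1/n,\,1+1/n]$ --- unfinished, proposing to close it via a ``cancellation'' between the additive rounding errors in the cut-sum and in the degree. That cancellation does not exist.

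To make this concrete, follow your own notation (with $D=d_s(v)>n^2$, $t=n^2/D$, $A'=At+e$, $D'=n^2+d$, $0\le e\le d<n$):
$$\frac{A/D}{A'/D'}=\frac{(At/n^2)\,D'}{A'}\ \ge\ \frac{At}{At+n}\ =\ \frac{1}{1+D/(nA)},$$
and this lower bound meets $1-1/n$ only when $A/D\ge (n-1)/n$, i.e.\ essentially all of $v$'s stability weight must be cut --- a condition almost never satisfied. The cancellation you hope for would require the per-edge rounding increments $\lceil s(vu,v)t\rceil - s(vu,v)t$ to be proportional to $s(vu,v)t$, but they are arbitrary values in $[0,1)$ and on a light edge the increment swamps the true rescaled weight. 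A concrete failure: $v$ has one cut edge of weight $1$ and $n-1$ uncut edges of weight $W\gg n^2$; then $A/D\approx 1/((n-1)W)$, while after your rescaling $s'$ of the cut edge is $\lceil t\rceil = 1$ and $D'\approx n^2$, so $A'/D'\approx 1/n^2$ and $(A/D)\big/(A'/D')\to 0$ as $W\to\infty$, far outside the window.

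The paper's choice of $S(v)$ is what controls this. Repeating the same two bounds with $S(v)$ as normalizer, the lower bound becomes $1/\bigl(1+S(v)/(nA)\bigr)$, which is within $1-1/n$ whenever $A\gtrsim S(v)$. This is a much weaker requirement: it suffices that $v$ has one cut edge near the maximum weight, which holds automatically for stable and near-stable partitions (where $A\ge d_s(v)/(2(1+\varepsilon))\ge S(v)/(2(1+\varepsilon))$). With $d_s(v)$ as normalizer, the corresponding requirement $A\gtrsim d_s(v)$ cannot be met except in degenerate cases. So the missing ``delicate step'' is not a matter of more careful bookkeeping; the rescaling factor itself must be changed to $S(v)$ as in the paper.
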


\begin{proof} For $v\in V$ let $S(v) = \max_{u\in N(v)} s(vu,v)$. We define
$s'$ as follows: $s'(vu,v) = \lceil \frac{n^2 s(vu,v)}{S(v)} \rceil$. It is
clear that the maximum value of $s'$ is $n^2$ and that calculations can be
carried out in the promised time. So what remains is to prove that for any
partition the fraction $\frac{\sum_{vu\in E,u\in V_{1-i}} s(vu,v)}{d_s(v)}$
stays essentially unchanged. In particular we will prove that:

$$ \frac{\sum_{vu\in E, u\in V_{1-i}} s(vu,v)}{d_s(v)} - \frac{1}{n} \le
\frac{\sum_{vu\in E, u\in V_{1-i}} s'(vu,v)}{d_{s'}(v)} \le \frac{\sum_{vu\in
E, u\in V_{1-i}} s(vu,v)}{d_s(v)} + \frac{1}{n} $$

Observe that $\frac{n^2 s(vu,v)}{S(v)} \le s'(vu,v) \le \frac{n^2
s(vu,v)}{S(v)}+1$. We therefore have 

$$ \frac{n^2 d_s(v)}{S(v)} \le d_{s'}(v) \le \frac{n^2 d_s(v)}{S(v)}+n$$

We also have:

$$ \frac{n^2 \sum_{vu\in E, u\in V_{1-i}} s(vu,v)}{S(v)} \le \sum_{vu\in E,
u\in V_{1-i}} s'(vu,v) \le \frac{n^2 \sum_{vu\in E, u\in V_{1-i}}
s(vu,v)}{S(v)}+n$$

In both cases we have used the fact that the degree of $v$ is at most $n$. Now
with some calculation we get:

$$ \frac{\sum_{vu\in E, u\in V_{1-i}} s(vu,v)}{d_s(v) + \frac{S(v)}{n}}\le
\frac{\sum_{vu\in E, u\in V_{1-i}} s'(vu,v)}{d_{s'}(v)} \le \frac{\sum_{vu\in
E, u\in V_{1-i}} s(vu,v) + \frac{S(v)}{n}}{d_s(v)} $$

Now, from the second part of the above inequality we get:

$$ \frac{\sum_{vu\in E, u\in V_{1-i}} s'(vu,v)}{d_{s'}(v)} \le \frac{\sum_{vu\in
E, u\in V_{1-i}} s(vu,v)}{d_s(v)}+ \frac{S(v)}{nd_s(v)} \le \frac{\sum_{vu\in
E, u\in V_{1-i}} s(vu,v)}{d_s(v)}+ \frac{1}{n} $$

where we took into account that $S(v)\le d_s(v)$. On the other hand, taking
into account that $\frac{1}{1+\frac{x}{n}}>1-\frac{x}{n}$ for all positive
$x,n$, the first part of the inequality gives:

$$ \frac{\sum_{vu\in E, u\in V_{1-i}} s'(vu,v)}{d_{s'}(v)} \ge
\frac{\sum_{vu\in E, u\in V_{1-i}} s(vu,v)}{d_s(v)(1 + \frac{S(v)}{nd_s(v)})}
\ge \frac{\sum_{vu\in E, u\in V_{1-i}} s(vu,v)}{d_s(v)} \cdot
(1-\frac{S(v)}{nd_s(v)}) $$

Again using that $S(v)\le d_s(v)$ and that $\frac{\sum_{vu\in E, u\in V_{1-i}}
s(vu,v)}{d_s(v)} \le 1$ we get $ \frac{\sum_{vu\in E, u\in V_{1-i}}
s'(vu,v)}{d_{s'}(v)} \ge \frac{\sum_{vu\in E, u\in V_{1-i}} s(vu,v)}{d_s(v)} -
\frac{1}{n}$.  \end{proof}

Using Lemma \ref{lem:algapprox1} we can assume that all stability-weights are
bounded by $O(n^2)$. The most important part is that Lemma \ref{lem:algapprox1}
guarantees us that almost-optimal solutions are preserved in both directions,
as for any cut and for each vertex the ratio of stability weight going to the
other side over the total stability-degree of the vertex does not change by
more than an additive term of $\frac{1}{n}$.  Let us now see the second
ingredient of our algorithm.

\begin{lemma}\label{lem:algapprox2} There is an algorithm which takes as input
a graph $G=(V,E)$, a cut-weight function $w:E\to\mathbb{N}$ with maximum $W$, a
stability-weight function $s:E\times V\to \mathbb{N}$ with maximum $S$, a tree
decomposition of $G$ of width $\tw$, and an error parameter $\varepsilon\in
(0,1)$ and returns a $(1+2\varepsilon)$-stable solution that has cut-weight at
most equal to that of the minimum $(1+\varepsilon)$-stable solution.  If
$S=O(n^2)$, then the algorithm runs in time $(\tw/\varepsilon)^{O(\tw)}(n+\log
W)^{O(1)}$.  \end{lemma}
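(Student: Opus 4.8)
The plan is to execute the dynamic programming algorithm of \cref{thm:algpseudo} on a rebalanced tree decomposition, while storing the stability information only approximately via the rounding technique of \cite{Lampis14}. Since $S=O(n^2)$, we have $d_s(v)\le\Delta\cdot S=O(n^3)$ for every $v\in V$, so all quantities relevant to the stability constraints are polynomially bounded; this is what makes the rounding effective. First I would replace the given decomposition by a nice one of width $O(\tw)$ and depth $D=O(\tw\log n)$, which is possible in polynomial time by standard balancing arguments.

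Then I would run the algorithm of \cref{thm:algpseudo}, adapted to \textsc{Extended} \msc\ in the natural way: the quantity being minimised, the cut weight $\sum w$, is kept \emph{exactly} (a single integer of $O(\log n+\log W)$ bits attached to each signature, updated exactly as in \cref{thm:algpseudo}), whereas the per-vertex data recorded in a signature --- the total stability weight $s$ that a vertex already sends to the \emph{opposite} side among the forgotten vertices --- is kept only approximately, as a rounded value $\tilde c_v$. The permissible values of $\tilde c_v$ are $0$ together with the floors of $(1+\delta)^j$ up to $\Delta S$, for a granularity $\delta=\Theta\!\big(\varepsilon/((1+\varepsilon)D)\big)$; there are only $|R|=O\!\big(\frac{\log(\Delta S)}{\delta}\big)=O\!\big(\frac{\tw\log^2 n}{\varepsilon}\big)$ of them, and rounding down incurs relative error at most $1+\delta$. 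Whenever a transition of \cref{thm:algpseudo} would increase such a counter (at a Forget node removing $u$: add the exact term $s(uv,v)$ to $\tilde c_v$ if $u,v$ are on opposite sides) or add two of them together (at a Join node), I round the result down to the nearest permissible value. The one further modification is the stability test at a Forget node for $v$: rather than checking an exact inequality, I form the approximate opposite-side stability weight of $v$ (namely $\tilde c_v$ plus the exact contribution of the edges from $v$ to vertices currently in the bag that lie on the opposite side) and keep the signature if and only if this value is at least $\frac{d_s(v)}{2(1+2\varepsilon)}$. A partition attaining the value stored at the empty root bag is recovered by the usual top-down pass.

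For correctness, a routine induction over the decomposition --- using $\lfloor x+y\rfloor_\delta\ge\frac{x+y}{1+\delta}$ and that adding an exact term does not increase the relative error --- shows that each stored counter is a lower bound on the exact quantity, underestimating it by a factor at most $(1+\delta)^{D}\le 1+\varepsilon'$ where $\varepsilon'=\frac{\varepsilon}{2(1+\varepsilon)}$, by the choice of $\delta$. Two consequences then finish the proof. First, if a partition is genuinely $(1+\varepsilon)$-stable, then at every $v$ its true opposite-side stability weight is $\ge\frac{d_s(v)}{2(1+\varepsilon)}$, hence the value the DP computes for it is $\ge\frac{d_s(v)}{2(1+\varepsilon)(1+\varepsilon')}\ge\frac{d_s(v)}{2(1+2\varepsilon)}$ since $(1+\varepsilon)(1+\varepsilon')=1+\frac{3\varepsilon}{2}\le 1+2\varepsilon$; so the rounded trace of this partition passes every test, and the value at the root is at most the cut weight of the minimum $(1+\varepsilon)$-stable solution. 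Second, any partition that survives all tests has, at every $v$, true opposite-side stability weight $\ge$ the approximate value $\ge\frac{d_s(v)}{2(1+2\varepsilon)}$, i.e.\ it is $(1+2\varepsilon)$-stable; in particular so is the recovered partition. For the running time, there are $O(n)$ nodes, each carrying at most $(2|R|)^{O(\tw)}$ signatures, and processing a node (notably a Join) costs the square of this times $\mathrm{poly}(n,\log W)$; since $(\log n)^{O(\tw)}\le\tw^{O(\tw)}n^{O(1)}$, the total collapses to $(\tw/\varepsilon)^{O(\tw)}(n+\log W)^{O(1)}$.

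The step I expect to be the main obstacle is controlling the accumulated rounding error. In an arbitrary decomposition a single counter may be modified $\Omega(n)$ times, which would force $\delta=\Theta(\varepsilon/n)$ and leave polynomially many counter values, killing the FPT bound; the preliminary rebalancing to depth $D=O(\tw\log n)$ is precisely what caps the error at a factor $(1+\delta)^{D}$, and one then has to tune $\delta$ against the two thresholds $\frac{1}{2(1+\varepsilon)}$ and $\frac{1}{2(1+2\varepsilon)}$ so that both directions of the correctness argument hold simultaneously --- which works only because there is slack between $(1+\varepsilon)$ and $(1+2\varepsilon)$.
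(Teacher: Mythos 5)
Your proof follows essentially the same approach as the paper: balance the tree decomposition, run the DP of \cref{thm:algpseudo} on the extended problem with stability-weights rounded to powers of $(1+\delta)$, and choose $\delta$ inversely proportional to the depth so the accumulated multiplicative error over the whole decomposition stays below the $\varepsilon$-slack between $(1+\varepsilon)$- and $(1+2\varepsilon)$-stability. The only differences are cosmetic and dual to the paper's choices: you track opposite-side stability and round down (so stored values are lower bounds), whereas the paper tracks own-side stability and rounds up (so stored values are upper bounds); both yield the same two-sided correctness. One small point in your favour: you use depth $D=O(\tw\log n)$ for the \emph{nice} decomposition rather than the $O(\log n)$ the paper attributes to \cite{BodlaenderH98} — the latter bound applies before the conversion to nice form, and introducing/forgetting vertices one at a time multiplies the depth by $O(\tw)$; your bound is the more conservative one, and as you note it still collapses to $(\tw/\varepsilon)^{O(\tw)}(n+\log W)^{O(1)}$ after the Win/Win step $(\log n)^{O(\tw)}\le \tw^{O(\tw)}n^{O(1)}$.
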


\begin{proof} 
We use the methodology of \cite{Lampis14}. We assume we are given a nice tree
decomposition of height $H$, where the height of a decomposition is defined as
the longest distance from the root to a leaf. We will formulate an algorithm
with running time $(H\log S/\varepsilon)^{O(\tw)}(n+\log W)^{O(1)}$. We explain
in the end why this is sufficient to obtain the claimed running time, which
does not explicitly depend on $H$ or $S$.

The basis of our algorithm will be the algorithm of Theorem
\ref{thm:algpseudo}, appropriately adjusted to the extended version of the
problem. Let us first sketch the modifications to the algorithm of Theorem
\ref{thm:algpseudo} that we would need to do to solve this more general
problem, since the details are straightforward. First, we observe that in
solution signatures we would now take into account stability-weights, and
signatures would have values going up to $\Delta S$, where $\Delta$ is the
maximum (unweighted) degree.  Second, in Forget nodes, recall that the
algorithm of Theorem \ref{thm:algpseudo} discards (skips) signatures which
correspond to unstable solutions because the vertex we are forgetting has too
large weight towards its own side.  Since we are happy with a
$(1+\varepsilon)$-stable solution, we only discard solutions which violate this
constraint, that is, where the vertex we forget has a clear majority of its
incident weight (more than a $(1+\varepsilon)\frac{1}{2}$ fraction) to its own
side.  With these modifications, we can run this exact algorithm to return the
minimum $(1+\varepsilon)$-stable solution in time $(\Delta S)^{O(\tw)}(n+\log
W+\log(1/\varepsilon))^{O(1)}$.

The idea is to modify this algorithm so that the DP tables go from size
$(2\Delta S)^{\tw}$ to roughly $(H\log S)^{\tw}$. To do this, we define a
parameter $\delta = \frac{\varepsilon}{5H}$. We intend to replace every value
$x$ that would be stored in the signature of a solution in the DP table, with
the next larger integer power of $(1+\delta)$, that is, to construct a DP table
where $x$ is replaced by $(1+\delta)^{\lceil \log_{(1+\delta)} x \rceil}$.

More precisely, the invariant we maintain is the following. Consider a node $t$
of the decomposition at height $h$, where $h=0$ corresponds to leaves. We
maintain a collection of solution signatures such that: (i) each signature
contains a partition of $B_t$ and for each $v\in B_t$ an integer that is
upper-bounded by $\lceil\log_{(1+\delta)} d_s(v)\rceil$; (ii) Soundness: for
each stored signature there exists a partition of $B^{\downarrow}_t$ which
approximately corresponds to it. Specifically, the partition and the signature
agree exactly on the assignment of $B_t$ and the total cut-weight; the
partition is $(1+2\varepsilon)$-stable for all vertices of
$B^{\downarrow}_t\setminus B_t$; and for each $v\in B_t$, if the signature
stores the value $x(v)$ for $v$, that is, it states that $v$ has approximate
stability-weight $(1+\delta)^{x(v)}$ towards its own side in
$B^{\downarrow}_t\setminus B_t$, then in the actual partition the
stability-weight of $v$ to its own side of $B^{\downarrow}_t\setminus B_t$ is
at most $(1+\delta)^h(1+\delta)^{x(v)}$. (iii) Completeness: conversely, for
each partition of $B^{\downarrow}_t$ that is $(1+\varepsilon)$-stable for all
vertices of $B^{\downarrow}_t\setminus B_t$ there exists a signature that
approximately corresponds to it. Specifically, the partition and signature
agree on the assignment of $B_t$ and the total cut-weight; and for each $v\in
B_t$, if the stability-weight of $v$ towards its side of the partition of
$B^{\downarrow}_t\setminus B_t$ is $y(v)$, and the signature stores the value
$x(v)$, then $(1+\delta)^{x(v)}\le (1+\delta)^h y(v)$.

In more simple terms, the signatures in our DP table store values $x(v)$ so
that we estimate that in the corresponding solution $v$ has approximately
$(1+\delta)^{x(v)}$ weight towards its own side in $B_t^{\downarrow}$, that is,
we estimate that the DP of the exact algorithm would store approximately the
value $(1+\delta)^{x(v)}$ for this solution. Of course, it is hard to maintain
this relation exactly, so we are happy if for a node at height $h$ the ``true''
value which we are approximating is at most a factor of $(1+\delta)^h$ off from
our approximation.

Now, the crucial observation is that the approximate DP tables can be
maintained because our invariant allows the error to increase with the height.
For example, suppose that $t$ is a Forget node at height $h$ and let $u\in B_t$
be a neighbor of the vertex $v$ we forget. The exact algorithm would construct
the signature of a solution in $t$ by looking at the signature of a solution in
its child node, and then adding to the value stored for $u$ the weight
$s(vu,u)$ (if $u,v$ are on the same side). Our algorithm will take an
approximate signature from the child node, which may have a value at most
$(1+\delta)^{h-1}$ the correct value, add to it $s(vu,u)$ and then, perhaps,
round-up the value to an integer power of $(1+\delta)$. The new approximation
will be at most $(1+\delta)^h$ larger than the value that the exact algorithm
would have calculated. To give more details, if the ``correct'' value that the
exact algorithm would have calculated at a node is $x$, our approximation
algorithm aims to store a value $\hat{x}$ satisfying $x\le \hat{x}\le
(1+\delta)^{h-1}x$. Where the exact algorithm would have calculated a new value
$z:=x+y$ (where $y$ is the weight of some edge incident on the vertex we
forget), our algorithm calculates $\hat{x}+y$, for which it clearly holds that
$z=x+y \le \hat{x}+y \le (1+\delta)^{h-1}(x+y)$.  However, because $\hat{x}+y$
is (probably) not an integer power of $(1+\delta)$, our algorithm will round up
$\hat{x}+y$ into a new value $\hat{z}$. Since rounding up in this way can
increase the stored value by at most a factor $(1+\delta)$, we have $z\le
\hat{z}\le (1+\delta)^hz$, so by induction all the approximate values we
compute are lower bounded by the correct values and are at most a factor
$(1+\delta)^h$ away from them.

Similar argumentation holds for Join nodes. Furthermore, in Forget nodes we
will only discard a solution if according to our approximation it is not
$(1+2\varepsilon)$-stable. We may be over-estimating the stability-weight a
vertex has to its own side of the cut by a factor of at most $(1+\delta)^h \le
(1+\frac{\varepsilon}{5H})^H \le 1+\frac{\varepsilon}{2}$ so if for a signature
our approximation says that the solution is not $(1+2\varepsilon)$-stable, the
solution cannot be $(1+\varepsilon)$-stable, because
$(1+\varepsilon)(1+\frac{\varepsilon}{2})<1+2\varepsilon$ (for
$\varepsilon<1$).  

Finally, to estimate the running time, the maximum value we have to store for
each vertex in a bag is $\log_{(1+\delta)} (\Delta S) \le \frac{\log n S}{\log
(1+\delta)} \le O(\frac{\log n}{\delta}) = O(\frac{H\log n}{\varepsilon})$. The
size of the DP table is therefore $(H\log n/\varepsilon)^{O(\tw)}$ and the
running time is polynomial in this size and the input size.

Let us now explain how we go from the above bound to the running time stated.
First, a theorem due to \cite{BodlaenderH98} proves that any tree decomposition
can be edited (in polynomial time) so that its height becomes $O(\log n)$,
without increasing its width by more than a constant factor. Further editing
the decomposition so that it becomes nice may further increase the height by a
factor of $\tw$, so we can assume that $H=O(\tw\log n)$. Therefore, we have a
running time of the form $(\tw\log n/\varepsilon)^{O(\tw)}$.

Finally, to obtain an upper bound on the running time that matches the promised
bound, we use a standard Win/Win argument. First, suppose that  $\tw\le
\sqrt{\log n}$. In this case, $(\tw\log n/\varepsilon)^{O(\tw)} = n^{o(1)}$,
and the whole algorithm runs in polynomial time. So the interesting case is
when $\log n\le \tw^2$. But then, the running time can be bounded by
$(\tw/\varepsilon)^{O(\tw)}(n+\log W)^{O(1)}$, as promised.  \end{proof}

\begin{theorem}\label{thm:algapprox} There is an algorithm which, given an
instance of \msc\ $G=(V,E)$ with $n$ vertices, maximum weight $W$, a tree
decomposition of width $\tw$, and a desired error $\varepsilon>0$, runs in time
$(\tw/\varepsilon)^{O(\tw)}(n+\log W)^{O(1)}$ and returns a cut with the
following properties: (i) for all $v\in V$, the total weight of edges incident
on $v$ crossing the cut is at least $(1-\varepsilon)\frac{d_w(v)}{2}$ (ii) the
cut has total weight at most equal to the weight of the minimum stable cut.
\end{theorem}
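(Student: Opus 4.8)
The plan is to obtain the algorithm as a black-box composition of Lemma~\ref{lem:algapprox1} and Lemma~\ref{lem:algapprox2}. First, recast the given \msc\ instance $(G,w)$ as an instance of \textsc{Extended} \msc: keep the graph $G$ and the cut-weight function $w$, set the stability-weight $s(uv,u)=s(uv,v)=w(uv)$ for every edge $uv$, and take $\rho=1$. As already observed, the feasible solutions of this instance are exactly the stable cuts of $(G,w)$, they have the same cut-weights, and $d_s(v)=d_w(v)$ for all $v$. Then apply Lemma~\ref{lem:algapprox1} to $s$ to obtain, in time $\mathrm{poly}(n+\log W)$, a stability-weight function $s'$ with maximum value $O(n^2)$ such that for every cut and every vertex $v\in V_i$ the ratio $\bigl(\sum_{vu\in E,\,u\in V_{1-i}} s(vu,v)\bigr)/d_s(v)$ equals, up to a factor in $[1-1/n,1+1/n]$, the analogous ratio defined with $s'$. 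Finally, run the algorithm of Lemma~\ref{lem:algapprox2} on $G$ with cut-weight $w$, stability-weight $s'$, the given tree decomposition, and error parameter $\varepsilon'=\varepsilon/8$; since $S=O(n^2)$ is polynomially bounded this runs in time $(\tw/\varepsilon')^{O(\tw)}(n+\log W)^{O(1)}=(\tw/\varepsilon)^{O(\tw)}(n+\log W)^{O(1)}$ and returns a cut that is $(1+2\varepsilon')$-stable with respect to $s'$ and whose cut-weight is at most that of the cheapest cut that is $(1+\varepsilon')$-stable with respect to $s'$.

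The remaining work is purely the bookkeeping of the two multiplicative error sources. For property~(i): from $(1+2\varepsilon')$-stability with respect to $s'$ and the Lemma~\ref{lem:algapprox1} ratio (used in the direction giving a factor at least $1-1/n$), together with $s=w$ and $d_s=d_w$, one gets that for every $v\in V_i$ the cut weight at $v$ is at least $\tfrac{1-1/n}{1+2\varepsilon'}\cdot\tfrac{d_w(v)}{2}$, which is at least $(1-\varepsilon)\tfrac{d_w(v)}{2}$ once $1/n\le 3\varepsilon/4$. For property~(ii): a minimum stable cut of $(G,w)$ is $1$-stable with respect to $s$, hence (using the Lemma~\ref{lem:algapprox1} ratio in the other direction) $(1+1/n)$-stable with respect to $s'$, hence $(1+\varepsilon')$-stable with respect to $s'$ as soon as $1/n\le\varepsilon'$; therefore the cut returned by Lemma~\ref{lem:algapprox2} has cut-weight no larger than that of a minimum stable cut. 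So with $\varepsilon'=\varepsilon/8$ both properties hold whenever $1/n\le\varepsilon/8$.

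The one genuine obstacle is the small-$n$ regime $1/n>\varepsilon/8$ (equivalently $n<8/\varepsilon$), where the $(1\pm 1/n)$ distortion of Lemma~\ref{lem:algapprox1} is too coarse relative to $\varepsilon$. The hard part here is only to avoid a $W$-dependence in the exponent while still killing this distortion, and I would handle it by the obvious rescaled version of the proof of Lemma~\ref{lem:algapprox1}: with scaling parameter $N=\lceil 8/\varepsilon\rceil\ge n$ in place of $n$, i.e.\ $s'(vu,v)=\lfloor N^2 s(uv,v)/S(v)\rfloor$, the maximum value of $s'$ becomes $O(N^2)=\mathrm{poly}(1/\varepsilon)$ and the distortion becomes $[1-1/N,1+1/N]\subseteq[1-\varepsilon/8,1+\varepsilon/8]$ (the proof only uses that each vertex has at most $n\le N$ incident edges). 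Feeding this $s'$ into Lemma~\ref{lem:algapprox2} with error $\varepsilon/8$ yields the same two guarantees; for the running time one checks that $\log S=O(\log(n/\varepsilon))$ now contributes only an extra $(1/\varepsilon)^{O(\tw)}$ factor, which is already permitted by the target bound, while the $(\log n)^{O(\tw)}$ factor is absorbed by exactly the Win/Win step already used inside the proof of Lemma~\ref{lem:algapprox2}. Everything outside this error bookkeeping is a direct invocation of the two lemmas, so I do not anticipate any further difficulty.
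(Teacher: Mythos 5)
Your proof is correct and follows the same approach as the paper's: compose Lemma~\ref{lem:algapprox1} with Lemma~\ref{lem:algapprox2} after recasting \msc\ as the extended problem with $s=w$. You are more careful than the paper in two places --- you carry out the multiplicative-error bookkeeping explicitly (choosing $\varepsilon'=\varepsilon/8$), and you fill the gap that the paper glosses over with the phrase ``if $n$ is sufficiently large'' by rescaling with $N=\lceil 8/\varepsilon\rceil$ in the small-$n$ regime and rechecking that $\log S=O(\log(1/\varepsilon))$ keeps the running time within $(\tw/\varepsilon)^{O(\tw)}(n+\log W)^{O(1)}$; both refinements are sound.
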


\begin{proof}
We simply put together the algorithms of Lemmas \ref{lem:algapprox1} and
\ref{lem:algapprox2}. Fix an $\varepsilon>0$. If $n<\frac{100}{\varepsilon}$,
then we can try out all partitions in constant time (depending only on
$\varepsilon$), so assume that $n$ is larger than that. We execute the
algorithm of Lemma \ref{lem:algapprox1} so the weight of all cuts is preserved
(since we do not change $w$), and a stable cut remains at least
$(1+\varepsilon/2)$-stable, since $n$ is sufficiently large.  We therefore
execute the algorithm of Lemma \ref{lem:algapprox2} and this will output a
$(1+\varepsilon)$-stable cut with value at least as small as the minimum stable
cut.  \end{proof}

\section{Unweighted Min Stable Cut}

In this section we consider \umsc. We first observe that applying Theorem
\ref{thm:algpseudo} gives a parameter dependence of $\Delta^{O(\tw)}$, since
$W=1$.  We then show that this algorithm is essentially optimal, as the problem
cannot be solved in $n^{o(\pw)}$ under the ETH.

\begin{corollary}\label{cor:algdelta2} There is an algorithm which, given an
instance of \umsc\ with $n$ vertices, maximum degree $\Delta$, and a tree
decomposition of width $\tw$, returns an optimal solution in time
$\Delta^{O(\tw)} n^{O(1)}$.  \end{corollary}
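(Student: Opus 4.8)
The plan is to derive this as an immediate specialization of \cref{thm:algpseudo}. That theorem already provides an algorithm for \msc\ which, given a tree decomposition of width $\tw$, runs in time $(\Delta\cdot W)^{O(\tw)}n^{O(1)}$, where $W$ denotes the maximum edge weight. Since \umsc\ is by definition the restriction of \msc\ in which the weight function returns $1$ on every edge, and since a tree decomposition of width $\tw$ is supplied as part of the input (exactly as in \cref{thm:algpseudo}), the first step is simply to hand the given \umsc\ instance to that algorithm unchanged.

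Next I would substitute $W=1$ into the stated running time. This gives $(\Delta\cdot 1)^{O(\tw)}n^{O(1)}=\Delta^{O(\tw)}n^{O(1)}$, which is the claimed bound. Correctness and the complexity analysis are inherited verbatim: the notion of a stable cut and the objective of minimizing the number of cut edges coincide with the weighted definitions once all weights equal $1$, and the count of possible signatures per bag, $2^{|B_t|}(\max_v d_w(v))^{|B_t|}$, becomes $2^{\tw+1}(\Delta+1)^{\tw+1}=\Delta^{O(\tw)}$ because $d_w(v)$ now equals the ordinary degree of $v$, which is at most $\Delta$. The per-node updates remain polynomial in the table size, so the overall bound follows.

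There is no genuine obstacle to overcome here; the corollary is purely a bookkeeping observation about the parameter dependence of \cref{thm:algpseudo} in the special case $W=1$. The only point worth making explicit is that, unlike the general weighted case, this running time is now fully polynomial for fixed $\tw$ and $\Delta$, which is precisely what motivates the subsequent investigation (in \cref{thm:hard2}) of whether one can go further and remove the dependence on $\Delta$ altogether.
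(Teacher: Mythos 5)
Your proposal is correct and matches the paper's argument exactly: the corollary is obtained by running the algorithm of \cref{thm:algpseudo} on the unweighted instance and substituting $W=1$ into the $(\Delta\cdot W)^{O(\tw)}n^{O(1)}$ bound. The additional remark that the signature count collapses to $\Delta^{O(\tw)}$ because $d_w(v)$ becomes the ordinary degree is a fine (if unnecessary) elaboration of the same observation.
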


\begin{wrapfigure}{R}{0.3\textwidth}

\input{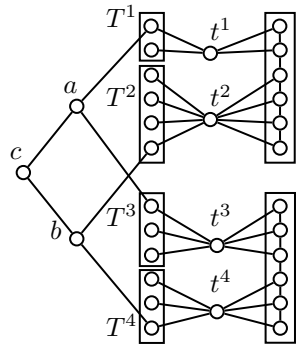}
\caption{Checker gadget for Theorem \ref{thm:hard2}. On the right two Selector gadgets. This Checker verifies that we have not taken an edge which has endpoints $(2,3)$, hence $t^1,t^3$ are connected to the first $2$ and $3$ vertices of the Selectors.}\label{fig:red2}

\end{wrapfigure}

We now first state our hardness result, then describe the construction of our
reduction, and finally go through a series of lemmas that establish its
correctness.

\begin{theorem}\label{thm:hard2} If the ETH is true then no algorithm can solve
\umsc\ on graphs with $n$ vertices in time $n^{o(\pw)}$. Furthermore, \umsc\ is
W[1]-hard parameterized by pathwidth. \end{theorem}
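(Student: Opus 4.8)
The plan is to give a polynomial-time reduction from \textsc{Multicolored Independent Set} to \umsc\ that increases the parameter only linearly, that is, produces an instance of pathwidth $O(k)$. Recall that in \textsc{Multicolored Independent Set} we are given a graph whose vertex set is partitioned into $k$ classes $V_1,\dots,V_k$, each of size $n$, and we must decide whether there is a set containing exactly one vertex from each class and inducing no edge. This problem is W[1]-hard parameterized by $k$ and, via the textbook reduction from \textsc{Clique} together with the ETH, admits no $f(k)\cdot n^{o(k)}$ algorithm \cite{CyganFKLMPPS15}. Hence a reduction with the stated parameter blow-up yields both statements at once: W[1]-hardness parameterized by pathwidth, and the non-existence of an $N^{o(\pw)}$-time algorithm, where $N=\mathrm{poly}(n,k)$ is the number of vertices of the produced graph and $\pw=O(k)$, since such an algorithm would solve \textsc{Multicolored Independent Set} in time $\mathrm{poly}(n,k)^{o(k)}=f(k)\,n^{o(k)}$.

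The reduction will be built from the two kinds of gadgets sketched in Figure~\ref{fig:red2}. For each class $V_i$ we construct a \emph{Selector} gadget $S_i$, whose job is to ``commit'' to one vertex $a_i\in V_i$. We design $S_i$ so that: (i) in every stable cut of the whole graph, the restriction to $S_i$ is one of exactly $n$ canonical states, one per vertex of $V_i$; (ii) all $n$ canonical states cut the same number of edges inside and on the boundary of $S_i$ (the gadget is \emph{balanced}), so the committed value affects the objective only through the Checkers below; (iii) the value $a_i$ is readable from the cut-side of a constant-size ``port'' that is repeated polynomially many times along a long pathwidth-$O(1)$ ``cable'' running through $S_i$, so that consecutive copies of the port must carry the same value. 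Since we are in the unweighted setting and cannot use the exponentially growing weights of Lemma~\ref{lem:hard}, this consistency along the cable, as well as the counting arguments, must be enforced purely by the graph structure: the main device is attaching pendant leaves so as to tune the parity of vertex degrees, forcing the stability inequality $\sum_{\text{cut}}\ge d/2$ to hold as a tight equality that propagates a single choice. This is exactly the ``balanced-neighbourhood-forces-a-choice'' idea of Theorems~\ref{thm:nphtrees} and~\ref{thm:nphardvc}, now made robust and $n$-valued.

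For every edge $\{u,v\}$ of the \textsc{Multicolored Independent Set} instance, with $u\in V_i$ and $v\in V_j$, we add a \emph{Checker} gadget $C_{uv}$ attached to the port of $S_i$ that reveals whether $a_i=u$ and to the port of $S_j$ that reveals whether $a_j=v$; the ports are aligned so that all Checkers using a common position of the cables can be placed together. The Checker is designed so that, relative to a fixed per-Checker baseline number of cut edges, it costs exactly one extra cut edge precisely when $S_i$ committed to $u$ \emph{and} $S_j$ committed to $v$, and costs exactly the baseline otherwise (equivalently, one can make the ``bad'' joint configuration have no stable cut at all). Setting the target $B$ equal to the sum of the balanced Selector costs and the Checker baselines, the graph then has a stable cut of weight at most $B$ if and only if one can choose one vertex per class so that no edge is induced, i.e.\ iff the original instance is a yes-instance. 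Finally, to bound the pathwidth, we order the Checkers (say along the edge list of the input graph) and make each cable long enough to pass, in this order, through one interaction site per Checker that uses it; a path decomposition then sweeps over the Checkers, keeping in the active bag the constant-size cross-section of each of the $k$ cables, the constant-size current Checker, and the leaves currently being introduced or forgotten, for total width $O(k)$. The construction is polynomial in $n$ and $k$, giving $N=\mathrm{poly}(n,k)$ and $\pw=O(k)$, as required.

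The main obstacle is the design of the gadgets without the crutch of large weights: we need Selectors that are simultaneously $n$-valued, balanced, value-propagating, and of constant pathwidth, and Checkers that implement the forbidden-pair test \emph{exactly}. All of this has to be argued through a careful case analysis of which vertices are stable under which local configurations, controlled entirely by degree parities and pendant leaves; I expect this structural bookkeeping to be the bulk of the work, with the pathwidth bound and the ETH/W[1] consequences following routinely once the gadgets are in place.
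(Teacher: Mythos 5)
Your high-level plan matches the paper's actual construction closely: both reduce from \textsc{Multicolored Independent Set}, both build a $k\times m$ grid of Selector gadgets (one per colour-class per edge) connected by Propagators along each row and by Checkers down each column, both argue that every ``cheap'' stable cut is balanced so that only the Checkers can distinguish a yes- from a no-instance, and both obtain $\pw=O(k)$ by sweeping a bag of size $O(k)$ across the columns. The consequences you draw (W[1]-hardness and the $n^{o(\pw)}$ lower bound via the known $n^{o(k)}$ lower bound for \textsc{Multicolored Independent Set}) are exactly the paper's.

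However, your proposal is a roadmap, not a proof: you explicitly defer the gadget design, and that design is the entire content of the argument. In particular, your plan leans on ``pendant leaves to tune degree parities'' as the sole mechanism, but leaves can only make a vertex's stability easier to satisfy — they never \emph{force} a vertex to take a particular side. Some anchoring device is indispensable: without it, an entire Selector path could sit on one side of the cut, cut zero internal edges, and still be stable, which destroys the counting that pins the objective to $B$. The paper supplies exactly this missing ingredient with ``heavy edges'': two palette vertices $p_0,p_1$, joined to the two path endpoints and to each other through $A=n^5$ shared degree-two vertices, so that any solution of weight at most $B$ must place the endpoints of each heavy edge on opposite sides (otherwise $A$ extra edges are cut). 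This is what turns the Selector path into a genuinely $n$-valued choice gadget, what lets the Propagators (each adjacent to the $2n+2$ vertices of two consecutive paths) enforce row-consistency by a tight balance argument, and what lets the Checker — built from $t^1,\dots,t^4$, independent sets $T^1,\dots,T^4$, and a small triangle $a,b,c$ — charge exactly one extra edge when both endpoints of an edge are selected. Until you actually exhibit gadgets with those properties, the step ``one can make the bad joint configuration have no stable cut at all'' is an assumption, not a construction, and the proposal does not yet establish the theorem.
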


To prove Theorem \ref{thm:hard2} we will describe a reduction from
$k$-\textsc{Multi-Colored Independent Set}, a well-known W[1]-hard problem that
cannot be solved in $n^{o(k)}$ time under the ETH \cite{CyganFKLMPPS15}. In
this problem we are given a graph $G=(V,E)$ with $V$ partitioned into $k$ color
classes $V_1,\ldots, V_k$, each of size $n$, and we are asked to find an
independent set of size $k$ which selects one vertex from each $V_i$. In the
remainder we use $m$ to denote the number of edges of $E$ and assume that
vertices of $V$ are labeled $v_{(i,j)}, i\in [k], j\in [n]$, where $V_i = \{
v_{(i,j)}\ |\ j\in [n]\}$.

Before we proceed, let us give some intuition. Our reduction will rely on a
$k\times m$ grid-like construction, where each row represents the selection of
a vertex in the corresponding color class of $G$ and each column represents an
edge of $G$. The main ingredients will be a Selector gadget, which will
represent a choice of an index in $[n]$; a Propagator gadget which will make
sure that the choice we make in each row stays consistent throughout; and a
Checker gadget which will verify that we did not select the two endpoints of
any edge. Each Selector gadget will contain a path on (roughly) $n$ vertices
such that any reasonable stable cut will have to cut exactly one edge of the
path. The choice of where to cut this path will represent an index in $[n]$
encoding a vertex of $G$.

In our construction we will also make use of a simple but important gadget
which we will call a ``heavy'' edge. Let $A=n^5$. When we say that we connect
$u,v$ with a heavy edge we will mean that we construct $A$ new vertices and
connect them to both $u$ and $v$. The intuitive idea behind this gadget is that
the large number of degree two vertices will force $u$ and $v$ to be on
different sides of the partition (otherwise too many edges will be cut). We
will also sometimes attach leaves on some vertices with the intention of making
it easier for this vertex to achieve stability (as its attached leaves will
always be on the other side of the partition).

Let us now describe our construction step-by-step.

\begin{enumerate}

\item Construct two ``palette'' vertices $p_0, p_1$ and a heavy edge connecting
them. Note that all heavy edges we will add will be incident on at least one
palette vertex.

\item For each $i\in [k], j\in [m]$ construct the following Selector gadget:

	\begin{enumerate}

	\item Construct a path on $n+1$ vertices $P_{(i,j)}$ and label its
vertices $P_{(i,j)}^1, \ldots, P_{(i,j)}^{n+1}$.

	\item If $j$ is odd, then add a heavy edge from $P_{(i,j)}^1$ to $p_1$
and a heavy edge from $P_{(i,j)}^{n+1}$ to $p_0$. If $j$ is even, then add a
heavy edge from $P_{(i,j)}^1$ to $p_0$ and a heavy edge from $P_{(i,j)}^{n+1}$
to $p_1$.

	\item Attach $5$ leaves to each $P_{(i,j)}^{\ell}$ for $\ell\in
\{2,\ldots,n\}$. Attach $A+5$ leaves to $P_{(i,j)}^1$ and $P_{(i,j)}^{n+1}$.

	\end{enumerate}

\item For each $i\in [k], j\in [m-1]$ construct a new vertex connected to all
vertices of the paths $P_{(i,j)}$ and $P_{(i,j+1)}$. This vertex is the
Propagator gadget.

\item For each $j\in [m]$ consider the $j$-th edge of the original instance and
suppose it connects $v_{(i_1,j_1)}$ to $v_{(i_2,j_2)}$. We construct the
following Checker gadget (see Figure \ref{fig:red2})

	\begin{enumerate}

	\item We construct four vertices $t_j^1, t_j^2, t_j^3, t_j^4$. These
are connected to existing vertices as follows: $t_j^1$ is connected to
$\{P_{(i_1,j)}^1, \ldots, P_{(i_1,j)}^{j_1}\}$ (that is, the first $j_1$
vertices of the path $P_{(i_1,j)}$); $t_j^2$ is connected to
$\{P_{(i_1,j)}^{j_1+1}, \ldots, P_{(i_1,j)}^{n+1}\}$ (that is, the remaining
$n+1-j_1$ vertices of $P_{i_1,j}$); similarly, $t_j^3$ is connected to
$\{P_{(i_2,j)}^1, \ldots, P_{(i_2,j)}^{j_2}\}$; and finally $t_j^4$ is
connected to $\{P_{(i_2,j)}^{j_2+1}, \ldots, P_{(i_2,j)}^{n+1}\}$.

	\item We construct four independent sets $T_j^1, T_j^2, T_j^3, T_j^4$
with respective sizes $j_1, n+1-j_1, j_2, n+1-j_2$. We connect $t_j^1$ to all
vertices of $T_j^1$, $t_j^2$ to $T_j^2$, $t_j^3$ to $T_j^3$, and $t_j^4$ to
$T_j^4$. We attach two leaves to each vertex of $T_j^1\cup T_j^2\cup T_j^3\cup
T_j^4$.

	\item We construct three vertices $a_j, b_j, c_j$. We connect $c_j$ to
both $a_j$ and $b_j$. We connect $a_j$ to an arbitrary vertex of $T_j^1$ and an
arbitrary vertex of $T_j^3$. We connect $b_j$ to an arbitrary vertex of $T_j^2$
and an arbitrary vertex of $T_j^4$.

	\end{enumerate}

\end{enumerate}

Let $L_1$ be the number of leaves of the construction we described above and
$L_2$ be the number of degree two vertices which are part of heavy edges. We
set $B=L_1 + L_2 + km + k(m-1)(n+1) + m(2n+6)$.

\begin{lemma}\label{lem:red2a} If $G$ has a multi-colored independent set of
size $k$, then the constructed instance has a stable cut of size  $B$.
\end{lemma}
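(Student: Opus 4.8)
The plan is to convert a multi-colored independent set, say $\{v_{(i,\sigma(i))}:i\in[k]\}$ for a function $\sigma:[k]\to[n]$, into an explicit cut $(V_0,V_1)$ of the constructed graph, and then verify (a) that it is stable and (b) that its weight equals $B$. I would start by placing $p_0\in V_0$, $p_1\in V_1$ and, for every $i\in[k]$, $j\in[m]$, ``cutting'' the path $P_{(i,j)}$ at position $\sigma(i)$: put $P^1_{(i,j)},\dots,P^{\sigma(i)}_{(i,j)}$ on one side and $P^{\sigma(i)+1}_{(i,j)},\dots,P^{n+1}_{(i,j)}$ on the other, orienting this so that $P^1_{(i,j)}$ and $P^{n+1}_{(i,j)}$ fall opposite their palette neighbours (possible since $1\le\sigma(i)\le n$, so both blocks are non-empty and these two endpoints receive opposite palette neighbours). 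Every leaf goes opposite its unique neighbour; for each heavy edge (whose endpoints are then already separated) I split its $A$ degree-two vertices as evenly as possible between the two endpoints, putting the slightly larger half on the endpoint with fewer other cut edges, so that every such vertex cuts exactly one of its two edges and every heavy-edge endpoint still has at least half of its incident edges cut; each Propagator vertex is placed on either side; the Checker vertices are placed by the local rule below.

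For stability I would go vertex by vertex. Leaves and heavy-edge vertices are immediately stable; $p_0,p_1$ and the path endpoints $P^1_{(i,j)},P^{n+1}_{(i,j)}$ are stable by the even split (the path endpoints moreover have all of their $A+5$ leaves cut); an internal path vertex $P^\ell_{(i,j)}$ with $2\le\ell\le n$ has its five leaves cut and total degree at most $10$, hence is stable. The structural fact for Propagators is that columns $j$ and $j+1$ have opposite parity, so cutting $P_{(i,j)}$ and $P_{(i,j+1)}$ at the same position $\sigma(i)$ places $P^\ell_{(i,j)}$ and $P^\ell_{(i,j+1)}$ on opposite sides for all $\ell$; thus the Propagator of row $i$ between columns $j,j+1$ sees exactly $n+1$ neighbours on each side, is stable wherever we put it, and contributes exactly $n+1$ cut edges.

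The heart of the argument, and the step I expect to be the main obstacle, is the Checker. Fix $j$ and let the $j$-th edge of $G$ join $v_{(i_1,j_1)}$ and $v_{(i_2,j_2)}$. I would place $t_j^1$ on the side holding a majority of its path-neighbours $P^1_{(i_1,j)},\dots,P^{j_1}_{(i_1,j)}$; since $t_j^1$ has degree exactly $2j_1$, I can then place the $j_1$ vertices of $T_j^1$ (one opposite $t_j^1$ for each path-neighbour of $t_j^1$ lying on $t_j^1$'s side) so that $t_j^1$ has exactly $j_1$ incident cut edges, and likewise for $t_j^2,t_j^3,t_j^4$. Every vertex of the $T_j^\ell$'s is then stable since its two leaves are cut. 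The key point is that the path-neighbourhoods of $t_j^1$ and $t_j^2$ are the blocks $\{1,\dots,j_1\}$ and $\{j_1+1,\dots,n+1\}$ of $P_{(i_1,j)}$: if $\sigma(i_1)=j_1$ both blocks are monochromatic, which forces all of $T_j^1$ opposite $t_j^1$ and all of $T_j^2$ opposite $t_j^2$; but if $\sigma(i_1)\neq j_1$ exactly one of the two blocks is bichromatic, and then the corresponding $T$-set can be split so that its vertex adjacent to $a_j$ (resp.\ $b_j$) goes to whichever side we wish. Since the independent set contains at most one of $v_{(i_1,j_1)},v_{(i_2,j_2)}$, we have $\sigma(i_1)\neq j_1$ or $\sigma(i_2)\neq j_2$, so at least one of the four $T$-vertices feeding $a_j,b_j$ is free. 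A short case analysis on which of $\sigma(i_1),\sigma(i_2)$ equals $j_1,j_2$, using this single degree of freedom, then produces a placement of $a_j,b_j,c_j$ (which have degrees $3,3,2$ and need $2,2,1$ cut edges) in which all three are stable and exactly four of the six edges $c_ja_j,c_jb_j$ and the four $a_j/b_j$-to-$T$ edges are cut. (If instead both endpoints were selected, the forced directions would compel five of those six edges to be cut — the extra unit of weight that $B$ is tuned to forbid and that powers the converse direction.)

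Finally I would total the cut weight by categories, checking that every edge falls in exactly one: $L_1$ leaf edges; $L_2$ edges, one per heavy-edge degree-two vertex; $km$ path edges, one per path; $k(m-1)(n+1)$ Propagator edges; and, per Checker, the $j_1+(n+1-j_1)+j_2+(n+1-j_2)=2n+2$ edges incident to $t_j^1,\dots,t_j^4$ together with the $4$ edges from the $a_j,b_j,c_j$ part, i.e.\ $2n+6$. This sums to $L_1+L_2+km+k(m-1)(n+1)+m(2n+6)=B$, as required.
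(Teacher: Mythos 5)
Your construction and accounting match the paper's proof essentially step for step: the same cut of each path $P_{(i,j)}$ at position $\sigma(i)$ aligned with the palette via the heavy edges, the same per-category edge count summing to $B$, and the same key use of the non-selected endpoint of the $j$-th edge to realize each Checker with exactly $2n+6$ cut non-leaf edges while exactly balancing every $t_j^\ell$. The only cosmetic differences are your even split of the heavy-edge internal vertices (the paper simply places them all opposite the palette endpoint, which makes $p_0,p_1$'s stability immediate) and your ``break the agreement of one of the pairs feeding $a_j,b_j$'' framing of the Checker placement, which is equivalent to the paper's explicit assignment and does go through.
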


\begin{proof}
Let $\sigma: [k] \to [n]$ be a function that encodes a multi-colored
independent set of $G$, that is, the set $\{ v_{(i,\sigma(i))}\ |\ i\in [k]\}$
is an independent set. We construct a partition of the new instance as follows:
we assign $0$ to $p_0$, $1$ to $p_1$, and arbitrary values to the vertices of
the heavy edge connecting $p_0$ to $p_1$; each other vertex that belongs to a
heavy edge incident to $p_0$ (respectively $p_1$) is assigned $1$ (respectively
$0$); each vertex connected via a heavy edge to $p_0$ (respectively $p_1$) is
assigned $1$ (respectively $0$); for each Selector gadget $P_{(i,j)}$ we assign
to the first $\sigma(i)$ vertices of the path (that is, the vertices
$\{P_{i,j}^1,\ldots, P_{i,j}^{\sigma(i)}\}$) the same value as $P_{i,j}^1$
(that is, $0$ if $j$ is odd and $1$ if $j$ is even); we assign to the remaining
vertices of $P_{(i,j)}$ the same value as $P_{(i,j)}^{n+1}$; we assign to every
leaf the opposite value from that of its neighbor; we assign an arbitrary value
to each Propagator vertex. We have now described a partition of all the
vertices except of the non-leaf vertices belonging to Checker gadgets.

Before we describe the partition of the Checker gadgets let us establish some
basic properties of the partition so far. First, all vertices for which we have
given a value are stable, independent of the values we intend to assign to the
non-leaf Checker gadget vertices.  To see this we note that (i) all leaves have
a value different from their neighbors (ii) all degree $2$ vertices that belong
to heavy edges have two neighbors with distinct values (iii) $p_0$ and $p_1$
have the majority of their neighbors on the other side of the partition (iv)
for all non-leaf Selector gadget vertices at least half their neighbors are
leaves (which are on the opposite side of the partition) (v) all Propagator
vertices have exactly $n+1$ neighbors on each side of the partition.  The total
number of edges cut so far is (i) $L_1$ edges incident on leaves (ii) $L_2$
edges incident on degree $2$ vertices that belong to heavy edges (iii) one
internal edge of each path $P_{(i,j)}$ giving $km$ edges in total (iv) half of
the $2n+2$ edges incident on each Propagator vertex, of which there are
$k(m-1)$, giving $k(m-1)(n+1)$ in total.  Summing up, we have already cut $L_1
+ L_2 + km + k(m-1)(n+1)$ edges, meaning we can still cut $m(2n+6)$ edges. We
will describe a stable partition of the Checker gadgets which cuts exactly
$2n+6$ edges per gadget (not counting edges incident on leaves, since these are
already counted in $L_1$), and since we have $m$ Checker gadgets this will
complete the proof.

Consider now the Checker gadget for edge $j$ which connects $v_{(i_1,j_1)}$ to
$v_{(i_2,j_2)}$ and without loss of generality assume that $j$ is odd
(otherwise the proof is identical with the roles of $0$ and $1$ reversed). We
claim that one of the vertices $t_j^1, t_j^2, t_j^3, t_j^4$ must have neighbors
on both sides of the partition in the Selector gadgets. To see this, suppose
for contradiction that each of these vertices only has neighbors on one side of
the partition so far. Then, since $t_j^1$ is connected to $P_{(i_1,j)}^1$,
which has color $0$ and $t_j^2$ is connected to $P_{(i_1,j)}^{n+1}$, which has
color $1$, and $t_j^1$ is connected to the first $j_1$ vertices of
$P_{(i_1,j)}$, we conclude that $\sigma(i_1)=j_1$, because the number of
vertices of the path $P_{(i_1,j)}$ which have value $0$ is $\sigma(i_1)$.  With
the same argument, we must have $\sigma(i_2)=j_2$, contradicting the hypothesis
that $\sigma$ encodes an independent set.

We can therefore assume that one of $t_j^1, t_j^2, t_j^3, t_j^4$ has neighbors
on both sides of the partition in the Selector gadgets. Without loss of
generality suppose that $t_j^1$ has this property (the proof is symmetric in
other cases).  We complete the partition as follows: we assign values to
$T_j^2, T_j^3, T_j^4$ in a way that $t_j^2, t_j^3, t_j^4$ have the same number
of neighbors on each side of the partition and that both neighbors of $b_j$ in
$T_j^2, T_j^4$ have value $0$.  This is always possible as $t_j^2, t_j^4$ have
a neighbor with value $1$ in the Selectors, namely $P_{(i_1,j)}^{n+1}$ and
$P_{(i_2,j)}^{n+1}$.  We assign colors to $T_j^1$ in a way that $t_j^1$ has the
same number of neighbors on each side and $a_j$ has two neighbors with distinct
values in $T_j^1\cup T_j^3$. This is always possible as we need to use both
values in $T_j^1$, because $t_j^1$ has neighbors with both values in
$P_{(i_1,j)}$. We give $b_j$ value $1$, $c_j$ value $1$ and $a_j$ value $0$.
This is stable as $b_j$ has two neighbors of value $0$, $c_j$ has neighbors
with distinct values, and $a_j$ has two neighbors with value $1$. Furthermore,
vertices in $T_j^1\cup T_j^2\cup T_j^3\cup T_j^4$ are stable because at
least half their neighbors are leaves which are on the other side of the
partition, and the neighborhoods of $t_j^1, t_j^2, t_j^3, t_j^4$ are completely
balanced, so these vertices can be arbitrarily set.  The number of edges cut is
half of the edges incident on $t_j^1, t_j^2, t_j^3, t_j^4$, giving $2n+2$
edges, plus two edges incident on each of $a_j, b_j$, giving a total of $2n+6$
edges.  \end{proof}

\begin{lemma}\label{lem:red2b} If the constructed instance has a stable cut of size at most $B$,
then $G$ has a multi-colored independent set of size $k$. \end{lemma}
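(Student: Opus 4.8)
The plan is to show that the budget $B$ is spent tightly in every gadget of the construction, and then to read off a multi-coloured independent set from the unique way each Selector path is cut. Throughout, let $(V_0,V_1)$ be a stable cut of size at most $B$.

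\textbf{Forced structure.} First I would collect the unavoidable cut edges. Every leaf has degree $1$, so its unique edge is cut, accounting for exactly $L_1$ edges. Every degree-two vertex inside a heavy edge needs at least one incident edge cut; if the endpoints $u,v$ of a heavy edge lie on the \emph{same} side, then each of its $A=n^5$ internal vertices must lie on the opposite side and then has \emph{two} cut edges, an excess of $A$ over the contribution it would make if $u,v$ were separated. Since for every choice of the remaining minima the cut size is at least $L_1+L_2+km+k(m-1)(n+1)+m(2n+6)=B$ plus $A$ for each ``bad'' heavy edge, and $A$ strictly dominates the polynomially-bounded slack, every heavy edge has its endpoints on opposite sides. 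In particular $p_0,p_1$ are separated; fix notation with $p_0\in V_0$, $p_1\in V_1$, and note that the heavy edges then pin each Selector endpoint $P_{(i,j)}^1,P_{(i,j)}^{n+1}$ onto prescribed, opposite sides depending on the parity of $j$.

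\textbf{Tightness.} Next I would argue that every other lower bound used above is met with equality. The endpoints of each path $P_{(i,j)}$ are separated, so at least one of its internal edges is cut; each Propagator has exactly $2(n+1)$ neighbours and no leaves, so stability forces at least $n+1$ of its edges cut; for each Checker, the four vertices $t_j^1,\dots,t_j^4$ have total degree $4(n+1)$ so contribute at least $2n+2$, and $a_j,b_j$ each have degree $3$ over disjoint edge sets so contribute at least $2$ more each, for at least $2n+6$ non-leaf cut edges per Checker. Since the sum of all these bounds already equals $B$, each must be tight: exactly one internal edge of every path is cut, every Propagator and every $t_j^\ell$ has a perfectly balanced neighbourhood, and $a_j,b_j$ have exactly two cut incident edges. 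From the ``one cut per path'' fact, each $P_{(i,j)}$ splits as a prefix $P_{(i,j)}^1,\dots,P_{(i,j)}^{\sigma_{(i,j)}}$ on the side of $P_{(i,j)}^1$ and the complementary suffix on the side of $P_{(i,j)}^{n+1}$, for a well-defined $\sigma_{(i,j)}\in\{1,\dots,n\}$. The balance of the Propagator between $P_{(i,j)}$ and $P_{(i,j+1)}$, combined with the fact that the prefix side flips with the parity of $j$, yields $\sigma_{(i,j)}=\sigma_{(i,j+1)}$, so $\sigma_{(i,j)}=\sigma(i)$ is independent of $j$.

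\textbf{The Checker forces independence.} Finally I would show $\{v_{(i,\sigma(i))}\mid i\in[k]\}$ is independent in $G$. Suppose some edge $j$ of $G$, joining $v_{(i_1,j_1)}$ and $v_{(i_2,j_2)}$, had $\sigma(i_1)=j_1$ and $\sigma(i_2)=j_2$; assume $j$ odd (the even case is symmetric under swapping $0$ and $1$). Then the first $j_1$ vertices of $P_{(i_1,j)}$ all lie on one side, so the balance of $t_j^1$ forces \emph{all} of $T_j^1$ onto the other side; likewise $T_j^3$ lies entirely on one side and $T_j^2,T_j^4$ lie entirely on the opposite side. Hence the two $T$-neighbours of $a_j$ lie on the same side, which forces $a_j$ onto the side making both those edges cut; tightness ($a_j$ has exactly two cut edges) then forces the edge $a_jc_j$ uncut, i.e.\ $c_j$ on $a_j$'s side. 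The symmetric argument at $b_j$ forces $c_j$ onto $b_j$'s side, which is the opposite side — a contradiction. Therefore no edge of $G$ has both endpoints selected, so $\sigma$ encodes a multi-coloured independent set of size $k$.

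I expect the main obstacle to be the bookkeeping behind the tightness step: verifying that the gadget-wise lower bounds are edge-disjoint and sum exactly to $B$ (so that equality is genuinely forced everywhere, including the delicate $2n+6$ bound inside each Checker), and then propagating the forced sides through the heavy edges, Propagators, and the $T_j^\ell$, $a_j$, $b_j$, $c_j$ vertices cleanly enough to extract the contradiction at $c_j$.
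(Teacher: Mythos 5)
Your proof is correct and follows essentially the same route as the paper: separate the heavy-edge endpoints via the $A$-size penalty, establish the per-gadget lower bounds ($km$ path edges, $k(m-1)(n+1)$ Propagator edges, $m(2n+6)$ Checker edges), observe that they sum exactly to $B$ and are edge-disjoint so every bound is tight, extract a well-defined $\sigma(i)$ from the single cut per path via Propagator balance and parity flipping, and then derive the contradiction at $c_j$ inside a violated Checker. The one organizational difference is in the final step: the paper splits into two cases (some $t_j^{\ell}$ imbalanced, giving $2n+3+4$ edges; or all $T_j^{\ell}$ monochromatic with $a_j,b_j$ on opposite sides forcing some vertex to $2n+7$), whereas you invoke the tightness already established to rule out the first case outright and then derive the contradiction as ``$c_j$ must lie on $a_j$'s side and on $b_j$'s side, which are opposite.'' This is the same argument repackaged slightly more cleanly, not a genuinely different approach.
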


\begin{proof}
Suppose we have a stable cut of size at most $B$. This cut must include all
$L_1$ edges incident on leaves, and at least one edge for each of the $L_2$
degree two vertices which belong to heavy edges. Furthermore, if there is a
heavy edge such that both of its endpoints have the same value, the number of
edges cut incident on vertices that belong to heavy edges will be at least
$L_2+A$. However, $A=n^5> km +k(m-1)(n+1) + m(2n+6)$, so we would have a cut of
size strictly larger than $B$. We conclude that in all heavy edges the two
endpoints have distinct values. Without loss of generality assume value $0$ is
given to $p_0$ and $1$ to $p_1$.

We now observe that:

\begin{enumerate}

\item At least one internal edge of each path $P_{(i,j)}$ is cut.

\item At least $n+1$ edges incident on each Propagator vertex are cut.

\item At least $2n+6$ edges not incident to leaves are cut inside each Checker
gadget.

\end{enumerate}

For the first claim, observe that if the endpoints of heavy edges take distinct
values, this implies that in each path $P_{(i,j)}$ the first and last vertex
have distinct values, so at least one edge of the path must be cut. The second
claim is based on the fact that Propagator vertices have degree $2n+2$. For the
third claim, observe that $t_j^1, t_j^2, t_j^3, t_j^4$ have $4n+4$ edges
incident on them, so at least $2n+2$ of these must be cut in a stable solution.
Furthermore, $a_j, b_j$ have degree $3$, so at least $2$ edges incident on each
of these vertices are cut, giving a total of $2n+6$. (Here, we used the fact
that $\{t_j^1, t_j^2, t_j^3, t_j^4, a_j, b_j\}$ is an independent set).

By the above observations we have that any stable cut must have size at least
$L_1 + L_2 + km + k(m-1)(n+1) + m(2n+6)=B$. Furthermore, if a solution cuts
more than one edge of a path $P_{(i,j)}$, or at least $n+2$ edges incident on a
Propagator, or at least $2n+7$ edges not incident to leaves in a Checker, then
its total size must be strictly larger than $B$. We conclude that our solution
must cut exactly one edge inside each Selector, properly balance the
neighborhoods of all Propagators, and cut $2n+6$ edges inside each Checker.

Consider now two consecutive Selector gadgets $P_{(i,j)}$ and $P_{(i,j+1)}$.
Since the solution cuts exactly one internal edge of each path, we can assume
that the first $x$ vertices of $P_{(i,j)}$ have the same value as $P_{(i,j)}^1$
and the remaining $n+1-x$ have the same value as $P_{(i,j)}^{n+1}$. Similarly,
the first $y$ vertices of $P_{(i,j+1)}$ have the same value as $P_{(i,j+1)}^1$.
Now, because $j, j+1$ have different parities, this means that the Propagator
connected to these two paths has $n+1-x+y$ neighbors on the same side as
$P_{(i,j)}^{n+1}$. But this implies that $x=y$. Using the same reasoning we
conclude that for all $i,j,j'$, the number of vertices of $P_{(i,j)}$ that
share the value of $P_{(i,j)}^1$ is equal to the number of vertices of
$P_{(i,j')}$ that share the value of $P_{(i,j')}^1$. Let $\sigma(i)$ be the
number of vertices of $P_{(i,1)}$ which share the value of $P_{(i,1)}^1$. We
claim that $\{ v_{(i,\sigma(i))}\ |\ i\in [k]\}$ is an independent set in $G$.

To see this, suppose for contradiction that the $j$-th edge of $G$ connects
$v_{(i_1,\sigma(i_1))}$ to $v_{(i_2,\sigma(i_2))}$. We claim that in this case
the Checker connected to $P_{(i_1,j)}, P_{(i_2,j)}$ will have at least $2n+7$
cut edges. Indeed, consider the neighborhood of $t_j^1$ in the Selector gadget
$P_{(i_1,j)}$ and observe that (i) by construction, $t_j^1$ is adjacent to the
first $\sigma(i_1)$ vertices of the path $P_{(i_1,j)}$ (ii) by the definition
of $\sigma(i_1)$, these are exactly the vertices of the path which are on the
same side as $P_{(i_1,j)}^1$. By similar reasoning for the other vertices, we
conclude that the neighborhoods of $t_j^1, t_j^2, t_j^3, t_j^4$ in
$P_{(i_1,j)}$ and $P_{(i_2,j)}$ are \emph{uniform}, that is, the neighborhood
of each such vertex in the Selectors is fully contained on one side of the
partition.

Recall now that by our previous counting, if the neighborhood of one of the
vertices $t_j^1, t_j^2, t_j^3, t_j^4$ is not completely balanced, then at least
$2n+7$ edges will be cut inside the gadget and we are done. Suppose then that
the four neighborhoods are in fact completely balanced. By the reasoning of the
previous paragraph we conclude that each of the sets $T_j^1, T_j^2, T_j^3,
T_j^4$ is fully contained on one of the two sides of the partition and
furthermore, $T_j^1\cup T_j^3$ are on one side and $T_j^2\cup T_j^4$ are on the
other. This implies that $a_j, b_j$ must be on distinct sides of the partition.
As a result, no matter where $c_j$ is placed, one of $a_j,b_j$ will have all
three of its incident edges cut and as a result at least $2n+7$ edges will be
cut in this Checker. We conclude that $\sigma$ must encode an independent set.
\end{proof}

\begin{lemma}\label{lem:pw} The constructed graph has pathwidth $O(k)$. \end{lemma}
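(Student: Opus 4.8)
The plan is to build a path decomposition of the constructed graph whose bags have size $O(k)$, processing the $m\times k$ grid-like structure column by column. The key observation is that the graph, once we remove all the ``attached'' low-degree vertices (leaves, the degree-two vertices of heavy edges, and the independent sets $T_j^\ell$ of the Checker gadgets), consists of $k$ long paths $P_{(i,j)}$ per column, glued to the next column only via the Propagator vertices, together with the $O(1)$ Checker vertices $t_j^1,\dots,t_j^4,a_j,b_j,c_j$ per column and the two palette vertices $p_0,p_1$. So first I would describe the ``core'' structure and then explain how to sweep through it.

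First, I would put $p_0,p_1$ into every bag (cost $2$), since every heavy edge is incident on a palette vertex. Next, for a fixed column $j$ I would process the $k$ paths $P_{(i,j)}$, $i\in[k]$, one path at a time, walking along the path from $P_{(i,j)}^1$ to $P_{(i,j)}^{n+1}$. At the point where the sweep is ``inside'' path $i$ of column $j$, the bag contains: $p_0,p_1$; the current path vertex $P_{(i,j)}^\ell$ and its predecessor $P_{(i,j)}^{\ell-1}$ (to cover the path edge and allow introducing leaves and heavy-edge vertices attached to them); the Propagator vertices connecting column $j-1$ to column $j$ and those connecting column $j$ to column $j+1$ — but there is only one Propagator per pair $(i,j)$ of consecutive columns, and it is adjacent to all of $P_{(i,j)}$ and $P_{(i,j+1)}$, so we must keep the $k$ Propagators of the $(j-1,j)$ interface and the $k$ Propagators of the $(j,j+1)$ interface in all bags while column $j$ is being processed (cost $2k$); and the $O(1)$ Checker vertices $t_j^1,\dots,c_j$ of column $j$ (and, transitionally, of column $j-1$), cost $O(1)$. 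The leaves, the degree-two vertices of heavy edges, and the vertices of $T_j^\ell$ are each adjacent to at most two already-present vertices, so they can be introduced and immediately forgotten within a single step, adding only $O(1)$ to the bag. Hence every bag has size $O(k)$.

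The remaining point is to verify that this collection of bags is a valid path decomposition: every edge must appear in some bag, and the bags containing any fixed vertex must form a contiguous interval. For path edges, Propagator-to-path edges, and heavy edges this is clear from the construction of the sweep (Propagators of interface $(j,j+1)$ live in every bag from the start of column $j$ to the end of column $j+1$, a contiguous range). For the Checker edges, each $t_j^\ell$ is adjacent to a contiguous block of path vertices $P_{(i,j)}^1,\dots,P_{(i,j)}^{j_s}$ (or the complementary block), so keeping $t_j^1,\dots,t_j^4$ in all bags during the processing of column $j$ covers all those edges; the edges among $t_j^\ell, T_j^\ell, a_j,b_j,c_j$ are $O(1)$ in number and all incident to vertices we keep throughout column $j$, so they are handled by introducing the $T_j^\ell$ vertices (each of degree $3$: one $t_j^\ell$, possibly one of $a_j,b_j$, plus two leaves) one at a time. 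The contiguity condition for a path vertex $P_{(i,j)}^\ell$ holds because it is present only during a short window of the sweep of path $i$ in column $j$ plus the windows where a Propagator adjacent to it forces it to stay — but since we keep all Propagators of both incident interfaces throughout the whole column, $P_{(i,j)}^\ell$ must actually remain in the bag for the entire processing of column $j$; this is fine and still gives bags of width $O(k)$.

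The main obstacle I expect is precisely this last subtlety: because a single Propagator vertex is adjacent to an entire path ($n+1$ vertices), we cannot forget path vertices while their Propagator is still live, so an entire path's worth of vertices would have to coexist in a bag — which would blow the width up to $\Theta(n)$. The fix (and the crux of the argument) is to observe that we do not need the individual path vertices to be simultaneously present: it suffices for the \emph{Propagators} to be present throughout the column, and each path vertex need only be present momentarily (together with its neighbor on the path, its leaves, its heavy-edge neighbors if it is an endpoint, and the relevant $t_j^\ell$). The Propagator–path edges are then covered because, when we introduce $P_{(i,j)}^\ell$, all $2k$ Propagators of the two interfaces incident to column $j$ are already in the bag. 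So the careful bookkeeping is: Propagators and Checker/palette vertices persist across a whole column ($O(k)$ of them), while path vertices, leaves, heavy-edge vertices and $T_j^\ell$ vertices are transient. Once this is set up correctly, the width bound $O(k)$ and the verification of the two path-decomposition axioms are routine.
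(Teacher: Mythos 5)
Your proposal is correct and follows essentially the same route as the paper: put the palette vertices $p_0,p_1$ in every bag, keep the $2k$ Propagators and the $O(1)$ Checker hub vertices of a column in every bag of that column, handle the remaining vertices (path vertices, leaves, heavy-edge vertices, the $T_j^\ell$) transiently since what is left is a union of paths and isolated vertices, and glue the column decompositions in sequence. One remark: your middle paragraph's claim that a path vertex ``must actually remain in the bag for the entire processing of column $j$'' is both false and would destroy the $O(k)$ bound (all $k(n+1)$ path vertices of a column would coexist), but your final paragraph correctly supersedes it --- a vertex only needs to share \emph{one} bag with each neighbour, so the Propagator--path edges are covered at the moment each path vertex is introduced and that vertex can then be forgotten immediately.
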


\begin{proof}
We will use the fact that deleting a vertex from a graph can decrease the
pathwidth by at most $1$, since we can take a path decomposition of the
resulting graph and add this vertex to all bags. We begin by deleting $p_0,
p_1$ from the graph, as this decreases the pathwidth by at most $2$. We will
also use the fact that deleting all leaves from a graph can decrease pathwidth
by at most $1$, since we can take a path decomposition of the resulting graph
and, for each leaf, find a bag of this decomposition that contains the leaf's
neighbor and insert a copy of this bag immediately after it, adding the leaf.
We therefore remove all leaves from the graph, decreasing the pathwidth by at
most $1$ more.  Let $H$ be the resulting graph. We will show that $H$ has
pathwidth at most $O(k)$.  Observe that in $H$ all heavy edges have
disappeared, as their internal vertices became leaves when we deleted $p_0,
p_1$.

For $j\in [m]$ let $H_j$ be the graph induced by the set that contains all
vertices of $H$ from Selector gadgets $P_{(i,j)}$ for $i\in [k]$,  the (at most
$2k$) Propagator vertices connected to them, and the Checker gadget for the
$j$-th edge.  We will construct a path decomposition of $H_j$ with the property
that all bags include all Propagator vertices of $H_j$.  If we achieve this
then we can make a path decomposition of $H$ by gluing together these
decompositions, connecting the last bag of the decomposition of $H_j$ with the
first bag of the decomposition of $H_{j+1}$.  Observe that the union of the
graphs $H_j$ covers all vertices and edges of $H$.

To build such a path decomposition of $H_j$ we can remove the $2k$ Propagators
contained in $H_j$ (since we will add them in all bags) and the vertices
$t_j^1, t_j^2, t_j^3, t_j^4, a_j, b_j$, decreasing pathwidth by at most $2k+6$.
But the resulting graph is a union of paths and isolated vertices, so has
pathwidth $1$. We can therefore build a decomposition of $H_j$ -- and by
extension of $H$ -- of width $2k+O(1)$.  \end{proof}

\section{Conclusions}

Our results paint a clear picture of the complexity of \msc\ with respect to
$\tw$ and $\Delta$. As directions for further work one could consider stronger
notions of stability such as demanding that switching sets of $k$ vertices
cannot increase the cut, for constant $k$.  We conjecture that, since the
structure of this problem has the form $\exists \forall_k$, its complexity with
respect to treewidth will turn out to be double-exponential in $k$
\cite{LampisM17}.

\bibliographystyle{abbrvnat}
\bibliography{minstablecut}

@article{HanakaKL26,
  author       = {Tesshu Hanaka and
                  Noleen K{\"{o}}hler and
                  Michael Lampis},
  title        = {Core stability in additively separable hedonic games of low treewidth},
  journal      = {J. Comput. Syst. Sci.},
  volume       = {157},
  pages        = {103748},
  year         = {2026},
  url          = {https://doi.org/10.1016/j.jcss.2025.103748},
  timestamp    = {Sun, 01 Feb 2026 13:41:03 +0100},
  biburl       = {https://dblp.org/rec/journals/jcss/HanakaKL26.bib},
  bibsource    = {dblp computer science bibliography, https://dblp.org}
}

@article{KulikS26,
  author       = {Ariel Kulik and
                  Hadas Shachnai},
  title        = {Techniques in parameterized approximation},
  journal      = {Comput. Sci. Rev.},
  volume       = {59},
  pages        = {100833},
  year         = {2026},
  url          = {https://doi.org/10.1016/j.cosrev.2025.100833},
  timestamp    = {Sun, 02 Nov 2025 12:35:07 +0100},
  biburl       = {https://dblp.org/rec/journals/csr/KulikS26.bib},
  bibsource    = {dblp computer science bibliography, https://dblp.org}
}

@inproceedings{LampisMNOV025,
  author       = {Michael Lampis and
                  Valia Mitsou and
                  Edouard Nemery and
                  Yota Otachi and
                  Manolis Vasilakis and
                  Daniel Vaz},
  editor       = {Pawel Gawrychowski and
                  Filip Mazowiecki and
                  Michal Skrzypczak},
  title        = {Parameterized Spanning Tree Congestion},
  booktitle    = {50th International Symposium on Mathematical Foundations of Computer
                  Science, {MFCS} 2025, Warsaw, Poland, August 25-29, 2025},
  series       = {LIPIcs},
  volume       = {345},
  pages        = {65:1--65:20},
  publisher    = {Schloss Dagstuhl - Leibniz-Zentrum f{\"{u}}r Informatik},
  year         = {2025},
  url          = {https://doi.org/10.4230/LIPIcs.MFCS.2025.65},
  timestamp    = {Fri, 21 Nov 2025 23:44:11 +0100},
  biburl       = {https://dblp.org/rec/conf/mfcs/LampisMNOV025.bib},
  bibsource    = {dblp computer science bibliography, https://dblp.org}
}

@article{DubloisLP22,
  author       = {Louis Dublois and
                  Michael Lampis and
                  Vangelis Th. Paschos},
  title        = {Upper Dominating Set: Tight algorithms for pathwidth and sub-exponential
                  approximation},
  journal      = {Theor. Comput. Sci.},
  volume       = {923},
  pages        = {271--291},
  year         = {2022},
  url          = {https://doi.org/10.1016/j.tcs.2022.05.013},
  timestamp    = {Mon, 28 Aug 2023 21:31:24 +0200},
  biburl       = {https://dblp.org/rec/journals/tcs/DubloisLP22.bib},
  bibsource    = {dblp computer science bibliography, https://dblp.org}
}

@article{BodlaenderJ00,
  author       = {Hans L. Bodlaender and
                  Klaus Jansen},
  title        = {On the Complexity of the Maximum Cut Problem},
  journal      = {Nord. J. Comput.},
  volume       = {7},
  number       = {1},
  pages        = {14--31},
  year         = {2000},
  timestamp    = {Wed, 14 Jan 2004 15:33:49 +0100},
  biburl       = {https://dblp.org/rec/journals/njc/BodlaenderJ00.bib},
  bibsource    = {dblp computer science bibliography, https://dblp.org}
}

@article{BodlaenderH98,
  author    = {Hans L. Bodlaender and
               Torben Hagerup},
  title     = {Parallel Algorithms with Optimal Speedup for Bounded Treewidth},
  journal   = {{SIAM} J. Comput.},
  volume    = {27},
  number    = {6},
  pages     = {1725--1746},
  year      = {1998}
}

@InProceedings{LampisM17,
  author =	{Lampis, Michael and Mitsou, Valia},
  title =	{{Treewidth with a Quantifier Alternation Revisited}},
  booktitle =	{12th International Symposium on Parameterized and Exact Computation (IPEC 2017)},
  pages =	{26:1--26:12},
  series =	{Leibniz International Proceedings in Informatics (LIPIcs)},
  ISBN =	{978-3-95977-051-4},
  ISSN =	{1868-8969},
  year =	{2018},
  volume =	{89},
  editor =	{Lokshtanov, Daniel and Nishimura, Naomi},
  publisher =	{Schloss Dagstuhl -- Leibniz-Zentrum f{\"u}r Informatik},
  address =	{Dagstuhl, Germany},
  URL =		{https://drops.dagstuhl.de/entities/document/10.4230/LIPIcs.IPEC.2017.26},
  URN =		{urn:nbn:de:0030-drops-85512},
  doi =		{10.4230/LIPIcs.IPEC.2017.26},
  annote =	{Keywords: Treewidth, Exponential Time Hypothesis, Quantified SAT}
}

@article{DubloisHGLM22,
  author       = {Louis Dublois and
                  Tesshu Hanaka and
                  Mehdi Khosravian Ghadikolaei and
                  Michael Lampis and
                  Nikolaos Melissinos},
  title        = {(In)approximability of maximum minimal {FVS}},
  journal      = {J. Comput. Syst. Sci.},
  volume       = {124},
  pages        = {26--40},
  year         = {2022},
  url          = {https://doi.org/10.1016/j.jcss.2021.09.001},
  timestamp    = {Mon, 26 Jun 2023 20:57:13 +0200},
  biburl       = {https://dblp.org/rec/journals/jcss/DubloisHGLM22.bib},
  bibsource    = {dblp computer science bibliography, https://dblp.org}
}

@article{KatsikarelisLP22,
  author       = {Ioannis Katsikarelis and
                  Michael Lampis and
                  Vangelis Th. Paschos},
  title        = {Structurally parameterized d-scattered set},
  journal      = {Discret. Appl. Math.},
  volume       = {308},
  pages        = {168--186},
  year         = {2022},
  url          = {https://doi.org/10.1016/j.dam.2020.03.052},
  timestamp    = {Sat, 08 Jan 2022 01:37:42 +0100},
  biburl       = {https://dblp.org/rec/journals/dam/KatsikarelisLP22.bib},
  bibsource    = {dblp computer science bibliography, https://dblp.org}
}

@article{BelmonteLM20,
  author    = {R{\'{e}}my Belmonte and
               Michael Lampis and
               Valia Mitsou},
  title     = {Parameterized (Approximate) Defective Coloring},
  journal   = {{SIAM} J. Discret. Math.},
  volume    = {34},
  number    = {2},
  pages     = {1084--1106},
  year      = {2020},
  url       = {https://doi.org/10.1137/18M1223666},
  timestamp = {Thu, 06 Aug 2020 21:48:09 +0200},
  biburl    = {https://dblp.org/rec/journals/siamdm/BelmonteLM20.bib},
  bibsource = {dblp computer science bibliography, https://dblp.org}
}

@article{AngelBEL18,
  author    = {Eric Angel and
               Evripidis Bampis and
               Bruno Escoffier and
               Michael Lampis},
  title     = {Parameterized Power Vertex Cover},
  journal   = {Discret. Math. Theor. Comput. Sci.},
  volume    = {20},
  number    = {2},
  year      = {2018},
  url       = {http://dmtcs.episciences.org/4873},
  timestamp = {Fri, 13 Mar 2020 14:38:00 +0100},
  biburl    = {https://dblp.org/rec/journals/dmtcs/AngelBEL18.bib},
  bibsource = {dblp computer science bibliography, https://dblp.org}
}

@article{KatsikarelisLP19,
  author    = {Ioannis Katsikarelis and
               Michael Lampis and
               Vangelis Th. Paschos},
  title     = {Structural parameters, tight bounds, and approximation for (k, r)-center},
  journal   = {Discret. Appl. Math.},
  volume    = {264},
  pages     = {90--117},
  year      = {2019},
  url       = {https://doi.org/10.1016/j.dam.2018.11.002},
  timestamp = {Thu, 20 Feb 2020 15:47:54 +0100},
  biburl    = {https://dblp.org/rec/journals/dam/KatsikarelisLP19.bib},
  bibsource = {dblp computer science bibliography, https://dblp.org}
}

@inproceedings{Lampis14,
  author    = {Michael Lampis},
  editor    = {Javier Esparza and
               Pierre Fraigniaud and
               Thore Husfeldt and
               Elias Koutsoupias},
  title     = {Parameterized Approximation Schemes Using Graph Widths},
  booktitle = {Automata, Languages, and Programming - 41st International Colloquium,
               {ICALP} 2014, Copenhagen, Denmark, July 8-11, 2014, Proceedings, Part
               {I}},
  series    = {Lecture Notes in Computer Science},
  volume    = {8572},
  pages     = {775--786},
  publisher = {Springer},
  year      = {2014},
  url       = {https://doi.org/10.1007/978-3-662-43948-7\_64},
  timestamp = {Tue, 14 May 2019 10:00:44 +0200},
  biburl    = {https://dblp.org/rec/conf/icalp/Lampis14.bib},
  bibsource = {dblp computer science bibliography, https://dblp.org}
}

@article{ImpagliazzoPZ01,
  author    = {Russell Impagliazzo and
               Ramamohan Paturi and
               Francis Zane},
  title     = {Which Problems Have Strongly Exponential Complexity?},
  journal   = {J. Comput. Syst. Sci.},
  volume    = {63},
  number    = {4},
  pages     = {512--530},
  year      = {2001},
  url       = {https://doi.org/10.1006/jcss.2001.1774},
  timestamp = {Wed, 14 Nov 2018 10:33:57 +0100},
  biburl    = {https://dblp.org/rec/journals/jcss/ImpagliazzoPZ01.bib},
  bibsource = {dblp computer science bibliography, https://dblp.org}
}

@inproceedings{GourvesM09,
  author    = {Laurent Gourv{\`{e}}s and
               J{\'{e}}r{\^{o}}me Monnot},
  editor    = {Stefano Leonardi},
  title     = {On Strong Equilibria in the Max Cut Game},
  booktitle = {Internet and Network Economics, 5th International Workshop, {WINE}
               2009, Rome, Italy, December 14-18, 2009. Proceedings},
  series    = {Lecture Notes in Computer Science},
  volume    = {5929},
  pages     = {608--615},
  publisher = {Springer},
  year      = {2009},
  url       = {https://doi.org/10.1007/978-3-642-10841-9\_62},
  timestamp = {Tue, 14 May 2019 10:00:37 +0200},
  biburl    = {https://dblp.org/rec/conf/wine/GourvesM09.bib},
  bibsource = {dblp computer science bibliography, https://dblp.org}
}

@inproceedings{HanakaL22,
  author    = {Tesshu Hanaka and
               Michael Lampis},
  editor    = {Shiri Chechik and
               Gonzalo Navarro and
               Eva Rotenberg and
               Grzegorz Herman},
  title     = {Hedonic Games and Treewidth Revisited},
  booktitle = {30th Annual European Symposium on Algorithms, {ESA} 2022, September
               5-9, 2022, Berlin/Potsdam, Germany},
  series    = {LIPIcs},
  volume    = {244},
  pages     = {64:1--64:16},
  publisher = {Schloss Dagstuhl - Leibniz-Zentrum f{\"{u}}r Informatik},
  year      = {2022},
  url       = {https://doi.org/10.4230/LIPIcs.ESA.2022.64},
  timestamp = {Mon, 26 Sep 2022 17:09:13 +0200},
  biburl    = {https://dblp.org/rec/conf/esa/HanakaL22.bib},
  bibsource = {dblp computer science bibliography, https://dblp.org}
}

@inproceedings{Peters16a,
  author    = {Dominik Peters},
  editor    = {Dale Schuurmans and
               Michael P. Wellman},
  title     = {Graphical Hedonic Games of Bounded Treewidth},
  booktitle = {Proceedings of the Thirtieth {AAAI} Conference on Artificial Intelligence,
               February 12-17, 2016, Phoenix, Arizona, {USA}},
  pages     = {586--593},
  publisher = {{AAAI} Press},
  year      = {2016},
  url       = {http://www.aaai.org/ocs/index.php/AAAI/AAAI16/paper/view/12400},
  timestamp = {Wed, 05 Apr 2017 12:53:50 +0200},
  biburl    = {https://dblp.org/rec/conf/aaai/Peters16a.bib},
  bibsource = {dblp computer science bibliography, https://dblp.org}
}

@article{JohnsonPY88,
  author    = {David S. Johnson and
               Christos H. Papadimitriou and
               Mihalis Yannakakis},
  title     = {How Easy is Local Search?},
  journal   = {J. Comput. Syst. Sci.},
  volume    = {37},
  number    = {1},
  pages     = {79--100},
  year      = {1988},
  url       = {https://doi.org/10.1016/0022-0000(88)90046-3},
  timestamp = {Thu, 13 Aug 2020 18:45:19 +0200},
  biburl    = {https://dblp.org/rec/journals/jcss/JohnsonPY88.bib},
  bibsource = {dblp computer science bibliography, https://dblp.org}
}

@article{EtscheidR17,
  author    = {Michael Etscheid and
               Heiko R{\"{o}}glin},
  title     = {Smoothed Analysis of Local Search for the Maximum-Cut Problem},
  journal   = {{ACM} Trans. Algorithms},
  volume    = {13},
  number    = {2},
  pages     = {25:1--25:12},
  year      = {2017},
  url       = {https://doi.org/10.1145/3011870},
  timestamp = {Tue, 06 Nov 2018 12:51:20 +0100},
  biburl    = {https://dblp.org/rec/journals/talg/EtscheidR17.bib},
  bibsource = {dblp computer science bibliography, https://dblp.org}
}

@inproceedings{BalcanBM09,
  author    = {Maria{-}Florina Balcan and
               Avrim Blum and
               Yishay Mansour},
  editor    = {Claire Mathieu},
  title     = {Improved equilibria via public service advertising},
  booktitle = {Proceedings of the Twentieth Annual {ACM-SIAM} Symposium on Discrete
               Algorithms, {SODA} 2009, New York, NY, USA, January 4-6, 2009},
  pages     = {728--737},
  publisher = {{SIAM}},
  year      = {2009},
  url       = {http://dl.acm.org/citation.cfm?id=1496770.1496850},
  timestamp = {Wed, 24 May 2017 08:31:21 +0200},
  biburl    = {https://dblp.org/rec/conf/soda/BalcanBM09.bib},
  bibsource = {dblp computer science bibliography, https://dblp.org}
}

@article{AustrinBC13,
  author    = {Per Austrin and
               Mark Braverman and
               Eden Chlamtac},
  title     = {Inapproximability of {NP}-Complete Variants of {Nash} Equilibrium},
  journal   = {Theory Comput.},
  volume    = {9},
  pages     = {117--142},
  year      = {2013},
  url       = {https://doi.org/10.4086/toc.2013.v009a003},
  timestamp = {Sun, 21 Jun 2020 17:37:53 +0200},
  biburl    = {https://dblp.org/rec/journals/toc/AustrinBC13.bib},
  bibsource = {dblp computer science bibliography, https://dblp.org}
}

@article{Hanaka2019,
    title = "On the maximum weight minimal separator",
    journal = "Theoretical Computer Science",
    volume = "796",
    pages = "294 - 308",
    year = "2019",
    author = "Tesshu Hanaka and Hans L. Bodlaender and Tom C. van der Zanden and Hirotaka Ono",
}

@InProceedings{EtoHKK2019,
  author =      {Hiroshi Eto and Tesshu Hanaka and Yasuaki Kobayashi and Yusuke Kobayashi},
  title =       {{Parameterized Algorithms for Maximum Cut with Connectivity Constraints}},
  booktitle =   {IPEC 2019},
  pages =       {13:1--13:15},
    year =      {2019}}

@article{ArkinBMS03,
  author    = {Esther M. Arkin and
               Michael A. Bender and
               Joseph S. B. Mitchell and
               Steven Skiena},
  title     = {The Lazy Bureaucrat scheduling problem},
  journal   = {Inf. Comput.},
  volume    = {184},
  number    = {1},
  pages     = {129--146},
  year      = {2003}
}

@article{Furini2017,
title = "An effective dynamic programming algorithm for the minimum-cost maximal knapsack packing problem",
journal = "European Journal of Operational Research",
volume = "262",
number = "2",
pages = "438-448",
year = "2017",
author = "F. Furini and I. Ljubić and M. Sinnl"
}

@inproceedings{GourvesMP13,
  author    = {Laurent Gourv{\`{e}}s and
               J{\'{e}}r{\^{o}}me Monnot and
               Aris Pagourtzis},
  title     = {The Lazy Bureaucrat Problem with Common Arrivals and Deadlines: Approximation
               and Mechanism Design},
  booktitle = {{FCT}},
  series    = {Lecture Notes in Computer Science},
  volume    = {8070},
  pages     = {171--182},
  publisher = {Springer},
  year      = {2013}
}

@article{KhoshkhahGMS20,
  author    = {Kaveh Khoshkhah and
               Mehdi Khosravian Ghadikolaei and
               J{\'{e}}r{\^{o}}me Monnot and
               Florian Sikora},
  title     = {Weighted Upper Edge Cover: Complexity and Approximability},
  journal   = {J. Graph Algorithms Appl.},
  volume    = {24},
  number    = {2},
  pages     = {65--88},
  year      = {2020}
}

@article{IwaideN16,
  author    = {Ken Iwaide and
               Hiroshi Nagamochi},
  title     = {An Improved Algorithm for Parameterized Edge Dominating Set Problem},
  journal   = {J. Graph Algorithms Appl.},
  volume    = {20},
  number    = {1},
  pages     = {23--58},
  year      = {2016}
}

@article{AboulkerBKS23,
  author       = {Pierre Aboulker and
                  {\'{E}}douard Bonnet and
                  Eun Jung Kim and
                  Florian Sikora},
  title        = {Grundy Coloring and Friends, Half-Graphs, Bicliques},
  journal      = {Algorithmica},
  volume       = {85},
  number       = {1},
  pages        = {1--28},
  year         = {2023},
  url          = {https://doi.org/10.1007/s00453-022-01001-2},
  timestamp    = {Tue, 31 Jan 2023 20:45:31 +0100},
  biburl       = {https://dblp.org/rec/journals/algorithmica/AboulkerBKS23.bib},
  bibsource    = {dblp computer science bibliography, https://dblp.org}
}

@article{Bazgan2018,
title = "The many facets of upper domination",
journal = "Theoretical Computer Science",
volume = "717",
pages = "2-25",
year = "2018",  
author = "C. Bazgan and L. Brankovic and K. Casel and H. Fernau and K. Jansen and K.-M. Klein and M. Lampis and M. Liedloff and J. Monnot and V. T. Paschos"
}

@article{Bonnet2018,
title = "Time-approximation trade-offs for inapproximable problems",
journal = "Journal of Computer and System Sciences",
volume = "92",
pages = "171 - 180",
year = "2018",
author = "\'{E}. Bonnet and M. Lampis and V. T. Paschos",
}

@article{Zehavi2017,
author = {Zehavi, M.},
title = {Maximum Minimal Vertex Cover Parameterized by Vertex Cover},
journal = {SIAM Journal on Discrete Mathematics},
volume = {31}, 
number = {4},
pages = {2440-2456},
year = {2017}
}

@article{BelmonteKLMO22,
  author       = {R{\'{e}}my Belmonte and
                  Eun Jung Kim and
                  Michael Lampis and
                  Valia Mitsou and
                  Yota Otachi},
  title        = {Grundy Distinguishes Treewidth from Pathwidth},
  journal      = {{SIAM} J. Discret. Math.},
  volume       = {36},
  number       = {3},
  pages        = {1761--1787},
  year         = {2022},
  url          = {https://doi.org/10.1137/20m1385779},
  timestamp    = {Sat, 11 Mar 2023 00:12:54 +0100},
  biburl       = {https://dblp.org/rec/journals/siamdm/BelmonteKLMO22.bib},
  bibsource    = {dblp computer science bibliography, https://dblp.org}
}

@article{GrecoS09,
  author    = {Gianluigi Greco and
               Francesco Scarcello},
  title     = {On the complexity of constrained {Nash} equilibria in graphical games},
  journal   = {Theor. Comput. Sci.},
  volume    = {410},
  number    = {38-40},
  pages     = {3901--3924},
  year      = {2009},
  url       = {https://doi.org/10.1016/j.tcs.2009.05.030},
  timestamp = {Thu, 08 Jun 2017 09:02:46 +0200},
  biburl    = {https://dblp.org/rec/journals/tcs/GrecoS09.bib},
  bibsource = {dblp computer science bibliography, https://dblp.org}
}

@inproceedings{ElkindGG07,
  author    = {Edith Elkind and
               Leslie Ann Goldberg and
               Paul W. Goldberg},
  editor    = {Jeffrey K. MacKie{-}Mason and
               David C. Parkes and
               Paul Resnick},
  title     = {Computing good nash equilibria in graphical games},
  booktitle = {Proceedings 8th {ACM} Conference on Electronic Commerce (EC-2007),
               San Diego, California, USA, June 11-15, 2007},
  pages     = {162--171},
  publisher = {{ACM}},
  year      = {2007},
  url       = {https://doi.org/10.1145/1250910.1250935},
  timestamp = {Tue, 27 Nov 2018 11:56:48 +0100},
  biburl    = {https://dblp.org/rec/conf/sigecom/ElkindGG07.bib},
  bibsource = {dblp computer science bibliography, https://dblp.org}
}

@inproceedings{ElkindGG06,
  author    = {Edith Elkind and
               Leslie Ann Goldberg and
               Paul W. Goldberg},
  editor    = {Joan Feigenbaum and
               John C.{-}I. Chuang and
               David M. Pennock},
  title     = {Nash equilibria in graphical games on trees revisited},
  booktitle = {Proceedings 7th {ACM} Conference on Electronic Commerce (EC-2006),
               Ann Arbor, Michigan, USA, June 11-15, 2006},
  pages     = {100--109},
  publisher = {{ACM}},
  year      = {2006},
  url       = {https://doi.org/10.1145/1134707.1134719},
  timestamp = {Tue, 27 Nov 2018 11:56:48 +0100},
  biburl    = {https://dblp.org/rec/conf/sigecom/ElkindGG06.bib},
  bibsource = {dblp computer science bibliography, https://dblp.org}
}

@article{DeligkasFS18,
  author    = {Argyrios Deligkas and
               John Fearnley and
               Rahul Savani},
  title     = {Inapproximability results for constrained approximate {Nash} equilibria},
  journal   = {Inf. Comput.},
  volume    = {262},
  number    = {Part 1},
  pages     = {40--56},
  year      = {2018},
  url       = {https://doi.org/10.1016/j.ic.2018.06.001},
  timestamp = {Wed, 25 Sep 2019 17:54:38 +0200},
  biburl    = {https://dblp.org/rec/journals/iandc/DeligkasFS18.bib},
  bibsource = {dblp computer science bibliography, https://dblp.org}
}

@inproceedings{BravermanKW15,
  author    = {Mark Braverman and
               Young Kun{-}Ko and
               Omri Weinstein},
  editor    = {Piotr Indyk},
  title     = {Approximating the best {Nash} Equilibrium in \emph{n\({}^{\mbox{o}}\)}\({}^{\mbox{(log
               \emph{n})}}\)-time breaks the Exponential Time Hypothesis},
  booktitle = {Proceedings of the Twenty-Sixth Annual {ACM-SIAM} Symposium on Discrete
               Algorithms, {SODA} 2015, San Diego, CA, USA, January 4-6, 2015},
  pages     = {970--982},
  publisher = {{SIAM}},
  year      = {2015},
  url       = {https://doi.org/10.1137/1.9781611973730.66},
  timestamp = {Wed, 14 Nov 2018 10:51:42 +0100},
  biburl    = {https://dblp.org/rec/conf/soda/BravermanKW15.bib},
  bibsource = {dblp computer science bibliography, https://dblp.org}
}

@article{BiloM21,
  author       = {Vittorio Bil{\`{o}} and
                  Marios Mavronicolas},
  title        = {The Complexity of Computational Problems About Nash Equilibria in
                  Symmetric Win-Lose Games},
  journal      = {Algorithmica},
  volume       = {83},
  number       = {2},
  pages        = {447--530},
  year         = {2021},
  url          = {https://doi.org/10.1007/s00453-020-00763-x},
  timestamp    = {Sat, 30 Sep 2023 10:01:00 +0200},
  biburl       = {https://dblp.org/rec/journals/algorithmica/BiloM21.bib},
  bibsource    = {dblp computer science bibliography, https://dblp.org}
}

@article{SchoenebeckV12,
  author    = {Grant Schoenebeck and
               Salil P. Vadhan},
  title     = {The Computational Complexity of {Nash} Equilibria in Concisely Represented
               Games},
  journal   = {{ACM} Trans. Comput. Theory},
  volume    = {4},
  number    = {2},
  pages     = {4:1--4:50},
  year      = {2012},
  url       = {https://doi.org/10.1145/2189778.2189779},
  timestamp = {Mon, 08 Jun 2020 22:18:58 +0200},
  biburl    = {https://dblp.org/rec/journals/toct/SchoenebeckV12.bib},
  bibsource = {dblp computer science bibliography, https://dblp.org}
}

@inproceedings{MinderV09,
  author    = {Lorenz Minder and
               Dan Vilenchik},
  editor    = {Irit Dinur and
               Klaus Jansen and
               Joseph Naor and
               Jos{\'{e}} D. P. Rolim},
  title     = {Small Clique Detection and Approximate {Nash} Equilibria},
  booktitle = {Approximation, Randomization, and Combinatorial Optimization. Algorithms
               and Techniques, 12th International Workshop, {APPROX} 2009, and 13th
               International Workshop, {RANDOM} 2009, Berkeley, CA, USA, August 21-23,
               2009. Proceedings},
  series    = {Lecture Notes in Computer Science},
  volume    = {5687},
  pages     = {673--685},
  publisher = {Springer},
  year      = {2009},
  url       = {https://doi.org/10.1007/978-3-642-03685-9\_50},
  timestamp = {Tue, 14 May 2019 10:00:46 +0200},
  biburl    = {https://dblp.org/rec/conf/approx/MinderV09.bib},
  bibsource = {dblp computer science bibliography, https://dblp.org}
}

@article{HazanK11,
  author    = {Elad Hazan and
               Robert Krauthgamer},
  title     = {How Hard Is It to Approximate the Best {Nash} Equilibrium?},
  journal   = {{SIAM} J. Comput.},
  volume    = {40},
  number    = {1},
  pages     = {79--91},
  year      = {2011},
  url       = {https://doi.org/10.1137/090766991},
  timestamp = {Wed, 14 Nov 2018 10:45:06 +0100},
  biburl    = {https://dblp.org/rec/journals/siamcomp/HazanK11.bib},
  bibsource = {dblp computer science bibliography, https://dblp.org}
}

@article{ConitzerS08,
  author    = {Vincent Conitzer and
               Tuomas Sandholm},
  title     = {New complexity results about {Nash} equilibria},
  journal   = {Games Econ. Behav.},
  volume    = {63},
  number    = {2},
  pages     = {621--641},
  year      = {2008},
  url       = {https://doi.org/10.1016/j.geb.2008.02.015},
  timestamp = {Sat, 22 Feb 2020 13:48:17 +0100},
  biburl    = {https://dblp.org/rec/journals/geb/ConitzerS08.bib},
  bibsource = {dblp computer science bibliography, https://dblp.org}
}

@article{gilboa1989nash,
  title={Nash and correlated equilibria: Some complexity considerations},
  author={Gilboa, Itzhak and Zemel, Eitan},
  journal={Games and Economic Behavior},
  volume={1},
  number={1},
  pages={80--93},
  year={1989},
  publisher={Elsevier}
}

@article{FotakisKKMS09,
  author    = {Dimitris Fotakis and
               Spyros C. Kontogiannis and
               Elias Koutsoupias and
               Marios Mavronicolas and
               Paul G. Spirakis},
  title     = {The structure and complexity of {Nash} equilibria for a selfish routing
               game},
  journal   = {Theor. Comput. Sci.},
  volume    = {410},
  number    = {36},
  pages     = {3305--3326},
  year      = {2009},
  url       = {https://doi.org/10.1016/j.tcs.2008.01.004},
  timestamp = {Sun, 02 Jun 2019 20:55:27 +0200},
  biburl    = {https://dblp.org/rec/journals/tcs/FotakisKKMS09.bib},
  bibsource = {dblp computer science bibliography, https://dblp.org}
}

@article{CaragiannisFGS15,
  author    = {Ioannis Caragiannis and
               Angelo Fanelli and
               Nick Gravin and
               Alexander Skopalik},
  title     = {Approximate Pure {Nash} Equilibria in Weighted Congestion Games: Existence,
               Efficient Computation, and Structure},
  journal   = {{ACM} Trans. Economics and Comput.},
  volume    = {3},
  number    = {1},
  pages     = {2:1--2:32},
  year      = {2015},
  url       = {https://doi.org/10.1145/2614687},
  timestamp = {Sat, 05 Sep 2020 17:49:24 +0200},
  biburl    = {https://dblp.org/rec/journals/teco/CaragiannisFGS15.bib},
  bibsource = {dblp computer science bibliography, https://dblp.org}
}

@inproceedings{AwerbuchAEMS08,
  author    = {Baruch Awerbuch and
               Yossi Azar and
               Amir Epstein and
               Vahab S. Mirrokni and
               Alexander Skopalik},
  editor    = {Lance Fortnow and
               John Riedl and
               Tuomas Sandholm},
  title     = {Fast convergence to nearly optimal solutions in potential games},
  booktitle = {Proceedings 9th {ACM} Conference on Electronic Commerce (EC-2008),
               Chicago, IL, USA, June 8-12, 2008},
  pages     = {264--273},
  publisher = {{ACM}},
  year      = {2008},
  url       = {https://doi.org/10.1145/1386790.1386832},
  timestamp = {Tue, 27 Nov 2018 11:56:48 +0100},
  biburl    = {https://dblp.org/rec/conf/sigecom/AwerbuchAEMS08.bib},
  bibsource = {dblp computer science bibliography, https://dblp.org}
}

@inproceedings{BhalgatCK10,
  author    = {Anand Bhalgat and
               Tanmoy Chakraborty and
               Sanjeev Khanna},
  editor    = {David C. Parkes and
               Chrysanthos Dellarocas and
               Moshe Tennenholtz},
  title     = {Approximating pure {Nash} equilibrium in cut, party affiliation, and
               satisfiability games},
  booktitle = {Proceedings 11th {ACM} Conference on Electronic Commerce (EC-2010),
               Cambridge, Massachusetts, USA, June 7-11, 2010},
  pages     = {73--82},
  publisher = {{ACM}},
  year      = {2010},
  url       = {https://doi.org/10.1145/1807342.1807353},
  timestamp = {Tue, 27 Nov 2018 11:56:48 +0100},
  biburl    = {https://dblp.org/rec/conf/sigecom/BhalgatCK10.bib},
  bibsource = {dblp computer science bibliography, https://dblp.org}
}

@inproceedings{FabrikantPT04,
  author    = {Alex Fabrikant and
               Christos H. Papadimitriou and
               Kunal Talwar},
  editor    = {L{\'{a}}szl{\'{o}} Babai},
  title     = {The complexity of pure {Nash} equilibria},
  booktitle = {Proceedings of the 36th Annual {ACM} Symposium on Theory of Computing,
               Chicago, IL, USA, June 13-16, 2004},
  pages     = {604--612},
  publisher = {{ACM}},
  year      = {2004},
  url       = {https://doi.org/10.1145/1007352.1007445},
  timestamp = {Tue, 06 Nov 2018 11:07:04 +0100},
  biburl    = {https://dblp.org/rec/conf/stoc/FabrikantPT04.bib},
  bibsource = {dblp computer science bibliography, https://dblp.org}
}

@article{ChristodoulouMS12,
  author    = {George Christodoulou and
               Vahab S. Mirrokni and
               Anastasios Sidiropoulos},
  title     = {Convergence and approximation in potential games},
  journal   = {Theor. Comput. Sci.},
  volume    = {438},
  pages     = {13--27},
  year      = {2012},
  url       = {https://doi.org/10.1016/j.tcs.2012.02.033},
  timestamp = {Sun, 28 May 2017 13:20:00 +0200},
  biburl    = {https://dblp.org/rec/journals/tcs/ChristodoulouMS12.bib},
  bibsource = {dblp computer science bibliography, https://dblp.org}
}

@inproceedings{FotakisKLMPS20,
  author    = {Dimitris Fotakis and
               Vardis Kandiros and
               Thanasis Lianeas and
               Nikos Mouzakis and
               Panagiotis Patsilinakos and
               Stratis Skoulakis},
  editor    = {Artur Czumaj and
               Anuj Dawar and
               Emanuela Merelli},
  title     = {Node-Max-Cut and the Complexity of Equilibrium in Linear Weighted
               Congestion Games},
  booktitle = {47th International Colloquium on Automata, Languages, and Programming,
               {ICALP} 2020, July 8-11, 2020, Saarbr{\"{u}}cken, Germany (Virtual
               Conference)},
  series    = {LIPIcs},
  volume    = {168},
  pages     = {50:1--50:19},
  publisher = {Schloss Dagstuhl - Leibniz-Zentrum f{\"{u}}r Informatik},
  year      = {2020},
  url       = {https://doi.org/10.4230/LIPIcs.ICALP.2020.50},
  timestamp = {Tue, 30 Jun 2020 17:15:44 +0200},
  biburl    = {https://dblp.org/rec/conf/icalp/FotakisKLMPS20.bib},
  bibsource = {dblp computer science bibliography, https://dblp.org}
}

@inproceedings{Loebl91,
  author    = {Martin Loebl},
  editor    = {Javier Leach Albert and
               Burkhard Monien and
               Mario Rodr{\'{\i}}guez{-}Artalejo},
  title     = {Efficient Maximal Cubic Graph Cuts (Extended Abstract)},
  booktitle = {Automata, Languages and Programming, 18th International Colloquium,
               ICALP91, Madrid, Spain, July 8-12, 1991, Proceedings},
  series    = {Lecture Notes in Computer Science},
  volume    = {510},
  pages     = {351--362},
  publisher = {Springer},
  year      = {1991},
  url       = {https://doi.org/10.1007/3-540-54233-7\_147},
  timestamp = {Fri, 17 Jul 2020 16:12:48 +0200},
  biburl    = {https://dblp.org/rec/conf/icalp/Loebl91.bib},
  bibsource = {dblp computer science bibliography, https://dblp.org}
}

@article{Poljak95,
  author    = {Svatopluk Poljak},
  title     = {Integer Linear Programs and Local Search for Max-Cut},
  journal   = {{SIAM} J. Comput.},
  volume    = {24},
  number    = {4},
  pages     = {822--839},
  year      = {1995},
  url       = {https://doi.org/10.1137/S0097539793245350},
  timestamp = {Sat, 27 May 2017 14:22:58 +0200},
  biburl    = {https://dblp.org/rec/journals/siamcomp/Poljak95.bib},
  bibsource = {dblp computer science bibliography, https://dblp.org}
}

@inproceedings{ElsasserT11,
  author    = {Robert Els{\"{a}}sser and
               Tobias Tscheuschner},
  editor    = {Luca Aceto and
               Monika Henzinger and
               Jir{\'{\i}} Sgall},
  title     = {Settling the Complexity of Local Max-Cut (Almost) Completely},
  booktitle = {Automata, Languages and Programming - 38th International Colloquium,
               {ICALP} 2011, Zurich, Switzerland, July 4-8, 2011, Proceedings, Part
               {I}},
  series    = {Lecture Notes in Computer Science},
  volume    = {6755},
  pages     = {171--182},
  publisher = {Springer},
  year      = {2011},
  url       = {https://doi.org/10.1007/978-3-642-22006-7\_15},
  timestamp = {Tue, 14 May 2019 10:00:44 +0200},
  biburl    = {https://dblp.org/rec/conf/icalp/ElsasserT11.bib},
  bibsource = {dblp computer science bibliography, https://dblp.org}
}

@inproceedings{AngelBPW17,
  author    = {Omer Angel and
               S{\'{e}}bastien Bubeck and
               Yuval Peres and
               Fan Wei},
  editor    = {Hamed Hatami and
               Pierre McKenzie and
               Valerie King},
  title     = {Local max-cut in smoothed polynomial time},
  booktitle = {Proceedings of the 49th Annual {ACM} {SIGACT} Symposium on Theory
               of Computing, {STOC} 2017, Montreal, QC, Canada, June 19-23, 2017},
  pages     = {429--437},
  publisher = {{ACM}},
  year      = {2017},
  url       = {https://doi.org/10.1145/3055399.3055402},
  timestamp = {Sun, 02 Jun 2019 21:10:31 +0200},
  biburl    = {https://dblp.org/rec/conf/stoc/AngelBPW17.bib},
  bibsource = {dblp computer science bibliography, https://dblp.org}
}

@article{BibakCC21,
  author       = {Ali Bibak and
                  Charles Carlson and
                  Karthekeyan Chandrasekaran},
  title        = {Improving the Smoothed Complexity of {FLIP} for Max Cut Problems},
  journal      = {{ACM} Trans. Algorithms},
  volume       = {17},
  number       = {3},
  pages        = {19:1--19:38},
  year         = {2021},
  url          = {https://doi.org/10.1145/3454125},
  timestamp    = {Tue, 29 Nov 2022 15:55:03 +0100},
  biburl       = {https://dblp.org/rec/journals/talg/BibakCC21.bib},
  bibsource    = {dblp computer science bibliography, https://dblp.org}
}

@inproceedings{ChenGVYZ20,
  author    = {Xi Chen and
               Chenghao Guo and
               Emmanouil{-}Vasileios Vlatakis{-}Gkaragkounis and
               Mihalis Yannakakis and
               Xinzhi Zhang},
  editor    = {Konstantin Makarychev and
               Yury Makarychev and
               Madhur Tulsiani and
               Gautam Kamath and
               Julia Chuzhoy},
  title     = {Smoothed complexity of local max-cut and binary max-{CSP}},
  booktitle = {Proccedings of the 52nd Annual {ACM} {SIGACT} Symposium on Theory
               of Computing, {STOC} 2020, Chicago, IL, USA, June 22-26, 2020},
  pages     = {1052--1065},
  publisher = {{ACM}},
  year      = {2020},
  url       = {https://doi.org/10.1145/3357713.3384325},
  timestamp = {Tue, 09 Jun 2020 13:03:16 +0200},
  biburl    = {https://dblp.org/rec/conf/stoc/ChenGVYZ20.bib},
  bibsource = {dblp computer science bibliography, https://dblp.org}
}

@article{SchafferY91,
  author    = {Alejandro A. Sch{\"{a}}ffer and
               Mihalis Yannakakis},
  title     = {Simple Local Search Problems That are Hard to Solve},
  journal   = {{SIAM} J. Comput.},
  volume    = {20},
  number    = {1},
  pages     = {56--87},
  year      = {1991},
  url       = {https://doi.org/10.1137/0220004},
  timestamp = {Sat, 27 May 2017 14:22:58 +0200},
  biburl    = {https://dblp.org/rec/journals/siamcomp/SchafferY91.bib},
  bibsource = {dblp computer science bibliography, https://dblp.org}
}

@inproceedings{BermanK99,
  author    = {Piotr Berman and
               Marek Karpinski},
  editor    = {Jir{\'{\i}} Wiedermann and
               Peter van Emde Boas and
               Mogens Nielsen},
  title     = {On Some Tighter Inapproximability Results (Extended Abstract)},
  booktitle = {Automata, Languages and Programming, 26th International Colloquium,
               ICALP'99, Prague, Czech Republic, July 11-15, 1999, Proceedings},
  series    = {Lecture Notes in Computer Science},
  volume    = {1644},
  pages     = {200--209},
  publisher = {Springer},
  year      = {1999},
  url       = {https://doi.org/10.1007/3-540-48523-6\_17},
  timestamp = {Tue, 14 May 2019 10:00:44 +0200},
  biburl    = {https://dblp.org/rec/conf/icalp/BermanK99.bib},
  bibsource = {dblp computer science bibliography, https://dblp.org}
}

@book{GareyJ79,
  author    = {M. R. Garey and
               David S. Johnson},
  title     = {Computers and Intractability: {A} Guide to the Theory of {NP}-Completeness},
  publisher = {W. H. Freeman},
  year      = {1979},
  isbn      = {0-7167-1044-7},
  timestamp = {Thu, 29 Mar 2007 16:16:40 +0200},
  biburl    = {https://dblp.org/rec/books/fm/GareyJ79.bib},
  bibsource = {dblp computer science bibliography, https://dblp.org}
}

@book{CyganFKLMPPS15,
  author    = {Marek Cygan and
               Fedor V. Fomin and
               Lukasz Kowalik and
               Daniel Lokshtanov and
               D{\'{a}}niel Marx and
               Marcin Pilipczuk and
               Michal Pilipczuk and
               Saket Saurabh},
  title     = {Parameterized Algorithms},
  publisher = {Springer},
  year      = {2015},
  url       = {https://doi.org/10.1007/978-3-319-21275-3},
  isbn      = {978-3-319-21274-6},
  timestamp = {Mon, 16 Sep 2019 14:43:23 +0200},
  biburl    = {https://dblp.org/rec/books/sp/CyganFKLMPPS15.bib},
  bibsource = {dblp computer science bibliography, https://dblp.org}
}
\label{sec:biblio}

\end{document}